\documentclass[11pt]{article}

\usepackage[T1]{fontenc}
\usepackage[full]{textcomp}
\usepackage{newtxtext}
\usepackage{cabin} %
\usepackage[varqu,varl]{inconsolata} %
\usepackage[english]{babel}
\usepackage{amsmath}
\usepackage{amssymb}

\usepackage{scalefnt}
\usepackage{letltxmacro}
\LetLtxMacro{\oldtextsc}{\textsc}
\renewcommand{\textsc}[1]{\oldtextsc{\scalefont{1.1}#1}}

\usepackage{amsfonts}       %
\usepackage[nice]{nicefrac}

\usepackage[usenames,dvipsnames]{xcolor}
\definecolor{shadecolor}{gray}{0.9}

\usepackage{afterpage}
\usepackage{framed}

{\endMakeFramed}

\DeclareRobustCommand{\parhead}[1]{\textbf{#1}~}

\usepackage{lineno}

\usepackage{ragged2e}

\newcounter{parcount}

\usepackage{graphicx}
\usepackage[labelfont=bf]{caption}
\usepackage[format=hang]{subcaption}

\usepackage{booktabs}

\usepackage[algoruled]{algorithm2e}
\usepackage{listings}
\usepackage{fancyvrb}
\fvset{fontsize=\normalsize}

\usepackage{natbib}

\usepackage[colorlinks,linktoc=all]{hyperref}
\usepackage[all]{hypcap}
\hypersetup{citecolor=MidnightBlue}
\hypersetup{linkcolor=MidnightBlue}
\hypersetup{urlcolor=MidnightBlue}

\usepackage[nameinlink]{cleveref}

\usepackage[acronym,smallcaps,nowarn]{glossaries}

\lstdefinestyle{mystyle}{
    commentstyle=\color{OliveGreen},
    keywordstyle=\color{BurntOrange},
    numberstyle=\tiny\color{black!60},
    stringstyle=\color{MidnightBlue},
    basicstyle=\ttfamily,
    breakatwhitespace=false,
    breaklines=true,
    captionpos=b,
    keepspaces=true,
    numbers=left,
    numbersep=5pt,
    showspaces=false,
    showstringspaces=false,
    showtabs=false,
    tabsize=2
}
\lstset{style=mystyle}

\usepackage{amsthm}
\usepackage{amssymb}
\newtheorem{claim}{Claim}[section]

\DeclareRobustCommand{\mb}[1]{\ensuremath{\boldsymbol{\mathbf{#1}}}}

\newacronym{KL}{kl}{Kullback-Leibler}
\newacronym{ELBO}{elbo}{\emph{evidence lower bound}}
\newacronym{POPELBO}{pop-elbo}{\emph{population evidence lower bound}}

\newacronym{SVI}{svi}{stochastic variational inference}
\newacronym{BUMPVI}{bump-vi}{bumping variational inference}

\newacronym{GMM}{gmm}{Gaussian mixture model}
\newacronym{LDA}{lda}{latent Dirichlet allocation}

\newacronym{SUTVA}{sutva}{stable unit treatment value assumption}

\usepackage[makeroom]{cancel}

\usepackage{xspace}
\newcommand{\ziggy}{\textsc{ziggy}\xspace}

\usepackage{tikz}
\usetikzlibrary{fit,positioning,calc}
\definecolor{light}{RGB}{220, 188, 188}
\definecolor{mid}{RGB}{185, 124, 124}
\definecolor{dark}{RGB}{143, 39, 39}
\definecolor{highlight}{RGB}{0, 255, 0}
\definecolor{gray10}{gray}{0.1}
\definecolor{gray20}{gray}{0.2}
\definecolor{gray30}{gray}{0.3}
\definecolor{gray40}{gray}{0.4}
\definecolor{gray60}{gray}{0.6}
\definecolor{gray70}{gray}{0.7}
\definecolor{gray80}{gray}{0.8}
\definecolor{gray90}{gray}{0.9}
\definecolor{gray95}{gray}{0.95}
\definecolor{comment}{gray}{0.50}

\newcommand{\ba}{\mb{a}}

\newcommand{\bd}{\mb{d}}
\newcommand{\be}{\mb{e}}

\newcommand{\bm}{\mb{m}}

\newcommand{\bu}{\mb{u}}

\newcommand{\bx}{\mb{x}}
\newcommand{\by}{\mb{y}}
\newcommand{\bz}{\mb{z}}

\newcommand{\bK}{\mb{K}}
\newcommand{\bL}{\mb{L}}

\newcommand{\bS}{\mb{S}}

\newcommand{\blambda}{{\mb{\lambda}}}

\newcommand{\btheta}{{\mb{\theta}}}
\newcommand{\bmu}{{\mb{\mu}}}

\newcommand{\bSigma}{\mb{\Sigma}}
\newcommand{\bsigma}{\mb{\sigma}}

\newcommand{\brho}{\mb{\rho}}
\newcommand{\etab}{{\mb{\eta}}}

\newcommand{\given}{\,|\,}

\newcommand{\matern}{{Mat\'ern}}

\newcommand{\Cov}{\mathrm{Cov}}

\usepackage{todonotes}
\presetkeys{todonotes}{color=blue!30,size=\small}{}
\usepackage{enumitem}
\usepackage[final]{changes}

\definechangesauthor[name={Andy}, color=BrickRed]{AM}

\usepackage{graphbox}

\usepackage[top=1in, bottom=1in, left=1.3in, right=1.3in]{geometry}
\usepackage{tcolorbox}

\usepackage{setspace}
\allowdisplaybreaks

\title{\textbf{Mapping Interstellar Dust with Gaussian Processes}}

\author{ acm, la, bl, jpc, dh, dmb } \author{
   Andrew C.~Miller \\
   \texttt{am5171@columbia.edu}\thanks{Work done while at the Data Science Institute at Columbia University.} \and
   Lauren Anderson \\
   \texttt{anders.astro@gmail.com} \and
   Boris Leistedt \\
   \texttt{boris.leistedt@gmail.com} \and
   John P.~Cunningham \\
   \texttt{jpc2181@columbia.edu} \and
   David W.~Hogg \\
   \texttt{david.hogg@nyu.edu} \and
   David M.~Blei \\
   \texttt{david.blei@columbia.edu} } \date{\today}

\begin{document}
\maketitle
\begin{abstract}
  \noindent
  Interstellar dust corrupts nearly every stellar observation, and accounting for
it is crucial to measuring physical properties of stars.  We model the dust
distribution as a spatially varying latent field with a Gaussian process (GP)
and develop a likelihood model and inference method that scales to millions of
astronomical observations.  Modeling interstellar dust is complicated by two
factors.  The first is \textit{integrated observations}. The data come from a
vantage point on Earth and each observation is an integral of the unobserved
function along our line of sight, resulting in a complex likelihood and a more
difficult inference problem than in classical GP inference.  The second
complication is \textit{scale}; stellar catalogs have millions of observations.
To address these challenges we develop \ziggy, a scalable approach to GP
inference with integrated observations based on stochastic variational
inference. We study \ziggy on synthetic data and the Ananke dataset, a
high-fidelity mechanistic model of the Milky Way with millions of stars. \ziggy
reliably infers the spatial dust map with well-calibrated posterior
uncertainties.

\end{abstract}
Keywords: Large-scale astronomical data, Gaussian processes, kernel methods,
scalable Bayesian inference, variational methods, machine learning

\section{Introduction}
\label{sec:intro}

The Milky Way galaxy is primarily comprised of dark matter, stars, and
gas.  Within the gas, in its densest and coldest regions, dust
particles form. The stars of the Milky Way are embedded in this field
of dust.

Back on Earth, astronomers try to map the stars, measuring the
location, apparent brightness, and color of each. But the dust between
Earth and a star obscures the light, corrupting the astronomers'
observation.  Relative to the star's true brightness and color, the
dust \textit{dims} the brightness and \textit{reddens} the color.
This corruption is called \textit{extinction}, and it complicates
inferences about a star's true distance and other true properties.
Stellar extinction has hindered many studies of stars in the Milky Way
disk, which is where most dust lies, and therefore the largest
extinctions~\citep{mathis1990interstellar}.

To cope with these corruptions, astronomers need a map of interstellar
dust, an estimate of the density of dust at each location in the
Galaxy.  An accurate dust map could be used to correct our
astronomical measurements and sharpen our knowledge of the Galaxy's
stars.  Moreover, a dust map may be of independent scientific
interest---for example, it may reveal macroscopic properties of the
shape of the Milky Way, such as spiral arms.\footnote{Though hints of
  spiral structures have been inferred from other observations
  \citep{yg1976spiral, bc2019spiral}, whether the Milky Way galaxy is
  a grand design spiral or if the spiral structures are more
  flocculent remains an open question. An accurate dust map will shed
  light on the recent dynamical history of the Milky Way.  }

But constructing such a map is difficult.  We are embedded in our own
dust field, and so we cannot directly observe it.  Rather, we can only
observe a noisy \emph{integral} of the field along the line of sight
between Earth and a star --- the starlight extinction. (Here, the line of sight 
is the straight line through space between the star and the earth.) Thus the
inference problem is to calculate a single, coherent spatial field of
dust that can explain the observed extinction of millions of spatially
distributed stars.  In this paper, we use a large data set of
astronomical measurements to infer a three-dimensional map of
interstellar dust.

The data comes from standard practice in astronomy, which is to
estimate the extinction of an individual star from its observed color
and brightness using a physical model of star formation and
evolution.\footnote{The observed color and brightness are directly
  derived from telescope images (e.g.~photometry).  Extinction
  measurements are backed out from a theoretical distribution of
  dust-free star colors and luminosities, where luminosity is the
  intrinsic brightness of a star.  This distribution can be based on
  isochrones or empirically derived from a region of the sky known to
  have no dust (e.g.~the Milky Way halo).  The extinction corresponds
  to how far the observed color and brightness are from the set of
  theoretically plausible colors and luminosities. The extinction
  uncertainty incorporates both noise in the photometric measurement
  and prior uncertainty over the range of plausible colors and
  luminosities.  } This procedure results in an estimate of the
extinction and an approximate variance of the estimate about the true
extinction value.  However, this estimate is derived from a single
star's color and brightness---it ignores information about the spatial
structure of the Milky Way and the fact that all stars are observed
through the same three-dimensional density of dust.

The dust map is a spatial distribution of dust---an unobserved spatial
function---that can help refine these noisy measurements.  Each
measurement is modeled conditional on the function and the
three-dimensional location of the star.\footnote{The spatial locations
  of stars can be derived from parallax measurements.  In this work we
  consider them fixed, but in future work will additionally consider
  their uncertainty.}  Such a model shrinks estimates toward their
true values and reduces uncertainty about them.  More formally, the
unknown dust map is a function
${\rho : \mathbb{R}^D \mapsto \mathbb{R}}$ from a three-dimensional
location in the universe $x$ to the density of dust at that location.
Our goal is to estimate this function from data.

We take a Bayesian nonparametric approach.  We place a Gaussian
process (GP) prior on the dust map and posit a likelihood function for
how astronomical observations arise from it. We develop a scalable
approximate posterior inference algorithm to estimate the posterior
dust map from large-scale astronomical data.  In addition to the scale
of the data, the main challenge is the likelihood function.  Typical
spatial analysis involves data that are noisy evaluations of an
unobserved function, and the likelihood is simple.  Astronomical data,
however, comes from a limited vantage point---we only observe the
latent dust map as an \emph{integrated process} along a line of site
to a star, and this integral is baked into the likelihood of the
observation.

More formally, let $e_n$ be the true extinction of starlight for star
$n$ at location $x_n$, let $a_n$ be the noisily measured extinction
with uncertainty $\sigma_n$.
Given the unknown dust map, we
model the noisy measurement as a Gaussian whose mean is an integral
from Earth to $x_n$ \citep{kh2017inferring}.  With covariance function
$k_{\btheta}(\cdot, \cdot)$, the full model is
\begin{align}
  \rho(\cdot) &\sim GP(0, k_{\theta}(\cdot, \cdot))
  \label{eq:prior} \\
  a_n &\sim \mathcal{N}\left(e_n, \sigma_n^2\right), \quad
        \textrm{where \,} e_n \triangleq \int_{x \in R_n} \rho(x) dx.
  \label{eq:likelihood}
\end{align}
Here $R_n$ is the set of points (i.e. the ray) from Earth to $x_n$;
the extinction $e_n$ is in an integral of the latent dust map
$\rho(\cdot)$, and it is through this integral that the map enters the
likelihood. Figure~\ref{fig:pointwise-integrated-comparison}
graphically depicts the distinction between pointwise and
\textit{integrated} observations.

The data are $N$ locations, extinctions, and measurement errors, denoted
$\mathcal{D} = \{a_n, x_n, \sigma_n^2\}_{n=1}^{N}$.  Conditional on
the data, we want to calculate the posterior dust map
$\pi(\rho \mid \mathcal{D})$.  This posterior can estimate different
properties of the latent dust, e.g., the density of dust $\rho(\cdot)$
at a new location, the integral of dust $\rho(\cdot)$ over new sets,
and posterior uncertainty about these values.  Such inferences can aid
many stellar studies.

But the posterior is difficult to compute, complicated by both the
integrated likelihood and the large scale of the data.  (Modern
catalogs of astronomical data contain observations of millions of
stars \citep{brown2018gaia, aguado2019fifteenth}.)  In theory the
scale is helpful---each star provides information about the unobserved dust map.
But scaling a Gaussian process to millions of observations is a
significant computational challenge.

We overcome these challenges with a scalable algorithm for Gaussian process
inference with integrated observations, which we call \ziggy.\footnote{\ziggy
is named for Ziggy Stardust, David Bowie's alter ego.}  In particular, \ziggy
is a stochastic variational inference algorithm, one that uses stochastic
optimization, inducing points, and variational inference to approximate the
posterior.  It handles general covariance functions and scales to millions of
data points.  We study \ziggy on both synthetic data and the Ananke data, which
comes from a high-fidelity mechanistic model of the Milky Way.  We find that
\ziggy accurately reconstructs the dust map and accuracy continues to improve
as the number of observations grows above a million.

In our applied setting, the integrated observations introduce technical
challenges.  \ziggy builds upon an existing scalable Bayesian inference framework
\citep{hoffman2013stochastic, hensman2013gaussian}, and introduces additional
approximations and computational techniques to address the challenges created
by integrated observations.

The paper is organized as follows.  Section~\ref{sec:related-work}
describes related work for estimating interstellar dust.
Section~\ref{sec:exact-inference} formally sets up the problem and
describes exact inference in GP models with integrated observations.
Section~\ref{sec:scaling-inference} develops a stochastic variational
inference algorithm that scales to millions of stellar observations.
Appendix~\ref{sec:synthetic-experiments} studies \ziggy on a synthetic
two-dimensional example, comparing various settings of the algorithm
that trade computation for accuracy and flexibility.
Section~\ref{sec:domain-simulation} studies \ziggy with the Ananke
data set, showing that it recovers a well-calibrated three-dimensional
dust map that accurately predicts extinctions and global dust
structure. Section~\ref{sec:conclusion} concludes the paper and
discusses directions for future research.

\begin{figure}[t!]
\centering
\begin{subfigure}[b]{.46\textwidth}
\centering
\includegraphics[width=\textwidth]{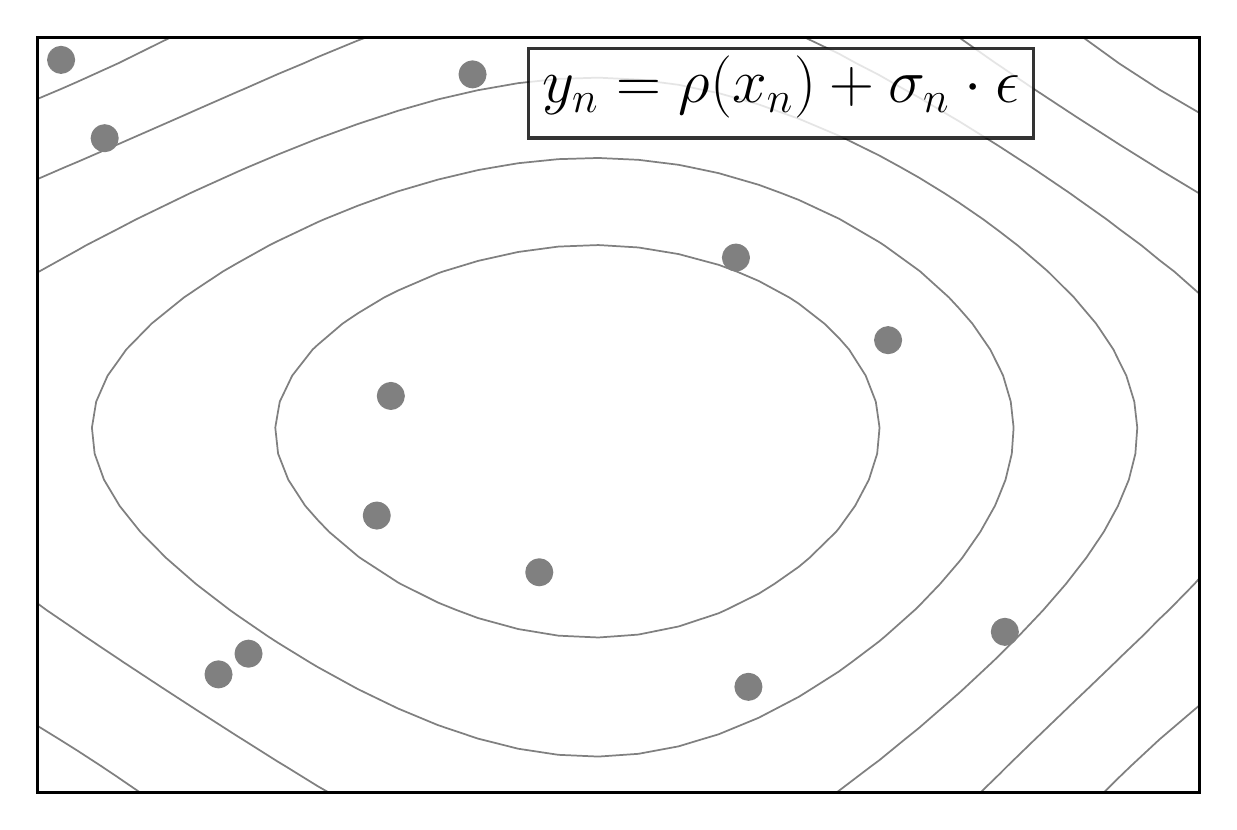}
\caption{Pointwise}
\end{subfigure}
~
\begin{subfigure}[b]{.46\textwidth}
\centering
\includegraphics[width=\textwidth]{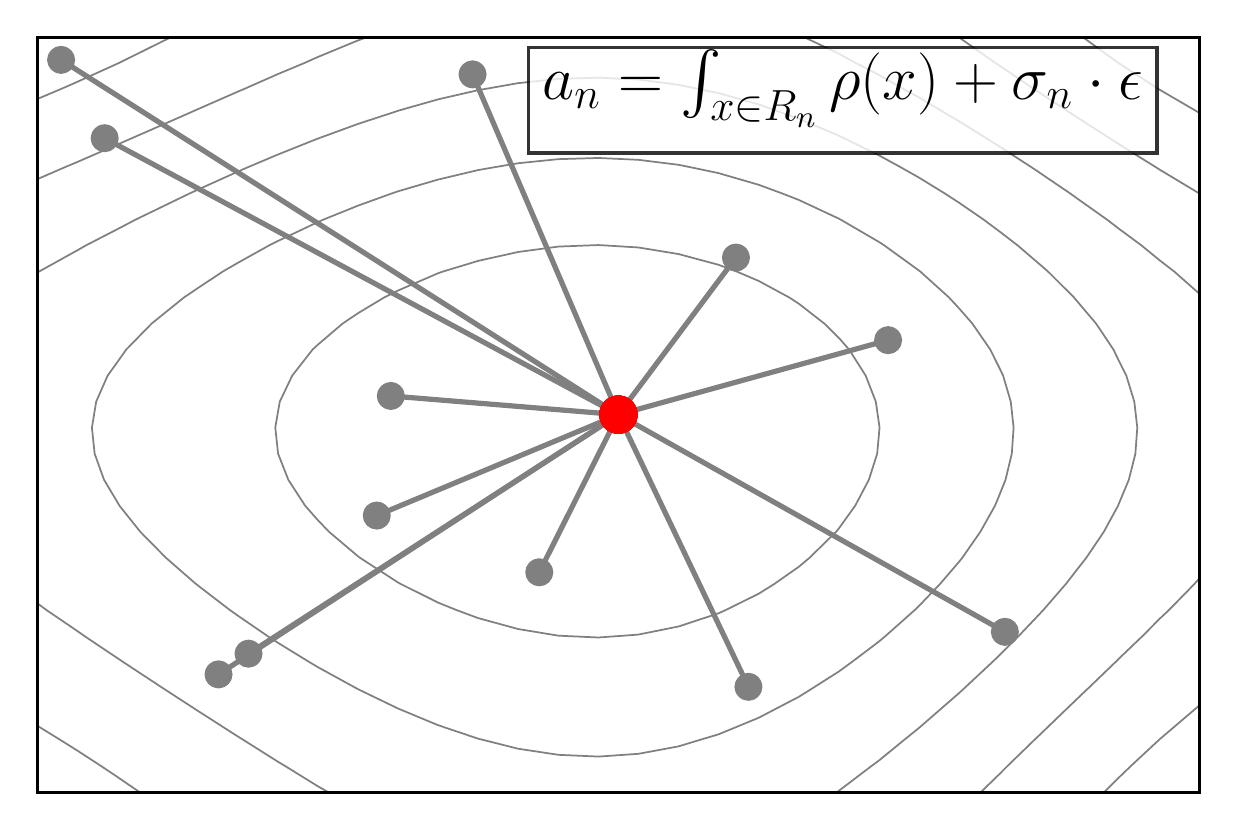}
\caption{Integrated}
\end{subfigure}
\caption{Reconstructing an unobserved function from pointwise (left) 
and integrated (right) observations.  The latent function governs
noisy pointwise observations, $y_n$, and integrated
observations, $a_n$.  The task is to reconstruct the unobserved 
function $\rho(x)$, depicted by the grey contours, everywhere in the domain.
Our perspective is limited to the origin (red dot); our observations of this
process are integrated along a compact set (i.e.~a ray).}
\label{fig:pointwise-integrated-comparison}
\end{figure}

\section{Related research}
\label{sec:related-work}
This work builds on a foundation of research in both astronomy and
statistics.

\parhead{Estimating the latent dust map.} The seminal work of
\citet{schlegel1998maps} estimates the two-dimensional map of dust
across the full sky using dust emission (rather than dust absorption, which we use here).
Emission is a more direct estimate of the dust, but doesn't allow for direct inferences of 3D structure.
Stars within the Milky Way,
however, are embedded in the three-dimensional dust field.  Estimating
per-star extinctions requires characterizing the dust field as a
function of distance, motivating the construction of three-dimensional
dust maps.

More recent approaches use hierarchical spatial models of noisy
integrated observations to reconstruct three-dimensional maps.  These
approaches, however, rely on \emph{discretizing space} and modeling
integrated observations as finite sums.  Various discretization
strategies have been proposed, each with different computational
demands.

One approach models discretized lines of sight to every star
\citep{kh2017inferring}, but this approach can only accommodate a few
thousand stellar observations.  Another approach uses a fine mesh of
cubic voxels to describe the dust distribution.  This introduces a
different computational trade off --- larger voxels introduce
unrealistic spatial artifacts \citep{green2018galactic, green20193d},
while smaller voxels significantly increase the computational burden
\citep{leike2019charting}.  Moreover, the theoretical length scale of
the dust distribution is small; discretizing three-dimensional space
with fine enough resolution to capture this small length scale will
have to rely on additional approximations to scale to the entire
volume of the Milky Way.  The approach we develop here avoids
discretization, works directly with continuous space, and scales to
millions of observations.

\parhead{Spatial statistics and Gaussian processes.} The field of spatial
statistics has developed many tools to estimate unobserved functions
from noisy measurements \citep{cressie1992statistics}.  One ubiquitous and
fundamental method is Gaussian process regression or \emph{kriging}, which
interpolates or smooths a latent function given noisy observation at nearby
locations \citep{krige1951statistical,
  matheron1963principles, matheron1973intrinsic,
  cressie1990origins,rasmussen2006gaussian}.  A GP defines a
probability distribution over the unobserved function that encodes prior
assumptions about some properties --- continuity, smoothness, and amplitude
--- while remaining flexible.  GPs admit analytically tractable inference
routines in typical settings, making them a useful prior for unobserved
functions.

Scaling GPs to massive datasets is a more recent challenge.  One
approach is to approximate the GP with inducing point methods
\citep{quinonero2005unifying, snelson2006sparse}, where 
process values at specific places in the space, the inducing points,
are learned from the data, and predictions are then produced using the 
inducing point process values. Inference using
this approximation scales better than exact GP inference.  Bolstering
their use, recent theoretical work has characterized the error of
inducing point GP approximations \citep{burt2019rates}.  However,
inducing point methods still require analyzing the entire dataset
simultaneously; this requirement limits their applicability to
datasets where the number of observations is in the millions or
billions.

Another thread of GP research in spatial statistics embeds them in
more complicated models and scales them \citep{cressie1992statistics,
  banerjee2014hierarchical}.  For example, nearest-neighbor GP's are a
scalable approximation for estimating the posterior of a latent
spatial field~\citep{datta2016hierarchical}.  Similarly, the
integrated nested Laplace approximations (INLA) framework is an
approximate inference methodology that computes posterior marginal
uncertainty in latent Gaussian models, including Gaussian process
models with computationally tractable precision matrices
\citep{rue2009approximate}.  Several scalable Gaussian process approximations
are reviewed in \citet{heaton2019case}.

Gaussian processes have also been used within astronomy applications beyond
building spatial maps of interstellar dust.  Methods to scale
one-dimensional GP regression to millions of observations have been
developed and deployed with success \citep{ambikasaran2014fast,
foreman2017fast}.

\parhead{Scalable Bayesian inference.} We build on methods
for scaling Bayesian inference to massive datasets.  Variational
methods \citep{jordan1999introduction, wainwright2008graphical,
  blei2017variational} are a computationally efficient alternative to
Monte Carlo methods for posterior approximation. Variational inference
treats approximate inference as an optimization problem, fitting a
parameterized family to be close to the exact posterior.  Stochastic
optimization~\citep{Robbins:1951} scales variational inference to
large datasets, iteratively subsampling from the data to produce
cheap, noisy gradients of the objective; this strategy is called
stochastic variational inference (SVI)~\citep{hoffman2013stochastic}.
Building off of SVI, stochastic variational Gaussian processes (SVGP)
bring in inducing point methods to scale Gaussian processes to massive
datasets \citep{hensman2013gaussian}.  We build on the SVGP framework
here, adapting it to fit the GP model of astronomical data.

\section{A Gaussian process model of starlight extinctions}
\label{sec:method}

The data are $N$ stars, each one a tuple of its location $x_n$, a
noisy measurement of the extinction $a_n$, and the variance of the
observation $\sigma^2_n$; denote the data set
$\mathcal{D} \triangleq \{x_n, a_n, \sigma^2_n \}_{n=1}^N$.  Given the
latent dust map $\rho(x)$, the likelihood of each observation $a_n$ is
defined in Equation~\ref{eq:likelihood}, where the region of
integration $R_n$ is the ray that originates at the origin $O$
(i.e.~the Earth) and ends at the spatial location of the star, $x_n$,
\begin{align}
  R_n
  &=
    \{ \alpha \cdot O +
    (1-\alpha) \cdot x_n : \alpha \in [0, 1] \} \,.
\end{align}
To complete the model, \Cref{eq:prior} places a Gaussian process prior
on the dust map.

Given the data, the posterior distribution,
$p(\rho \given \mathcal{D})$, summarizes the evidence about the latent
dust map.
\added[id=AM]{
Via the relationship between $\rho(x_n)$ and $e_n$ defined
in \Cref{eq:likelihood}, we use this posterior over $\rho$ to form estimates
of the extinction for observed
stars $e_n$, the extinction at new locations $e_*$, and the dust map
itself at new locations $\rho(x_*)$}.  Specifically, the compute the posterior expectations
\begin{align}
  \hat{\rho}(x_*)
  \triangleq \mathbb{E}\left[ \rho(x^*) \given \mathcal{D} \right] \,,\quad
  \hat{e}_n
  \triangleq
    \mathbb{E}\left[ e_n \given x_n, \mathcal{D} \right] \,,\quad
  \hat{e}_*
  \triangleq
    \mathbb{E}\left[ e_* \given x_*, \mathcal{D} \right] \, .
\end{align}
The posterior variance, e.g.,
$\mathbb{V}\left[ e_* \given x_* , \mathcal{D} \right]$, describes the
uncertainty of the estimates.

The posterior dust map $\rho(\cdot)$ synthesizes information from nearby
sources to de-noise or shrink an individual extinction $e_n$, resulting in
more accurate inferences with smaller estimator variance.  Moreover it
enables estimating the dust density at a new location $x_*$ or to estimate
estimate the extinction along a path to the new point.

In the following sections we discuss how to calculate the posterior
and how to approximate it with large data sets of stellar
observations.  Before that, however, we discuss the choice of a
Gaussian process prior.

\parhead{Why use a Gaussian process?}  First, Gaussian processes can
flexibly model complex unobserved latent functions.  The dust map
$\rho(\cdot)$ is not easily described by a parametric functional form,
but GPs can adapt highly complex, non-linear functions to describe
data.
Second, high-level properties of the unobserved dust map can be
captured by the choice of the GP's covariance function.  In
particular, we assume the dust map has specific structure---it is
continuous and two nearby locations are more likely to have similar
dust density values than two distant locations.  GPs naturally capture
such spatial coherence.
Third, all stellar observations are derived from the same underlying
dust map $\rho(x)$.  Two noisy extinction measurements with nearby
integration paths must be coherently described by a single estimate of
$\rho(\cdot)$.  The continuity, smoothness, and spatial coherence
across observations give traction in forming an accurate statistical
estimate of the unobserved dust map.

\subsection{GP Inference with Integrated Observations}
\label{sec:exact-inference}

In a GP, any finite set of function evaluations are multivariate
normal distributed with covariance defined by the covariance function,
$\text{cov}(\rho(x_i), \rho(x_j)) = k_{\btheta}(x_i, x_j)$.  When
observations are corrupted with Gaussian noise---the typical setting
for GP models---the joint distribution between all observations and
latent process quantities remains multivariate normal and the
posterior at a new point $\rho(x^*)$ can be expressed as a few matrix
operations \citep[Chapter 2]{rasmussen2006gaussian}.
Inference in Gaussian process models with noisy \emph{integrated
  observations}--- defined in Equation~\ref{eq:likelihood}---closely
mirrors inference with the more typical, noisy pointwise observations.

Consider the model in \Cref{eq:prior,eq:likelihood}.  We derive the
posterior distribution by first characterizing the joint distribution
over $\rho(x_*)$ and $\ba = (a_1, \dots, a_N)$
\citep{rasmussen2006gaussian}.  Observe that linear operations of
multivariate normal random variables remain multivariate normal in
distribution.  As noted in \citet[Section~9.8]{rasmussen2006gaussian},
because integration is a linear operation, the value of the integral
of $\rho$ over the domain $\mathcal{X}$ will remain Gaussian.  This is
also true for any collection of definite integrals and the
corresponding (Gaussian) noisy observations.

The joint distribution over $\ba$ and $\rho(x_*) \triangleq \rho_*$ remains
multivariate normal.  Given the mean and covariance of this joint
distribution, we can compute the posterior distribution over $\rho_*$ (or
any set of $\rho_*$ points) given observations $\ba$, $\bx$, and
$\bsigma^2$.  The joint distribution over $\ba$ and $\rho_*$ mimics the
standard Gaussian process setup
\begin{align}
  \begin{pmatrix}
  \ba \\ \rho_*
  \end{pmatrix}
  &\sim \mathcal{N}\left( \mb{0}, 
  \begin{pmatrix}
    {\bK}_{\be} + \bSigma_{\ba} & \bK_{\be,*} \\
    {\bK}_{*, \be} & \bK_* \\
  \end{pmatrix}
  \right) \, ,
  \label{eq:integrated-joint}
\end{align}
where the covariance matrix blocks are defined
\begin{align}
    ({\bK}_{\be})_{ij}   &= \Cov(e_i, e_j) \,,\quad
    ({\bK}_{*, \be})_{*j} = \Cov(\rho_*, e_j) \,,\quad
    \bK_*                 = \Cov(\rho_*, \rho_*) \, \,,
    \label{eq:pointwise-cov}
\end{align}
\added[id=AM]{and the observation noise matrix $\bSigma_{\ba} \equiv \texttt{diag}(\sigma_1^2, \dots, \sigma_N^2)$.  We note that because $\rho_*$ is scalar, $\bK_{*}$ is a $1\times 1$ matrix. }
Computing these covariance entries enables us to populate the joint
covariance matrix and use standard multivariate normal conditioning
for posterior inference.  This conditioning mimics the equations for
standard Gaussian process inference,
\begin{align}
  \rho(x_*) &\sim \mathcal{N}( \mu_*, \Sigma_*)  \\
  \mu_* &= \bK_{*,\be} \left( \bK_{\be} + \bSigma_{\ba} \right)^{-1} \ba \\
  \Sigma_* &= \bK_{*} - \bK_{*,\be} \left( \bK_{\be} + \bSigma_{\ba} \right)^{-1}\bK_{\be, *}.
\end{align}
To compute this posterior, we need to compute the covariances in
Equation~\ref{eq:pointwise-cov}.

The chosen covariance function directly specifies $\Cov(\rho_*, \rho_*)$
The other terms, $\Cov(e_i, e_j)$ and $\Cov(\rho_*, e_j)$
are slightly more complex, and require integrating the covariance function
in one or both of its arguments.

\vspace{.5em}
\begin{claim}[Semi-integrated Covariance Function]
\label{claim:semi-integrated}
The covariance between a process value $\rho_i \triangleq \rho(x_i)$ and an
integrated value $e_j \triangleq \int_{x \in R_j} \rho(x)dx$ takes the form
\begin{align}
  \Cov(\rho_i, e_j)
    &= \int_{x \in R_j} k(x_i, x) dx 
     = k^{(\mathrm{semi})}(x_i, x_j) \, .  \label{eq:k-semi-def}
\end{align}
For consistency, we will write the integrated argument second.
\end{claim}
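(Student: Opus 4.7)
The plan is to exploit bilinearity of covariance together with the fact that integration is a linear operation on the Gaussian process, which lets us pull the integral outside the covariance. Since $\rho \sim GP(0, k_\theta)$ has zero mean, both $\rho_i$ and $e_j$ are mean-zero random variables, so the covariance reduces to $\Cov(\rho_i, e_j) = \mathbb{E}[\rho_i e_j]$, and similarly $\Cov(\rho(x_i), \rho(x)) = k(x_i, x)$ for each $x$.

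The main step is to write
\begin{align*}
  \Cov(\rho_i, e_j)
  = \mathbb{E}\!\left[ \rho(x_i) \int_{x \in R_j} \rho(x)\, dx \right]
  = \int_{x \in R_j} \mathbb{E}[\rho(x_i)\rho(x)]\, dx
  = \int_{x \in R_j} k(x_i, x)\, dx,
\end{align*}
where the middle equality swaps the expectation and the integral. First I would pull $\rho(x_i)$ inside the integral (it does not depend on the dummy variable), then apply Fubini--Tonelli to interchange $\mathbb{E}$ and $\int_{R_j}$. After the swap, the integrand is precisely the covariance function $k(x_i, x)$ by definition of the GP prior, yielding the claimed semi-integrated kernel $k^{(\mathrm{semi})}(x_i, x_j)$.

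The one genuinely technical point --- and the only plausible obstacle --- is justifying the Fubini interchange: it requires that $(x,\omega) \mapsto \rho(x;\omega)$ admit a jointly measurable version and that $\mathbb{E} \int_{R_j} |\rho(x_i)\rho(x)|\, dx < \infty$. Both follow under very mild regularity: $R_j$ is a bounded line segment, and by Cauchy--Schwarz the integrand is bounded above by $\sqrt{k(x_i,x_i)\, k(x,x)}$, which is continuous (and hence integrable) over the compact ray $R_j$ for any standard covariance function (squared exponential, Mat\'ern, etc.) used in the paper. A measurable version of $\rho$ exists whenever $k$ is continuous, which is the regime of interest here. With integrability in hand, Fubini applies and the chain of equalities above is rigorous.

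The convention stated at the end of the claim (``we will write the integrated argument second'') is a notational choice and requires no proof; it is used throughout the sequel to disambiguate $k^{(\mathrm{semi})}(x_i, x_j)$ from its transpose. An entirely analogous argument (applying Fubini twice) will give the ``fully integrated'' covariance $\Cov(e_i, e_j) = \int_{R_i}\int_{R_j} k(x, x')\, dx'\, dx$ needed to populate $\bK_{\be}$ in \Cref{eq:integrated-joint}, which presumably is the content of the next claim.
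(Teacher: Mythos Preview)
Your proposal is correct and follows essentially the same argument as the paper's proof: reduce to $\mathbb{E}[\rho_i e_j]$ via the zero-mean assumption, pull $\rho_i$ inside the integral, invoke Fubini to swap $\mathbb{E}$ and $\int_{R_j}$, and identify the resulting integrand as $k(x_i,x)$. Your justification of the Fubini step (via Cauchy--Schwarz and compactness of $R_j$) is in fact more careful than the paper's, which simply asserts finiteness for positive integrable kernels.
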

The proof is in Appendix~\ref{app:integrated-covariance}.

\vspace{.5em}
\begin{claim}[Doubly-integrated Covariance Function]
\label{claim:doubly-integrated}
The covariance between two integrated process values $e_i$ and $e_j$ takes
the form
\begin{align}
  \Cov(e_i, e_j)
    &= \int_{(x,x') \in R_i \times R_j} k(x, x')dx dx' 
    = k^{(\mathrm{double})}(x_i, x_j) \, .
\label{eq:k-doubly-def}
\end{align}
\end{claim}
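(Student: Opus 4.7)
The plan is to reduce the doubly-integrated case to the semi-integrated case of Claim~\ref{claim:semi-integrated}, using bilinearity of covariance together with the fact that definite integrals of a GP are quadratic-mean limits of Riemann sums and therefore commute with covariance.

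First, I would observe that since $\rho$ has zero mean, each $e_i = \int_{R_i} \rho(x)\,dx$ also has zero mean (because the integral over a compact ray is a quadratic-mean limit of zero-mean Gaussian linear combinations), so $\Cov(e_i, e_j) = \E[e_i e_j]$. Next, writing out both integrals explicitly,
\begin{align*}
  \Cov(e_i, e_j)
  &= \Cov\!\left( \int_{R_i} \rho(x)\,dx,\ e_j \right).
\end{align*}
By bilinearity of covariance, applied to the Riemann-sum definition of the outer integral and passing to the quadratic-mean limit, this equals $\int_{R_i} \Cov(\rho(x), e_j)\,dx$. Claim~\ref{claim:semi-integrated} identifies the inner covariance with the semi-integrated kernel,
\begin{align*}
  \Cov(\rho(x), e_j) = \int_{R_j} k(x, x')\,dx',
\end{align*}
so substituting and combining the two integrals yields $\Cov(e_i, e_j) = \int_{R_i}\!\int_{R_j} k(x, x')\,dx'\,dx$, which is the claimed $k^{(\mathrm{double})}(x_i, x_j)$.

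The main obstacle is the analytic justification for interchanging covariance (equivalently, expectation of products) with integration twice. This requires mild regularity: the rays $R_i$ and $R_j$ are compact line segments, and it suffices that $k(\cdot,\cdot)$ be continuous on $R_i \times R_j$ so that the iterated integral is finite and Fubini's theorem applies to $\E\!\left[\int_{R_i}\!\int_{R_j} \rho(x)\rho(x')\,dx'\,dx\right]$. Equivalently, one defines the integrals as mean-square limits of Riemann sums and then passes the limit through the inner product $\langle \cdot,\cdot\rangle_{L^2(\Omega)} = \Cov(\cdot,\cdot)$, which is continuous in each argument. Either route is standard; the key is simply noting that compactness of the rays plus continuity of the chosen covariance function make both Fubini and the mean-square interchange legitimate, after which the result follows from one application of Claim~\ref{claim:semi-integrated}.
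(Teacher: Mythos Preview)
Your proposal is correct and matches the paper's approach: the paper simply states that the proof is ``similar to the semi-integrated case,'' and your reduction to Claim~\ref{claim:semi-integrated} via one more application of Fubini/bilinearity is exactly that. If anything, your discussion of the analytic justification (compactness of the rays, continuity of $k$, mean-square limits) is more careful than the paper's one-line appeal to Fubini for positive integrable kernels.
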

The proof for the doubly-integrated kernel is similar to the
semi-integrated case.

To summarize, if we can compute these two types of covariance values
then we can plug the result into the distribution in
Equation~\ref{eq:integrated-joint}.  As for standard GP inference, we
can then manipulate the joint distribution using Gaussian conditioning
to compute the posterior distribution over the value of the unobserved
function $\rho(x_*)$ or functionals over test locations, for example
integrated values $e_*$. Algorithm~\ref{alg:integrated-gp-inference}
summarizes GP inference using integrated measurements.

\subsection{Choice of covariance function}

Assumptions about the dust map $\rho(\cdot)$ can be encoded in the
covariance function $k_{\btheta}(\cdot, \cdot)$. We focus on kernels
that are \emph{stationary} and \emph{isotropic}, which can be written
$k_{\btheta}(x, y) = \sigma^2 k\left(\frac{|x-y|}{\ell} \right)$
where $\btheta \triangleq (\sigma^2, \ell)$.  The parameter $\ell$ is the
\emph{length scale}, and the parameter $\sigma^2$ is the process marginal variance
\citep{genton2001classes}. The length scale controls how smooth
the function is; the marginal variance influences the function's amplitude.
We will compare three families of kernel functions: squared
exponential, \matern, and a kernel from Gneiting
\citep{gneiting2002compactly}.
The squared exponential kernel --- a commonly used covariance function for
Gaussian process regression ---  admits an analytically tractable form
for the semi-integrated kernel, but not the doubly integrated kernel.
(See \Cref{app:covariance-function-defs}.)
Other kernels, such as the \matern~\citep{matern1960spatial} or the kernel
presented in \citet{gneiting2002compactly} do not readily admit a
semi-integrated form, which complicates their use with integrated
observations.  Similarly, none of the mentioned kernels, including the
squared exponential, admit a doubly-integrated form over two line segments.
We develop a method to overcome this technical limitation in the following
section.  Furthermore, we use the analytic tractability of the
semi-integrated kernel to validate this method on the squared exponential
kernel.

Beyond the choice of covariance function family, the covariance
function parameters (e.g.~the length scale and process variance) ought
to be tuned.  Tuning GP covariance parameters can be complex --- the
process values $\brho$ can strongly depend on parameters $\sigma^2$
and $\ell$, making it difficult to explore their joint space.  We
discuss this phenomenon within the context of our scalable approximate
inference algorithm in Section~\ref{sec:scaling-inference}.

\section{Scaling Integrated GP Inference}
\label{sec:scaling-inference}

Gaussian process models scale poorly to large datasets
\citep{quinonero2005unifying, titsias2009variational}.  Computing the
posterior distribution of the latent function, the posterior predictive
distribution of new observations, or the marginal likelihood of observed
data (e.g.~for model comparison) all require computing the inverse or the
determinant of a $N \times N$ matrix --- a $O(N^3)$ operation.  As $N$
grows larger than a few thousand observations, this computation becomes
intractable. 
\added[id=AM]{To address this bottleneck, a common strategy 
is to approximate Gaussian process inference using $M << N$ 
\emph{inducing points}, or spatial locations in the input space that are used to approximate the full Gaussian process. 
Inducing point approximations can be interpreted in multiple ways. 
One interpretation is that the inducing points are used to construct a rank $M$ approximation to the $N \times N$ matrix that can be efficiently inverted (e.g., $O(N \cdot M^3)$) \citep{quinonero2005unifying}.
Another interpretation (which we adopt below), is that the inducing points are used to define a family of variational distributions for approximating the posterior over $\rho$, and optimizing the variational objective avoids direct manipulation of the $N \times N$ matrix \citep{titsias2009variational, hensman2013gaussian}.
}

Integrated observations create an additional
computational issue --- the semi-integrated or
doubly-integrated covariance value between any two values may be
unavailable in closed form (except in some special cases) and may
require high-precision numerical approximation.
{With the integrated observations considered here, we have an
additional computational issue.  Calculating the semi-integrated or
doubly-integrated covariance value between any two values may be
unavailable in closed form (except in some special cases) and may
require high-precision numerical approximation}.
The exact GP
inference algorithm in \Cref{alg:integrated-gp-inference} requires
computing a $N\times N$ matrix of doubly-integrated covariance values,
which will be computationally prohibitive when $N$ is a modest size.
Furthermore, when tuning covariance function parameters or comparing
covariance functions, the $N \times N$ matrix of numerically
integrated covariance function values will need to be recomputed many
times.

Thus, to use the model to analyze millions of integrated stellar
observations, we derive a scalable variational inference algorithm to
perform approximate posterior inference.  Variational inference
approximates a posterior distribution by optimizing a parameterized
family---the variational distribution---to be close to the exact
posterior~\citep{jordan1999introduction,wainwright2008graphical,blei2017variational}. For
the model here, our approach builds on the stochastic variational
Gaussian process (SVGP) framework \citep{hensman2013gaussian}.  It
uses stochastic optimization to approximate the
posterior~\citep{hoffman2013stochastic}, operating on small
mini-batches of observations.

Further, to accommodate the integrated observations, we construct a
Monte Carlo estimate of the semi-integrated kernel that generalizes to
covariance functions that do not admit a closed form semi-integrated
version.  Additionally, we use the rotational invariance of stationary
and isotropic covariance functions to construct a fast approximation
to the doubly integrated covariance kernel.

This section describes the SVGP framework and our approach to adapt the
framework to incorporate integrated observations.  \Cref{sec:svgp}
describes relevant details of the SVGP framework, including inducing
points, the variational family, and forming efficient mini-batch estimates
of the variational objective.
\Cref{sec:svgp-integrated-obs} details our approach to do
inference with integrated observations for a general class of
semi-integrated kernels.

\subsection{Stochastic Variational Gaussian Processes}
\label{sec:svgp}

SVGP casts inference in GP models as an optimization problem.
Crucially, the SVGP optimization objective is constructed such that it
can be written as a sum of $N$ terms, each depending on only one data
point.  Given this construction, the objective can be optimized using
stochastic gradients computed with mini-batches of data.  This makes
inference more computationally and memory efficient.

Here, we discuss the SVGP objective 
\replaced[id=AM]{with the standard observations --- noisy versions of $\rho(x)$.}{within the typical noisy pointwise observation framework}.  
We then adapt this approximate
inference framework to integrated observations in
Section~\ref{sec:svgp-integrated-obs}.  Our derivation of the
variational objective is slightly different from
\citet{hensman2013gaussian}---we first define a structured
approximating family and then directly derive the evidence lower bound
objective.

\added[id=AM]{For this section's presentation of inducing 
points, variational inference, and stochastic variational 
Gaussian processes, consider the typical GP model for $N$ observations with 
$\rho \sim GP\left(0, k_{\theta}(\cdot, \cdot)\right)$, and 
standard (i.e., non-integrated) observations 
$y_n \given \bx_n \sim \mathcal{N}(\rho(\bx_n), \sigma_n^2)$.
We write the observation vector $\by = (y_1, \dots, y_N)$ and the
corresponding latent process vector $\brho = (\rho(\bx_1), \dots,
\rho(\bx_N)$).
}

\parhead{Inducing points.} Inducing points are a common tool used to
scale Gaussian process inference \citep{quinonero2005unifying}.  An
inducing point is simply a location in the input space, $\bar{x}$,
which has a corresponding inducing point value $\rho(\bar{x})$. 
\added[id=AM]{Inducing points are typically distributed about the input 
space --- in our experiments we place them on a fixed 
3-dimensional grid throughout space.}

We augment the model above with $M$ inducing points and their
corresponding values
\begin{align}
  \label{eq:inducing-points}
  \bar{\bx}
  &\triangleq
    \bar{\bx}_1, \dots, \bar{\bx}_M
  & \text{inducing points}\\
  \bu
  &\triangleq \rho(\bar{\bx}_1), \dots, \rho(\bar{\bx}_M)
  & \text{inducing point values}
\end{align}
where $\bu$ is the $M$-length vector of values of $\rho$ evaluated at each
of the $M$ inducing points.  Note that the introduction of $\bar{\bx}$ and
$\bu$ has not altered the original model.

The model above specifies two latent variables, $\brho$ and $\bu$,
whose distributions we wish to infer given data $\by$.  That is, we
wish to characterize the posterior distribution
$p(\brho, \bu \given \by, \bx, \bar{\bx})$ in a computationally
efficient way.\footnote{Note that we are considering a model where the
  likelihood and prior are both Gaussian distributions --- a conjugate
  pair --- which implies that the posterior distribution will also be
  Gaussian.  Indeed it will be, but manipulating Gaussian will scale
  cubically in $N$.  The SVGP approach is constructed to scale
  linearly in $N$.} 
\added[id=AM]{SVGP uses inducing points and a specific variational approximation to avoid
the manipulation of the $N \times N$ observation covariance matrix.
}

\parhead{Variational inference.} Variational inference (VI) is an
optimization-based approach to approximate posterior inference
\citep{jordan1999introduction, wainwright2008graphical,
  blei2017variational}.  VI methods posit a \emph{variational family}
of distributions, $q \in Q$, and use optimization techniques to find
the optimal approximate distribution from the set $Q$.  Here, each
element of $Q$ is a distribution over the latent quantities ---
$q(\brho, \bu)$.

The variational inference framework defines an objective to be optimized
--- the most common VI objective is the \emph{evidence lower bound} (ELBO)
\begin{align}
\mathcal{L}(q) &= \mathbb{E}_{q(\bu, \brho)}\left[ 
    \ln p(\brho, \bu, \by \given \bx, \bar{\bx}) - \ln q(\brho, \bu)
  \right] 
  \leq \ln p(\by \given \bx, \bar{\bx}) \, .
  \label{eq:elbo}
\end{align}
Maximizing the ELBO minimizes the KL divergence between
$q(\brho, \bu)$ and the true posterior
$p(\brho, \bu \given \by, \bx, \bar{\bx})$.

\parhead{Stochastic variational Gaussian processes} The stochastic
variational GP  framework (SVGP) \citep{hensman2013gaussian} uses a
particular form for the variational family.  Given the set of inducing
points $\bar{\bx}$, the SVGP variational approximation is
\begin{align}
  q(\brho, \bu) &= p(\brho \given \bu)\,q_{\blambda}(\bu) \,,\quad
  q_{\blambda}(\bu) \triangleq \mathcal{N}(\bu \given \bm, \bS) \,,\quad
  \text{and} \,\, \blambda \triangleq (\bm, \bS),
  \label{eq:variational-approx}
\end{align}
where $p(\brho \given \bu)$ is specified by the Gaussian process prior
and recall that $\bu$ and $\brho$ are jointly multivariate normal.
The variational parameters $\blambda$ are fit by optimizing the ELBO
defined in Equation~\ref{eq:elbo}.

The SVGP approximation defined in Equation~\ref{eq:variational-approx} is
chosen because it has a very specific property --- when plugged into
Equation~\ref{eq:elbo}, the objective decomposes into the sum of $N$
separate terms.  This allows us to create unbiased \emph{mini-batched}
estimators of the objective (and its gradients), enabling efficient
inference.  We can see through a straightforward algebraic manipulation (we
suppress $\rho$ and $\bu$'s dependence on $\bx$ and $\bar{\bx}$ to remove
clutter)
\begin{align}
\mathcal{L}(\blambda)
  &= \mathbb{E}_{q(\brho, \bu)}\left[ \ln p(\brho, \bu, \by)- \ln q_{\blambda}(\rho, \bu) \right] \\
  &= \mathbb{E}_{q_{\blambda}(\bu)p(\brho \given \bu)}\left[
    \ln p(\by \given \brho) + \bcancel{\ln p(\brho \given \bu)} +
    \ln p(\bu) - \ln q_{\blambda}(\bu) - \bcancel{\ln p(\brho \given \bu)}
    \right]
  \label{eq:elbo-cancel} \\
  &= \underbrace{\mathbb{E}_{q_{\blambda}(\bu)p(\brho \given \bu) }\left[
    \ln p(\by \given \brho)
    \right] }_{\text{(i)}} - KL(q_{\blambda}(\bu)\,||\,p(\bu)) \,\, ,
  \label{eq:elbo-obj}
\end{align}
and we can write the term (i) as a sum over the $N$ observations
\begin{align}
        \text{(i)}
        &\triangleq \mathbb{E}_{q_{\blambda}(\bu)}\left[ \mathbb{E}_{p(\brho \given \bu)} \left[ \ln p(\by \given \brho) \right] \right] \\
        &= \mathbb{E}_{q_{\blambda}(\bu)}\left[ \mathbb{E}_{p(\brho \given \bu)} \left[ \sum_{n=1}^N \ln p(y_n \given \rho_n) \right] \right] \\
        &= \sum_{n=1}^N	\underbrace{\mathbb{E}_{q_{\blambda}(\bu)}\left[ \mathbb{E}_{p(\brho \given \bu)} \left[ \ln p(y_n \given \rho_n) \right] \right]}_{\triangleq \mathcal{L}_{n}} \,.
        \label{eq:additive-terms}
\end{align}
When the likelihood $p(y_n \given \rho_n)$ is Gaussian, the
expectation in each of the $\mathcal{L}_n$ terms can be computed
analytically.  Notice the cancellation in
Equation~\ref{eq:elbo-cancel} eliminates the term that involves all
$N$ data points and a $N\times N$ matrix inversion,
$\ln p(\brho \given \bu)$ \citep{titsias2009variational}.

To complete the derivation, we write a mini-batched estimator of the ELBO.
Given a set of randomly selected observations $B$,
\begin{align}
 \hat{\mathcal{L}}(\blambda)
    &= \frac{N}{|B|} \sum_{b \in B} \mathcal{L}_b -
        KL(q_{\blambda}(\bu)\,||\,p(\bu)) \,\, ,
\end{align}
which is an unbiased estimator of the full objective, and only touches $|B|
<< N$ observations.  We can use gradients of this estimator to optimize the
lower bound with respect to variational parameters $\blambda = (\bm, \bS)$.
\added[id=AM]{
With $M$ inducing points, a mini-batched estimator can be computed in $O(|B| \cdot M^3)$ time.
}

When predicting, note that we only condition on the
inducing point values $\bu$, relying only on the prior covariance
between $\bu$ and $\rho_*$.  The structure of this variational
approximation forces the inducing point values $\bu$ to represent all
of the information learned from the data.  While this approximation is
no longer ``nonparametric'' in the traditional sense, it has been
shown that inducing point methods can provide good approximations to
exact GP inference provided enough inducing points are used
\citep{burt2019rates}.  Empirically, we find that we can adequately
tune covariance parameters and predict on held-out data using a modest
number of inducing points (on the order of $M=6{,}000$).

\parhead{Natural gradients}
Given the objective in Equation~\ref{eq:elbo-obj}, we are now tasked
with finding the optimal parameters ${\blambda}$.  We use stochastic
optimization with mini-batches \citep{hoffman2013stochastic} along
with natural gradients \citep{amari1998natural}, which are effective
for variational inference of Gaussian
processes~\citep{hensman2013gaussian}.

For exponential family distributions (e.g.~the multivariate normal),
the natural gradient can be computed by taking the standard gradient
with respect to the mean parameters \citep{hoffman2013stochastic}.
Following \citet{hensman2013gaussian}, the natural gradient for
variational parameters $\blambda = \bm_{\blambda}, \bS_{\blambda}$ is
straightforward to express using the natural parameterization of the
multivariate normal $q_{\blambda}$.
See Appendix~\ref{sec:natural-gradients} for more details.

\parhead{Whitened parameterization.} Draws of
$\bu \sim p(\bu \given \bar{\bx}, \btheta)$ are highly dependent on
$\btheta$.  Consider a draw $\bu$ conditioned on a fixed length scale
parameter of $\ell$.  This same draw will be highly improbable under
the prior with a different length scale parameter, e.g.~$2\ell$.  The
variational approximation $\blambda$ targets the posterior
distribution $p(\bu \given \by, \btheta)$, which is highly dependent
on the structured Gaussian process prior.  This dependence frustrates
the joint inference of $\blambda$ and $\btheta$ --- small changes in
$\btheta$ can make the approximate distribution over $\bu$ suboptimal.
This dependence forces gradient methods to take small steps, which
leads to extremely slow joint inference.

An effective strategy for coping with this dependence is to do
posterior inference in an alternative parameterization of the same
model that decouples variables $\bu$ and $\btheta$.  For structured
latent Gaussian models (e.g.~Gaussian processes) this alternative
model uses the \emph{whitened} prior \citep{murray2010slice} or
\emph{non-centered} parameterization \citep{bernardo2003non}.  For the
dust map model, we whiten the prior by altering the target of
posterior inference --- instead of approximating the posterior over
inducing point values $p(\bu \given \by, \btheta)$, we target the
posterior over inducing point \emph{noise} values
$p(\bz \given \by, \btheta)$, where the relationship between $\bz$ and
$\bu$ is constructed to produce an equivalent model.  For notational
clarity, we define $\bK \triangleq \bK^{(\btheta)}_{\bu,\bu}$ for a
fixed value of $\btheta$ and $\bL$ such that $\bK = \bL \bL^\intercal$
is a general matrix square root.  The whitened model can be written
\begin{align}
    \bz &\sim \mathcal{N}(0, I_M) \,,\quad
    \bu \triangleq \bL \bz \sim \mathcal{N}(0, \bK) \, .
    \label{eq:whitened-model}
\end{align}
We now target the posterior distribution $p(\bz \given \by, \btheta)$ by
fitting an approximation $q_{\blambda}(\bz) = \mathcal{N}(\bz \given
\tilde{\bm}, \tilde{\bS})$.  The linear relationship between $\bz$ and
$\bu$ implies that $\bm = \bL \tilde{\bm}$ and
$\bS = \bL \tilde{\bS}\bL^\intercal$.
We describe additional details of this parameterization in
\Cref{sec:whitened-gradients}.

\subsection{Adapting to integrated observations}
\label{sec:svgp-integrated-obs}
The previous section detailed a scalable approximate inference algorithm
for Gaussian processes in the typical noisy pointwise observation setting.
Here, we adapt this framework to integrated observations.  The challenge is
now efficiently computing the appropriate semi- and doubly-integrated
covariance values to be used within each mini-batch update.

The natural gradient described in Equation~\ref{eq:natgrad} consists of two
types of covariance matrices --- $\bK_{\bu,\bu}$ and $\bK_{\bu, \brho}$.
The former describes the prior covariance between inducing point values and
remains exactly the same in the integrated observation setting.  The latter
describes the cross covariances between process locations and inducing
point values --- this cross covariance must be adjusted to reflect the
covariance between the \emph{integrated process} and the inducing point
values.  This is precisely what is defined by the semi-integrated
covariance function.  The entries of $\bK_{\brho,\bu}$ have to simply be
switched to the semi-integrated covariance defined in
Claim~\ref{claim:semi-integrated}
\begin{align}
    (\bK_{\brho,\bu})_{n,m} &= \int_{x \in R_n} k(x_m, x)dx \, .
    \label{eq:semi-integrated-gram}
\end{align}

In addition to the semi-integrated cross covariances, \ziggy
will need to calculate doubly-integrated covariance terms.  First,
note that the gradient defined in Equation~\ref{eq:natgrad} does not
require doubly integrated terms. This is convenient --- the inducing
point method does not require cross covariances between integrated
observations, and these cross covariances also happen to be the most
expensive to approximate with numerical methods. However, computing
the variational objective (Equation~\ref{eq:elbo-obj}) and making
predictions (Equation~\ref{eq:predictions}) do require the
doubly-integrated diagonal terms, $\bK_{n,n}$ for each observation.

Given the above dependence on integrated covariance terms, the full adaptation
of the SVGP method to integrated observations runs into
two technical hurdles.  First, the semi-integrated covariance is not
available in closed form beyond the simple squared exponential, which is
considered too smooth (and thus of limited capacity) in many settings.
This limitation will make it difficult to use
custom covariance kernels constructed with astronomical theory
\citep{leike2019charting}.  Further, it will make it difficult to compare
different models and choose the best predictor among them. Second,
even when the semi-integrated covariance kernel is available in closed
form, the doubly-integrated covariance is not. The SVGP (and any
inducing point) algorithm only relies on \emph{the diagonal} of the
covariance kernel, so we only need to approximate this diagonal, scalar-valued
function.  We show that this scalar-valued function can be easily
approximated using a small grid of interpolated values, where the ``true''
values are approximated using numerical integration.

We introduce two computationally efficient approximations to these
operations on the covariance function to overcome these technical issues:
(i) Monte Carlo estimates for semi-integrated covariance functions and (ii)
linear interpolation for the doubly-integrated diagonal covariance
function.

\parhead{Monte Carlo estimators for semi-integrated kernels}
The natural gradient updates in Equation~\ref{eq:natgrad} require computing
the covariance between observations and inducing point values, $\bK_{\brho,
\bu}$.  For integrated observations, the covariance between observations
and inducing points is given by the semi-integrated kernel in
Equation~\ref{eq:semi-integrated-gram}.  For kernels with a closed form
semi-integrated kernel (e.g. the squared exponential), we can simply
substitute $k^{(semi)}(\bar{x}_m, x_n)$ for $k(\bar{x}_m, x_n)$ in the loss
and gradients and proceed within the typical SVGP framework.

For kernels without a closed form semi-integrated kernel, we run into
a computational issue --- how do we compute the appropriate cross
covariance values?  One approach is to use numerical integration ---
the semi-integrated covariance is a one-dimensional integral that can
be approximated with quadrature techniques.  However, each batch
requires computing $M \times |B|$ semi-integrated covariance values
--- this can be computationally expensive, which would force us to use
small batches, and this leads to slower learning.

Stochastic natural gradient updates are already an estimate of the true
gradient --- how \emph{good} does this estimate have to be to effectively
find the optimal solution?  Numeric integration approximations to
$k^{(semi)}(\cdot, \cdot)$ are precise to nearly machine precision, but are
computationally expensive.  Analogously we can ask, how \emph{good} do the
semi-integrated covariance approximations have to be to effectively find
the optimal solution?

Pursuing this idea, we propose using a Monte Carlo approximation to the
semi-integrated covariance values --- we sample uniformly along the
integral path and average the original covariance values.  More formally,
we introduce a uniform random variable $\nu_{R_n}$ that takes values along
the ray $R_n$ from the origin to $x_n$.  We can now write the
semi-integrated covariance as
\begin{align*}
  k^{(semi)}(x_m, x_n) &= \int_{x \in R_n} k(x_m, x)dx 
    = |R_n| \int_{\nu \in R_n} \underbrace{\frac{1}{|R_n|}}_{= p(\nu)} k(x_m, x)dx 
    = |R_n| \, \mathbb{E}_{\nu}\left[ k(x_m, \nu) \right] \, .
\end{align*}
This form admits a simple Monte Carlo estimator
\begin{align}
  \nu^{(1)}, \dots, \nu^{(L)} &\sim Unif(R_n) \,,\quad 
    \hat{k}^{(semi)}(x_m, x_n) = \frac{|R_n|}{L}\sum_{\ell} k(x_m, \nu^{(\ell)}) \,\, .
    \label{eq:mc-semi}
\end{align}
It is straightforward to show that Equation~\ref{eq:mc-semi} is an unbiased
and consistent estimator for the true semi-integrated covariance value
evaluated at $x_m$ and $x_n$.  While plugging this directly into the
gradient formula results in a biased estimator for the natural gradient, we
find that with a modest number of samples we can closely match the
optimization performance of the true natural gradient estimator.
\added[id=AM]{We investigate the relationship between samples $L$ and the resulting optimization in Appendix \Cref{fig:optimization-comparison}.}
Appendix~\ref{sec:mc-semi-estimators} contains additional details and
experimental results validating this approach.

Although conceptually similar, note that this Monte Carlo estimate of the
semi-integrated covariance is not equivalent to discretizing along each
line of site as in \citep{kh2017inferring}.  The estimator can use a small
number of samples along the ray and still be useful within stochastic
gradient optimization.

\parhead{Integrated Observation Variance.}  The SVGP algorithm
requires an additional covariance calculation --- the marginal
variance of an integrated observation at point $x_n$,
$k^{\mathrm{(doubly)}}(x_n, x_n)$.  
\added[id=AM]{Note that we do not need to compute the non-diagonal
terms of the doubly integrated kernel, a consequence of 
the separability of Equation \ref{eq:additive-terms}.}
Similar to the semi-integrated kernel
function, the doubly integrated function can be approximated with
numerical methods --- a double quadrature routine for each
observation.  But, similar to the semi-integrated case, this double
quadrature call becomes a computational bottleneck within each batch.
Furthermore, computing the gradient of the ELBO with respect to the
covariance parameters with autodifferentiation tools would require
differentiating through a double quadrature call --- prohibitive in our setting.

We solve this problem by observing that a rotationally invariant (and
stationary) kernel admits a doubly-integrated version that is a function
only of the distances to each of the two integral endpoints.  That is,
$k^{\mathrm{(doubly)}}(x_i, x_j)$ can be written as a function $f(|x_i|, |x_j|)$ for
a fixed setting of covariance function parameters.  Further, we can write
the diagonal of the doubly-integrated kernel as a function of only the
distance to the single point, $k^{\mathrm{(doubly)}}(x_n, x_n) = f(|x_n|)$.

Given that this is a one-dimensional function shared by all $N$ of our
observations, we approximate the doubly-integrated kernel diagonal with
linear interpolation.  The interpolation scheme allows us to cheaply
compute a highly accurate approximation to the doubly-integrated diagonal.
It also enables us to easily backpropagate gradients to the kernel function
parameters, which enables efficient tuning during the variational inference
optimization routine. 
Appendix~\ref{sec:interpolated-doubly-kernels}
describes the interpolation scheme for $k^{\mathrm{(doubly)}}(x_n,x_n)$ in further
detail.

\subsection{Method summary and remarks}

To summarize, \ziggy scales Gaussian process inference with
integrated observations by fusing several strategies:
(i) the explicit representation of the inducing point posterior within
the SVGP framework;
(ii) a whitened parameterization to decouple $\blambda$ and $\btheta$
in the variational objective;
(iii) on-the-fly Monte Carlo estimates of the semi-integrated covariance
function;
(iv) and fast kernel interpolation of the doubly integrated diagonal
covariance function that leverages the stationary and isotropic properties
of the covariance functions considered.

Algorithm~\ref{alg:scalable-integrated-gp-inference} describes the main
loop for updating based on stochastic gradients of the 
variational parameters and the covariance function parameters $\theta$. 
\added[id=AM]{Maximizing the variational lower bound (\Cref{eq:elbo-obj})
with respect to prior parameters $\theta$ is a similar strategy 
to empirical Bayes \citep{efron2008microarrays, efron2019bayes},
a method used to reduce the average error of posterior estimates.} 

We find that the inducing point approach dovetails nicely with
integrated observations---most evaluations of the covariance function
are between inducing points and observations. This only requires
computing the (easier) semi-integrated covariance function.  Further
the doubly-integrated covariance values between any two distinct
observations never needs to be computed.  Rather, only the
\emph{diagonal} of this covariance function needs to be computed, but
this conveniently reduces to a one-dimensional function that can be
efficiently approximated. \ziggy circumvents computing the
(prohibitive) $N \times N$ doubly-integrated covariance.

We note that the SVGP framework for scaling Gaussian process is one
approach to build upon for our application --- other frameworks for scaling
kernel methods could have been employed.  One example is random (or
carefully selected) features for approximating the kernel
function~\citep{rahimi2008random,yang2012nystrom,le2013fastfood}.
While these approximate kernel methods could yield superior performance, 
one reason we reach for the SVGP framework is extensibility.  
The variational inference framework can incorporate more complex
probabilistic graphical model structure.  For this application, additional
types of observations can be incorporated into a more sophisticated likelihood
model to help identify the spatial dust distribution and incorporate additional
sources of uncertainty.

\begin{figure}[t!]
\centering
\begin{subfigure}[b]{.46\textwidth}
\centering
\includegraphics[width=\textwidth]{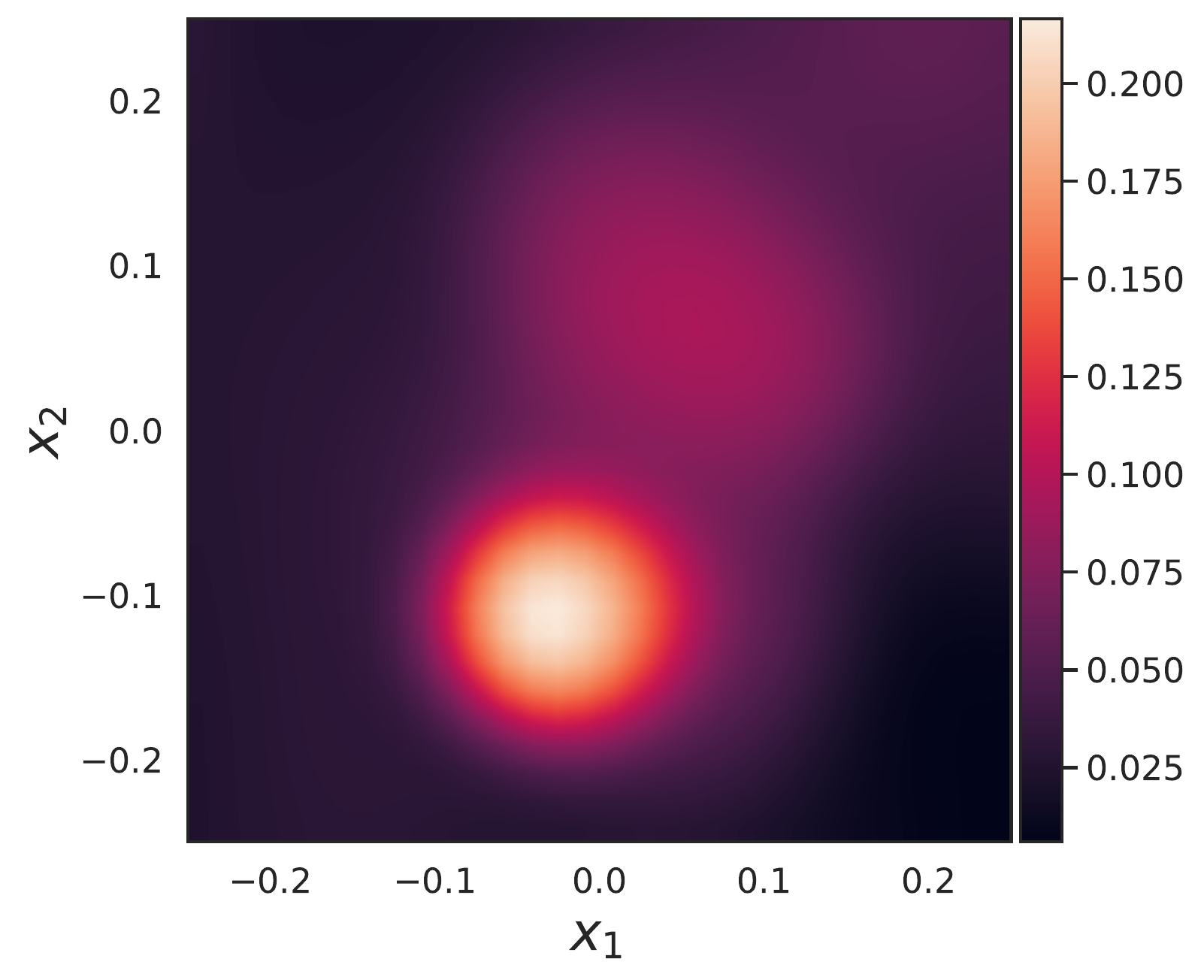}
\caption{True $\rho(x)$}
\label{fig:rho_true_domain}
\end{subfigure}
~
\begin{subfigure}[b]{.46\textwidth}
\centering
\includegraphics[width=\textwidth]{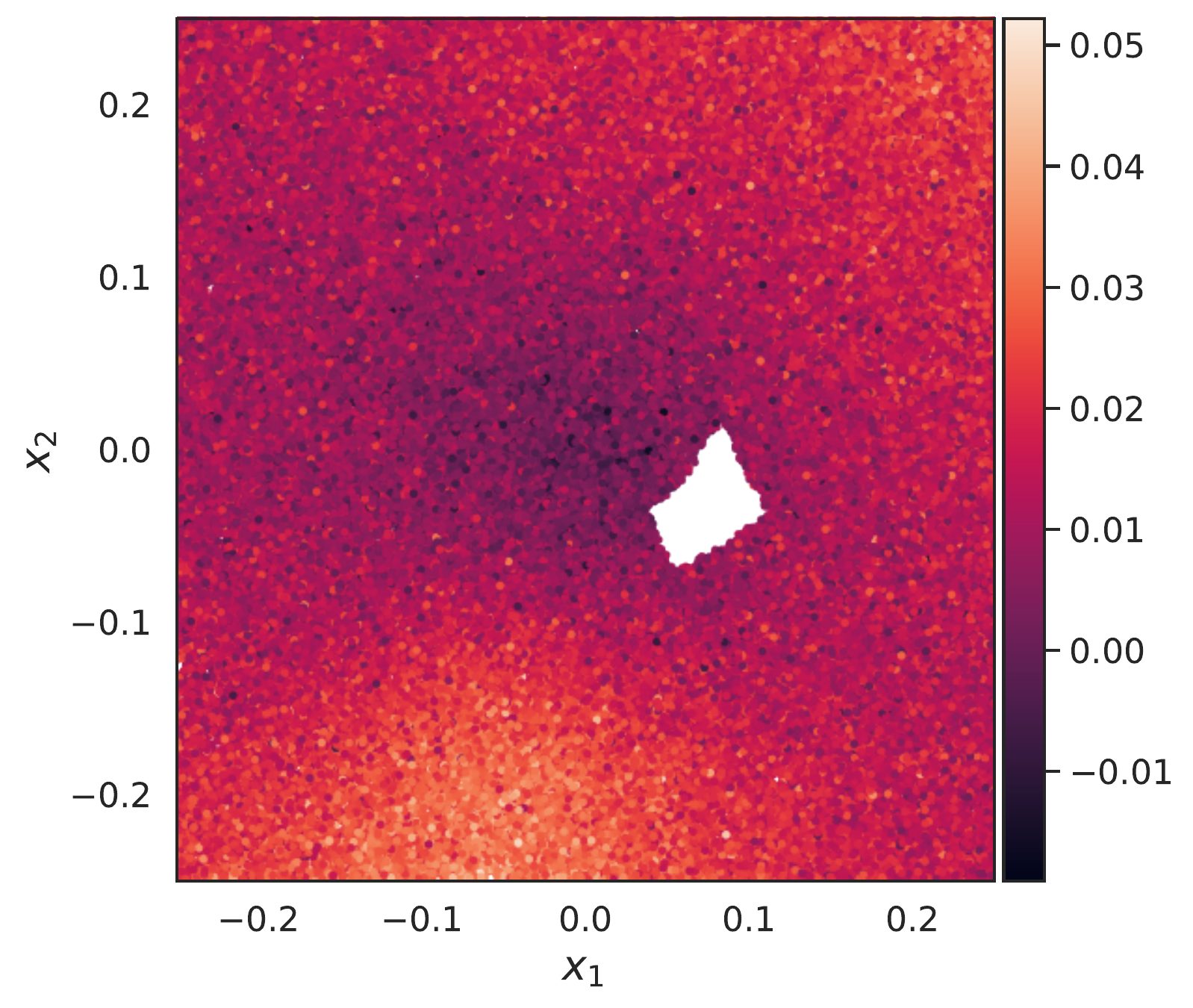}
\caption{Integrated observations $a_n$}
\label{fig:integrated-observations_domain}
\end{subfigure}

\begin{subfigure}[b]{.46\textwidth}
\centering
\includegraphics[width=\textwidth]{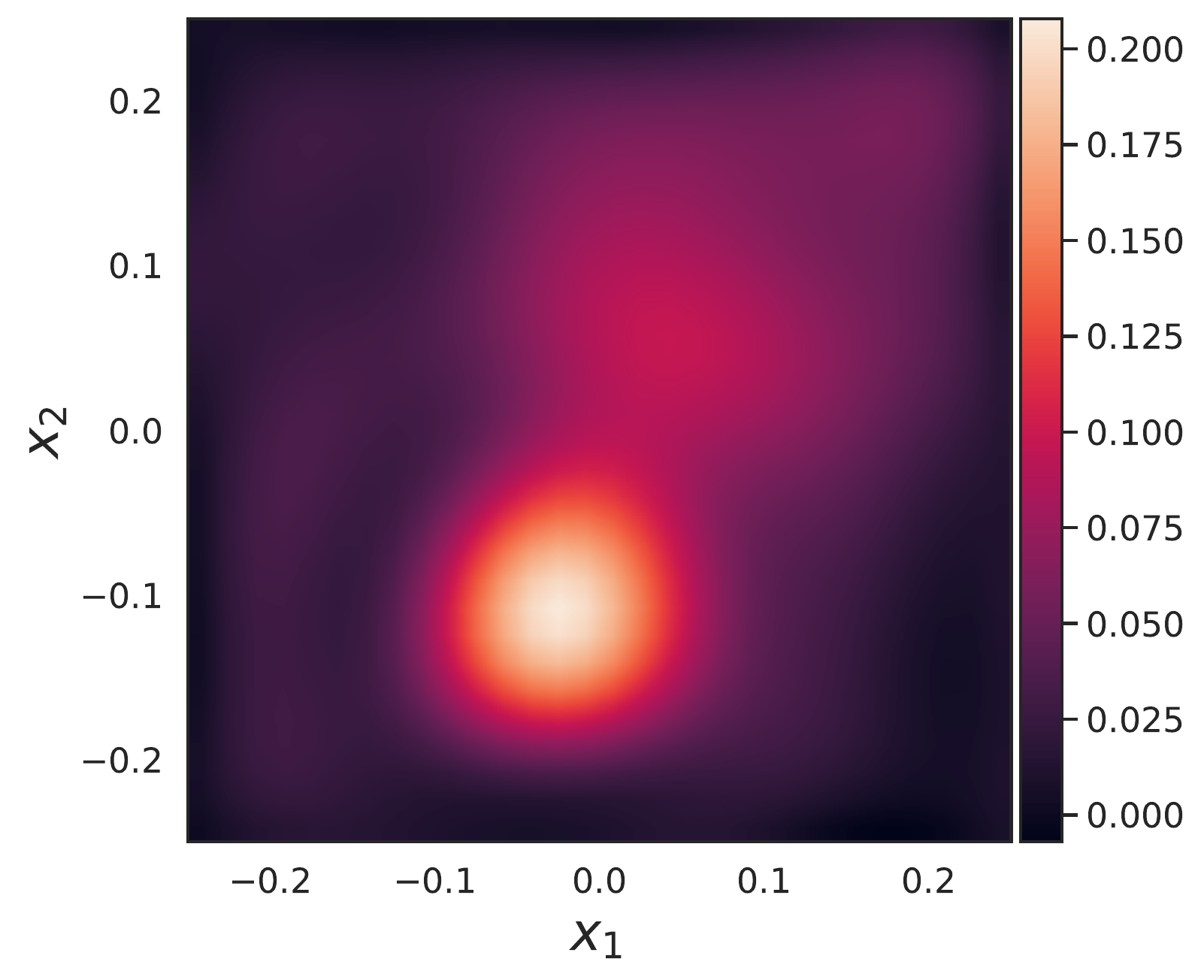}
\caption{Inferred $\mathbb{E}\left[\rho \given \mathcal{D}\right]$.}
\label{fig:rho_inferred_domain}
\end{subfigure}
~
\begin{subfigure}[b]{.46\textwidth}
\centering
\includegraphics[width=\textwidth]{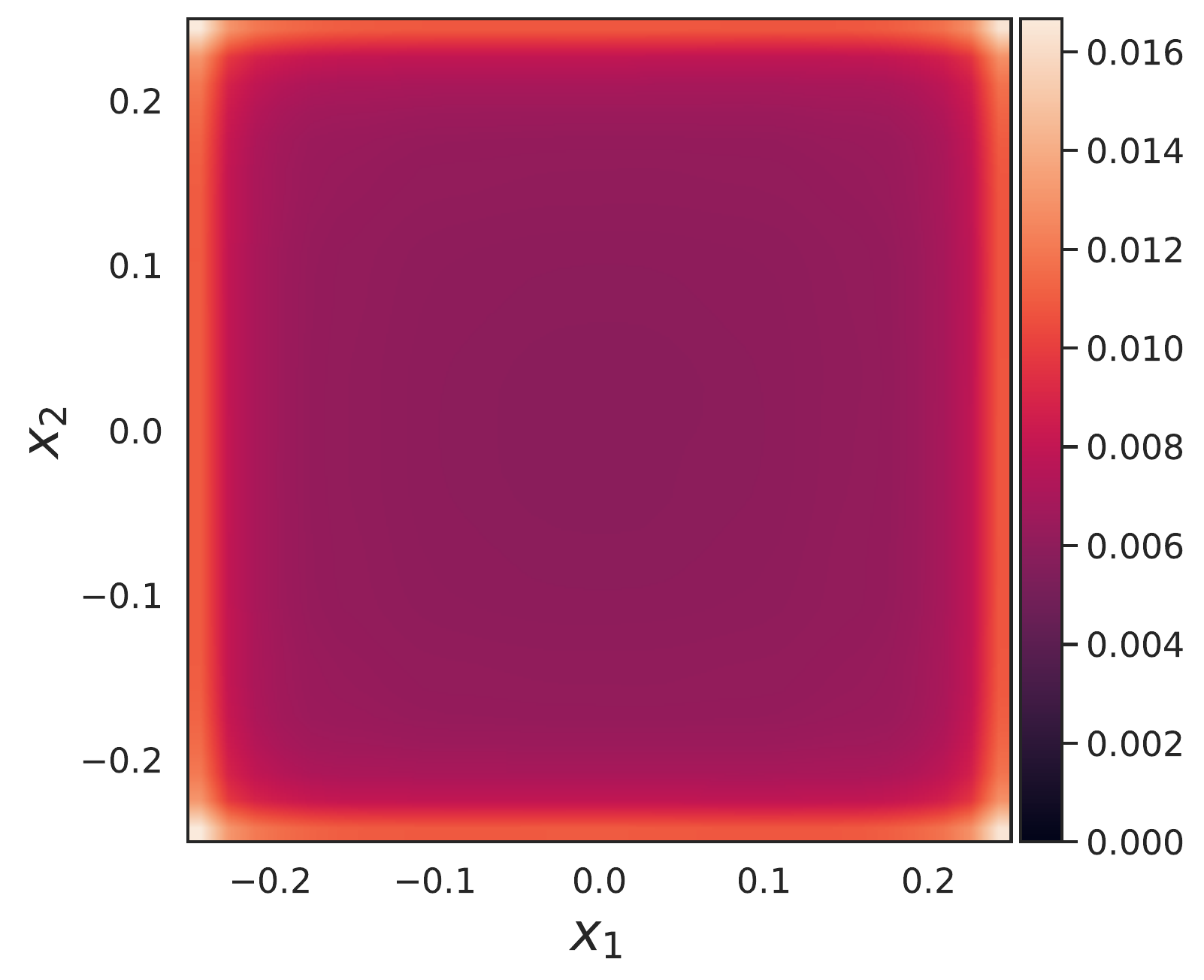}
\caption{Inferred posterior scale $\mathbb{V}\left[\rho(x) \given \mathcal{D}\right]^{1/2}$.}
\label{fig:rho_variance_domain}
\end{subfigure}
\caption{
Integrated observation GPs can reveal spatial structure from a severely limited
vantage point.  This figure summarizes domain simulated data posterior inference
using $N=1{,}000{,}000$ examples in a three-dimensional space.  Panel
\ref{fig:rho_true_domain} depicts the true (unobserved) $\rho(x)$ that generates
the data.  Panel \ref{fig:integrated-observations_domain} depicts the observed
data, noisy integrated measurements of $\rho$ from the origin to stellar
positions in the domain simulation.  \ref{fig:rho_inferred_domain} depicts the
inferred $\rho$ given these observations using \ziggy 
with a squared exponential kernel.  \ref{fig:rho_variance_domain}
depicts the marginal posterior standard deviation at each location in
$\mathcal{X}$-space. All Panels are a slice at $z=0$}
\label{fig:domain-model-fit}
\end{figure}

\section{Validation with a Domain Simulation}
\label{sec:domain-simulation}

We developed \ziggy with the goal of generating a three-dimensional dust
map of the Milky Way
using 2 billion stars in the Gaia dataset \citep{brown2018gaia}. As a step
towards that goal---and to test \ziggy in an applied environment---we 
infer the dust distribution of a mechanistic and physically motivated domain
simulation. The domain dataset is a synthetic Gaia survey
of a high resolution Milky Way-like galaxy simulation. 

The domain dust density $\rho(x)$ was generated in a Milky Way-like galaxy
simulation, one of many in the Latte suite of simulations of Milky-Way-mass
galaxies \citep{aw2016reconciling, ph2018fire}. It was run as part of the
Feedback In Realistic Environments (FIRE) simulation project, which
self-consistently models extinction and star formation in a cosmological context
at high resolution. 

In astronomy, most observations prefer a model in which our Universe is predominantly 
made of cold dark matter \citep{2018planck}. On small scales, however, there are significant challenges 
to this model \citep{1999klypin}, which this suite of simulations helps resolve \citep{aw2016reconciling}. This Latte simulation 
overcomes the computational challenge of including gas, dust, and stars 
with high enough resolution to resolve these tensions. It includes more complex physics of 
“normal” matter, as apposed to just dark matter, including a high resolution disk of gas and dust. 
Because of the results from these simulations, which resolved these tensions, astronomers are more confident in a universal model of dark matter.

The simulation has enough resolution to resolve the
formation of structures in the dense gas that forms dust and stars, creating a
realistic environment for inference. The positions of the extinction
observations, or the positions of stars within $\rho(\cdot)$, was generated by
the Ananke framework. The Ananke framework generated a realistic, mock star
catalog from the FIRE simulation, and kept intact important observational
relationships between gas, extinction, and stellar populations
\citep{sanderson2018ananke}.  This catalogue was specifically designed to
resemble a Data Release 2 Gaia astrometric survey \citep{brown2018gaia}, with a
similar resolution of stellar density as Gaia.
We integrated $\rho(x)$ along lines of sight to 1 million stars in the Ananke
dataset within a $0.5 \mathrm{kpc} \times 0.5 \mathrm{kpc} \times 0.1
\mathrm{kpc}$ region of the synthetic sun. %
So, we generate a set of N noisy integrated observations 
\begin{align}
  x_n &\sim \text{Ananke}(\mathcal{X}) & \text{ stellar locations } \\
  e_n &= \int_{x \in R_n} \rho(x) dx & \text{domain extinctions } \\
  a_n &\sim \mathcal{N}(e_n, \sigma_n^2) & \text{ noisy integrated observation }
\end{align}
where the domain $\mathcal{X} = [(-0.25, 0.25), (-0.25, 0.25), (-0.05, 0.05)]$, 
the median extinction value is $0.01$, and noise variance is chosen to be $\sigma_n = 0.005$.
So our median signal-to-noise value is $3$. 
The true extinctions, $e_n$, are computed to high precision using numerical
quadrature, integrating from the origin to the Ananke stellar locations $x_n$.
The integrated observations are depicted in Figure
\ref{fig:integrated-observations_domain}. We estimate this domain specific dust
density $\rho(\cdot)$ from noisy integrated observations using \ziggy.

In Section \ref{sec:domain-data-size}, we compare posterior estimates of
$\rho(\cdot)$ as a function of training data set size. This highlights the
importance of scaling inference to more observations in order to obtain a more
accurate statistical estimator. In Section \ref{sec:domain-kernel-comparison},
we compare posterior estimates of $\rho(\cdot)$ as a function of kernel choice,
with a fixed data set size. 

\begin{figure}[t!]
\centering
\begin{subfigure}[b]{.45\textwidth}
\centering
\includegraphics[width=\textwidth]{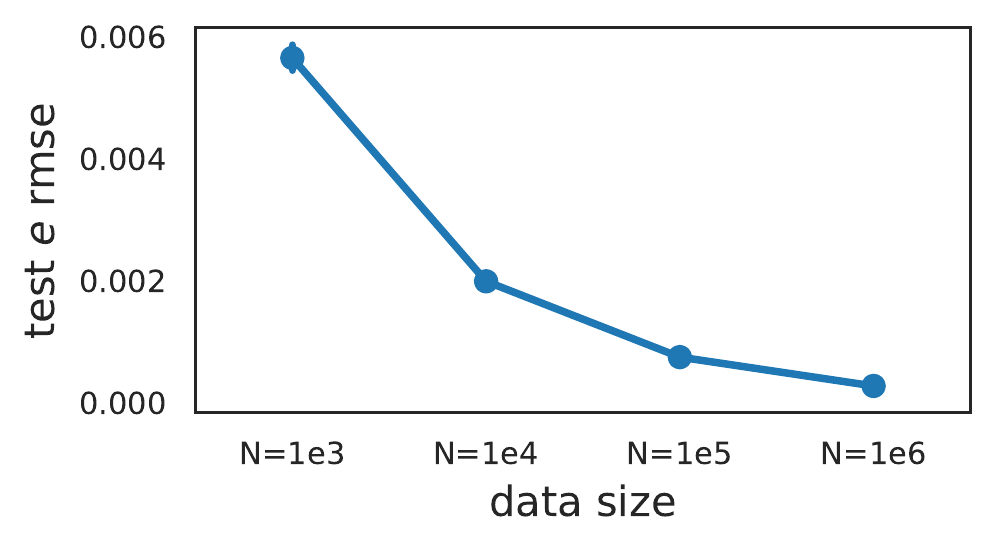}
\caption{Test $e$ RMSE}
\end{subfigure}
~
\begin{subfigure}[b]{.45\textwidth}
\centering
\includegraphics[width=\textwidth]{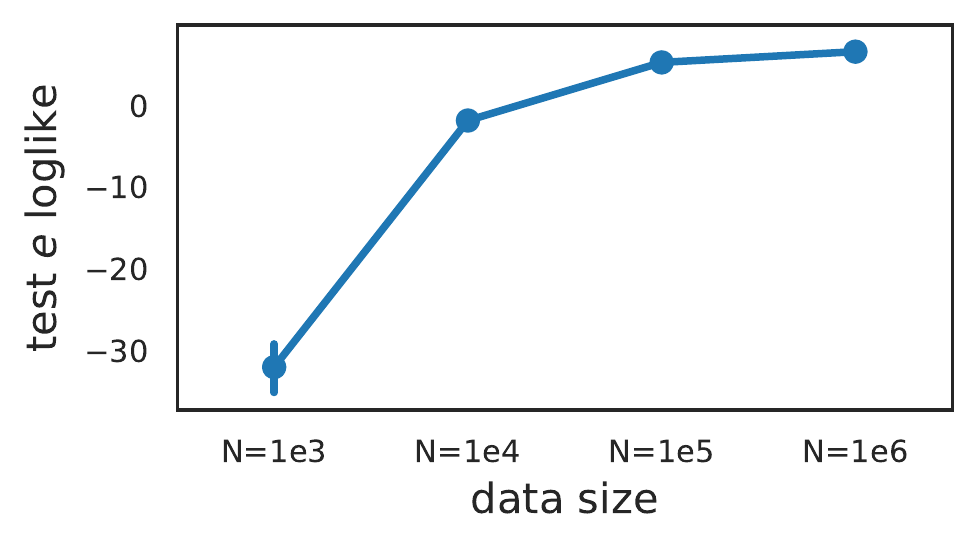}
\caption{Test $e$ log likelihood}
\end{subfigure}

\caption{Scaling to massive $N$
improves estimator performance. We compare predictive RMSE (left) and log
likelihood (right) as a function of data set size $N$ on a held out sample of
test stars.}
\label{fig:domain-data-size}
\end{figure}

\begin{figure}[t!]
\centering
  \begin{subfigure}[b]{.43\textwidth}
  \includegraphics[width=\textwidth]{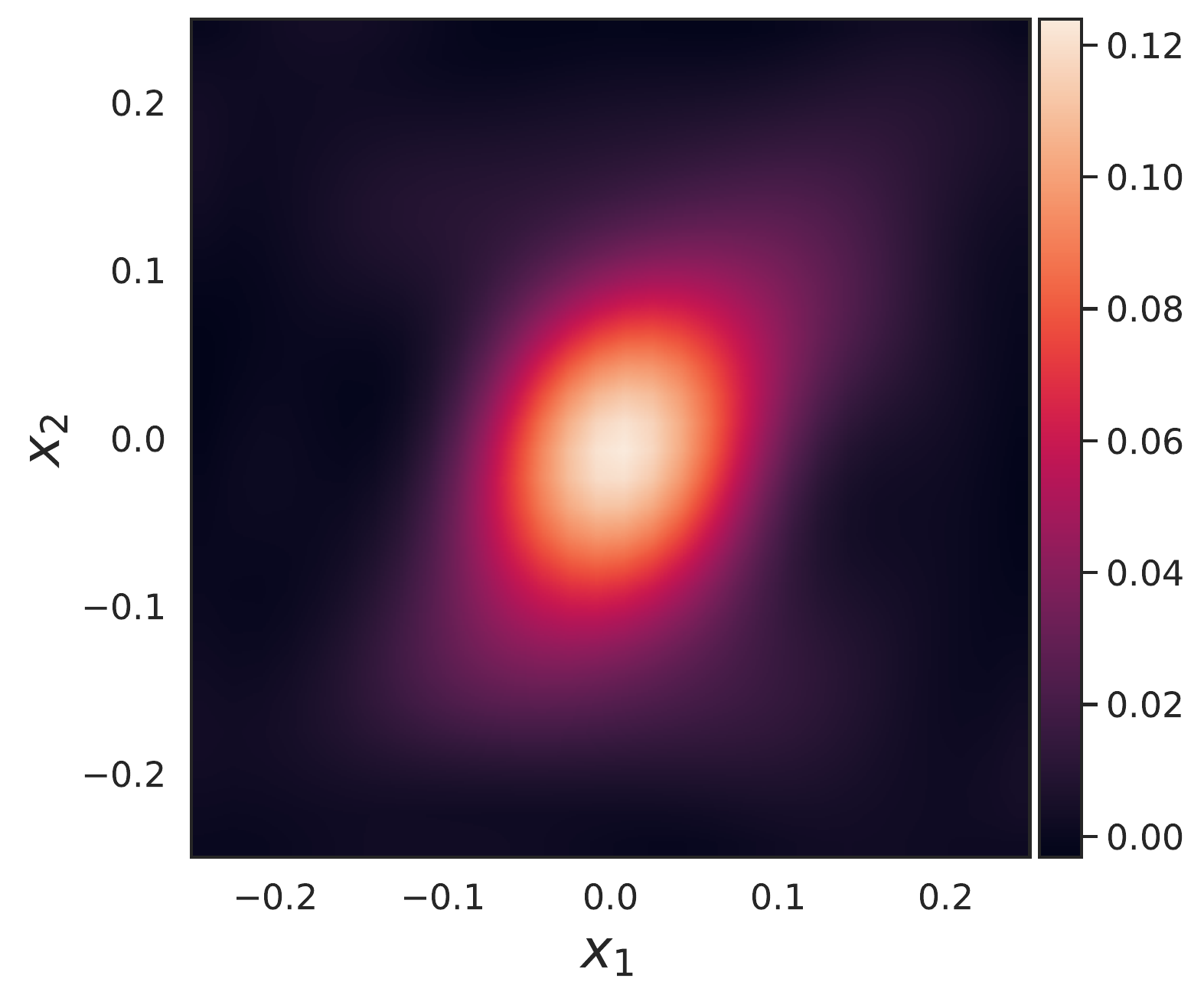}
  \caption{$\mathbb{E}[\rho \given \mathcal{D}]$, $N=1{,}000$}
  \end{subfigure}
  ~
  \begin{subfigure}[b]{.43\textwidth}
  \includegraphics[width=\textwidth]{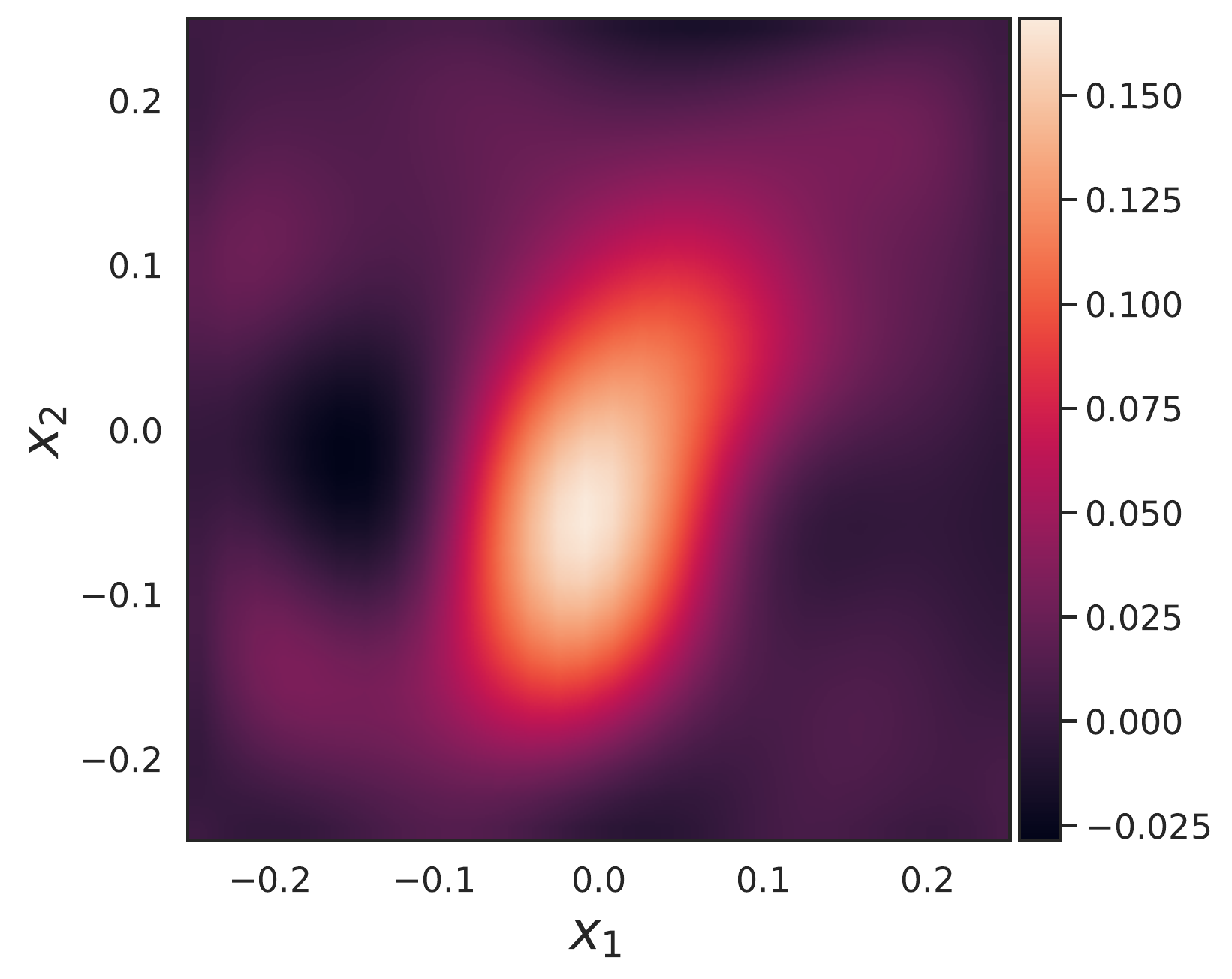}
  \caption{$\mathbb{E}[\rho \given \mathcal{D}]$, $N=10{,}000$}
  \end{subfigure}
  ~
  \begin{subfigure}[b]{.43\textwidth}
  \includegraphics[width=\textwidth]{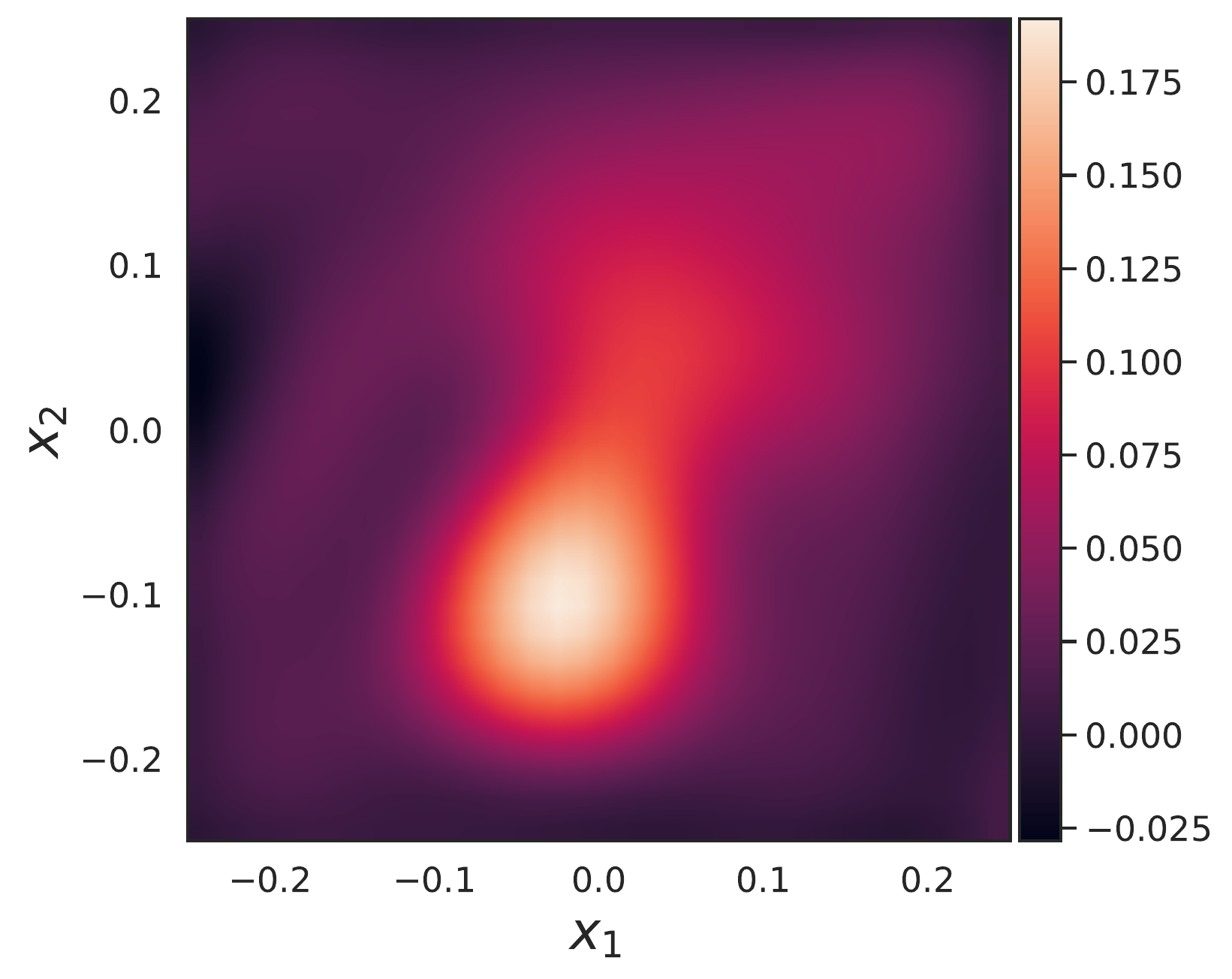}
  \caption{$\mathbb{E}[\rho \given \mathcal{D}]$, $N=100{,}000$}
  \end{subfigure}
  ~
  \begin{subfigure}[b]{.43\textwidth}
  \includegraphics[width=\textwidth]{figs/domain-data-size/SqExp-N-1000000/posterior-fmu-z0.pdf}
  \caption{$\mathbb{E}[\rho \given \mathcal{D}]$, $N=1{,}000{,}000$}
  \end{subfigure}
\caption{Slices of the posterior mean function at $z=0$ for various dataset sizes. More data leads to more accurate predictions of $\rho(\cdot)$.}
\label{fig:domain-data-size-posterior-comparison}
\end{figure}

\subsection{The quality of the estimate}
\label{sec:domain-data-size}
We study the quality of the estimate of $\rho(x)$ as a function of
data set size $N$.  We fit the dust model to the Ananke dataset with
data set sizes $N \in \{10^3, 10^4, 10^5, 10^6\}$ for
${10^5, 10^4, 10^3, 100}$ epochs, respectively.  We trained each model
using mini-batches of size $|B|=2{,}000$, an initial step size of 0.01,
reducing it every epoch by multiplying by a decay factor of
0.99. %
We measure model quality by computing root mean squared error (RMSE)
and log likelihood (LL) on a set of $N_{\mathrm{test}} = 2{,}000$ held
out extinction values. Each model uses $M=16 \times 16 \times 4$
inducing points, evenly spaced in a grid in the input space.

Figure~\ref{fig:domain-model-fit} summarizes the model's fit using one
million observations, the most we tried in this example.
Figure~\ref{fig:rho_inferred_domain} displays the posterior mean for $\rho(x)$
given the one million observations depicted in
\ref{fig:integrated-observations_domain}.  Figure~\ref{fig:rho_variance_domain} shows the
posterior uncertainty (one standard deviation) about the estimated mean.
We see that the million-observation model recovers the true latent function
particularly well.

Figure~\ref{fig:domain-data-size} quantifies the improvement in the RMSE and log
likelihood on a held out sample of test stars as a function of dataset size $N$. 
As the dataset size increases, both statistics drastically improve. 
Figure~\ref{fig:domain-data-size-posterior-comparison} visualizes a
direct comparison of the posterior estimate of the latent
function as we increase data set size $N$.  We can clearly see that as we
incorporate more data into the nonparametric model, the form of the true
underlying function emerges.  To accurately visualize and interpret large
scale features of the latent dust distribution, we will want to incorporate
as many stellar observations as possible. This is particularly true in the low
signal to noise regime of extinction estimates.

\subsection{Comparing kernels}
\label{sec:domain-kernel-comparison}
Similar to the synthetic dataset, we compare multiple models by fitting the
variational approximation and tuning the covariance function parameters for 5
different kernel choices. 

In this domain setup, we ran
each model for 100 epochs, saving the the model with the best ELBO value.
We used mini-batches of size $|B|=2{,}000$, and started the step size at
$0.01$, reducing it every epoch by multiplying by a decay factor of $0.99$.
For kernels that do not have a closed form semi-integrated version, we used
$L=50$ uniform grid Monte Carlo samples to estimate the integrated
covariance.  

We validate the inferences against a set of $2{,}000$ held-out test extinctions $e_*$.
We quantify the quality of the inferences with mean
squared error, log likelihood, and the $\chi^2$-statistic on the test data.
We also visually validate \ziggy with a $QQ$-plot and coverage
comparison, also on test data.  We also visualize the behavior of \ziggy
as a function of distance to the origin ---
specifically, we visualize how accurate and well-calibrated test inferences
are as synthetic observations are made farther away.

In Figure~\ref{fig:domain-error-comparison} we compare the mean squared
error (MSE), the log likelihood, and the $\chi^2$ statistic on the held out
test data.  We compare five different kernels -- (i) Gneiting $\alpha=1$ from \cite{sr2018mwsa}
(ii-iv) \matern, $\nu \in \{1/2, 3/2, 5/2 \}$, and (v) the squared
exponential kernel.
The results match our expectations. The true $\rho(\cdot)$ is quite
smooth, and we see that the smoother kernels tend to have lower predictive
error and higher log likelihood. All kernels have similarly calibrated $\chi^2$ statistic, except for 
\matern, $\nu=3/2$, which under predicts it's posterior variances. This can also be seen in Figure~\ref{fig:domain-qq}.
The test statistic comparison has chosen a good match in
the squared exponential kernel.

To test the calibration of posterior uncertainties, we inspect
statistics of test-sample $z$-scores for $e_*$.  We test that the
statistics of predictive $z$-scores resemble a standard normal
distribution using a QQ-plot.  For a more detailed description fo the
QQ-plot, please see Section \ref{sec:synthetic-kernel-comparison}. To
compare different models, we visualize QQ-plots for the different
covariance functions in Figure~\ref{fig:domain-qq}.  As an additional
summary of calibration, we compare the fraction of predicted examples
covered by $1/2, 1, 2$, and 3 posterior standard deviations,
summarized in Table~\ref{tab:domain-coverage}.  We see that there is
room for improvement in the posterior variances. 
The $z$-scores and empirical coverage show posterior variances that are
well calibrated for the squared exponential and \matern, $\nu=5/2$ kernels, but are too large for the other kernels. 
The models for the less smooth kernels are compute bound, requiring thousands of inducing points 
to tile the domain. We ran the models for 100 epochs, which maxed out the memory and compute time.  
\added[id=AM]{Miscalibration could be the result of misspecification of the model itself --- e.g., mismatch between the covariance function $k$ and the true underlying $\rho$, or the inflexibility of the variational approximation.  To diagnose this shortcoming, methods that allow us to scale the expressivity of the variational approximation --- e.g., larger $M$ --- would be necessary.}

We also depict model fit statistics as a function of distance from the
origin in Figure~\ref{fig:domain-summary-by-distance}.
Figure~\ref{fig:domain-e-value-by-distance} visualizes the test extinction
values.
\ziggy is able to accurately reconstruct the dust map. %
Figure~\ref{fig:domain-e-posterior-variance-by-distance} shows the posterior
standard deviation by distance, which shows a noise reduction of about
$10\times$ for most stars. %
In Figure~\ref{fig:domain-e-zscore-by-distance}
we see that the predictions have posterior variances that are slightly too high, 
causing $z$-scores that lie mostly within $2\sigma$ independent of distance, with only the most nearby stars extending to $3\sigma$. %
Decent calibration, including at far distances, is encouraging, as the
goal is to form good estimates of this density far away from the
observer location.

\begin{figure}[t!]
\centering
\begin{subfigure}[b]{.32\textwidth}
\centering
\includegraphics[width=\textwidth]{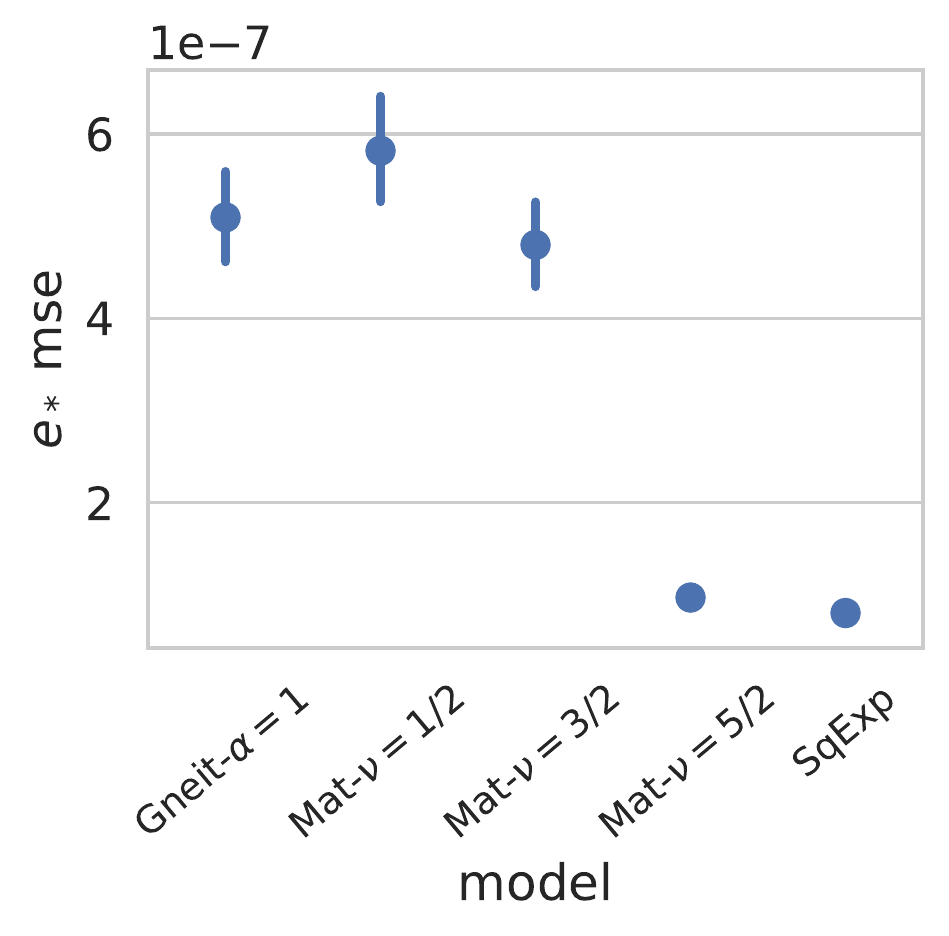}
\caption{Test $e$ MSE}
\end{subfigure}
~
\begin{subfigure}[b]{.32\textwidth}
\centering
\includegraphics[width=\textwidth]{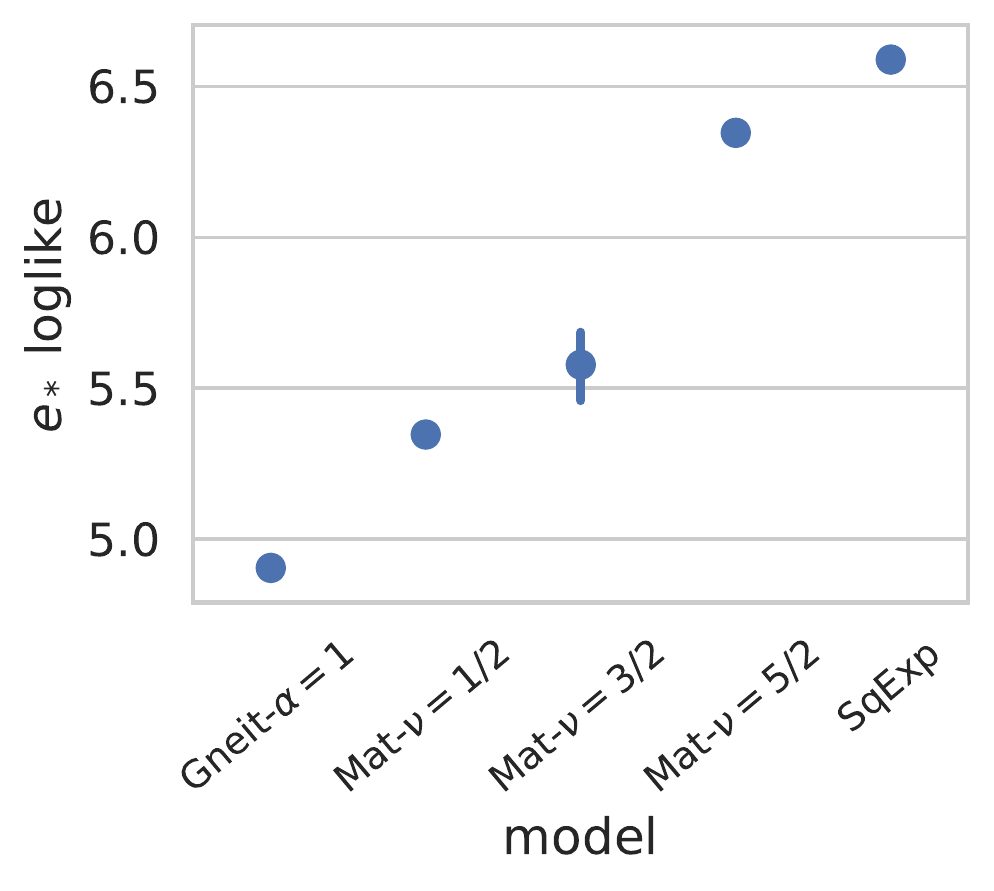}
\caption{Test $e$ log likelihood}
\end{subfigure}
~
\begin{subfigure}[b]{.32\textwidth}
\centering
\includegraphics[width=\textwidth]{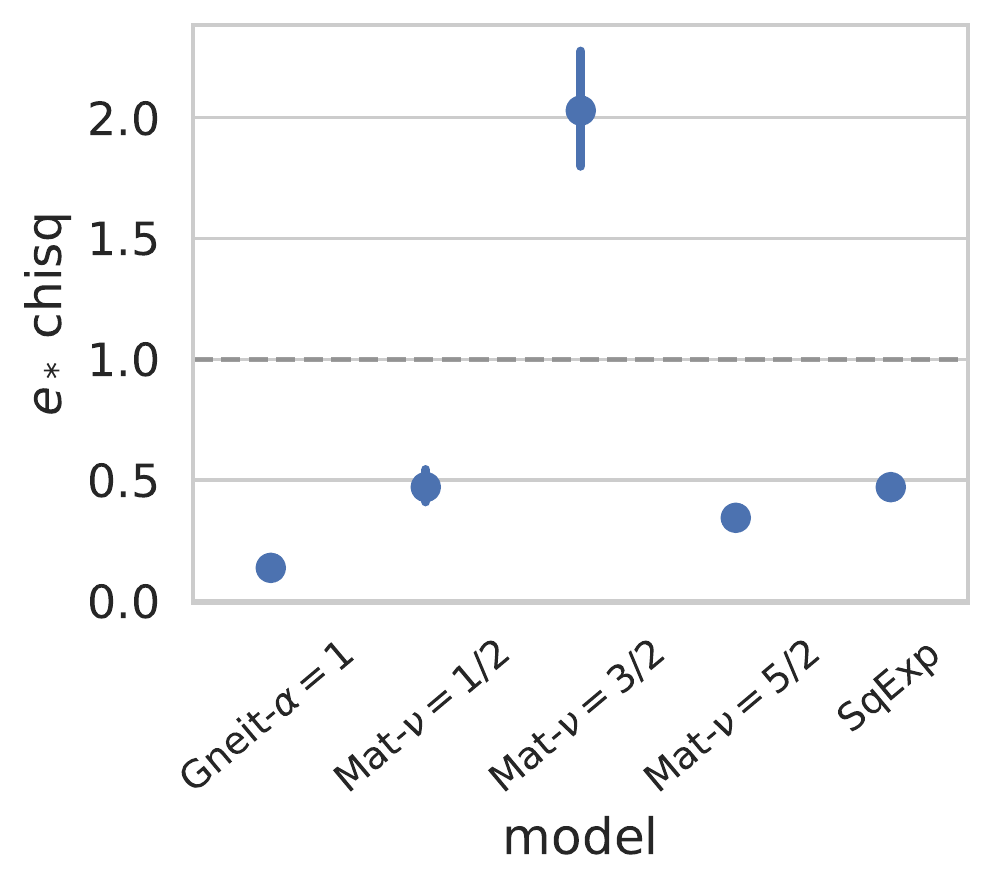}
\caption{Test $e$ $\chi^2$}
\end{subfigure}
~
\caption{Predictive summaries across different models and inference schemes. The smoother models have better predictive summary statistics which is consistent with the underlying density field being very smooth.} 
\label{fig:domain-error-comparison}
\end{figure}

\begin{figure}[t!]
\centering
  \begin{minipage}[c]{0.35\textwidth}
    \includegraphics[width=\textwidth]{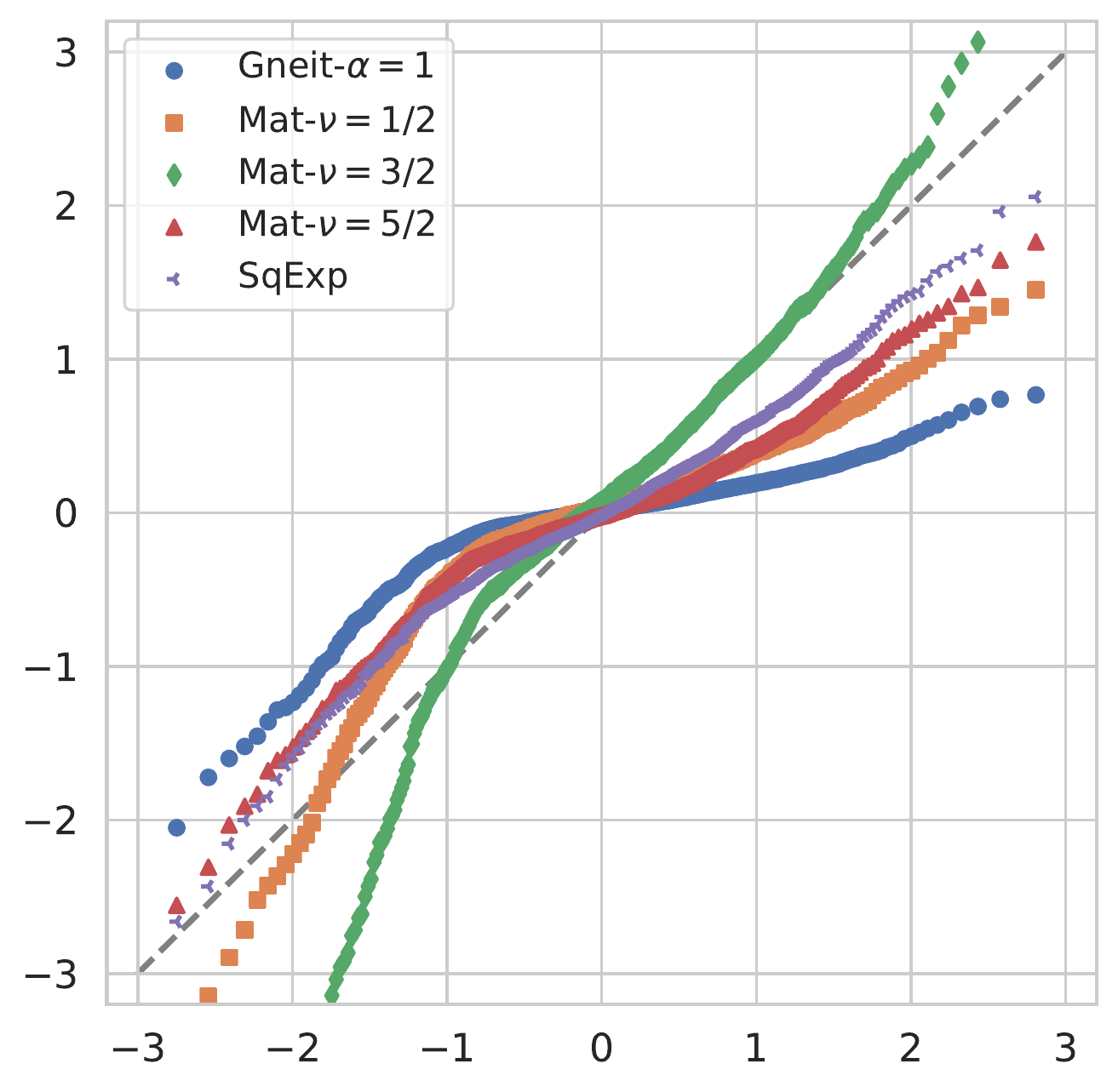}
  \end{minipage}\hfill
  \begin{minipage}[c]{0.6\textwidth}
    \caption{QQ-plot for predicted distributions on 2{,}000 held out
      integrated-$\rho$ test points $e_*$.  Theoretical normal quantiles are on
      the horizontal axis, and predictive $z$-scored quantiles are on the
      vertical axis. Posterior for extinctions $e$ are moderately well
      calibrated, with the squared exponential kernel having the best calibration. 
    } \label{fig:domain-qq}
  \end{minipage}
\end{figure}

\begin{table}[]
    \centering
    \scalebox{.8}{
        \begin{tabular}{lrrrrrr}
\toprule
{} &  Gneit-$\alpha=1$ &  Mat-$\nu=1/2$ &  Mat-$\nu=3/2$ &  Mat-$\nu=5/2$ &  SqExp &  $\mathcal{N}(0, 1)$ \\
$\sigma$ &                   &                &                &                &        &                      \\
\midrule
0.5      &             0.894 &          0.760 &          0.451 &          0.730 &  0.625 &                0.383 \\
1.0      &             0.965 &          0.900 &          0.677 &          0.900 &  0.869 &                0.683 \\
2.0      &             0.997 &          0.969 &          0.878 &          0.991 &  0.986 &                0.955 \\
3.0      &             1.000 &          0.994 &          0.947 &          1.000 &  0.999 &                0.997 \\
\bottomrule
\end{tabular}

    }
    \caption{Coverage fractions at various levels of z-score standard
    deviation $\sigma$ for test extinctions $e_*$. Compared with theoretical
    coverage fractions, we see that our inferences generate predictions that are well calibrated for the smoothest kernels}
    \label{tab:domain-coverage}
\end{table}

\begin{figure}[t!]
\centering
\begin{subfigure}[b]{.32\textwidth}
\centering
\includegraphics[width=\textwidth]{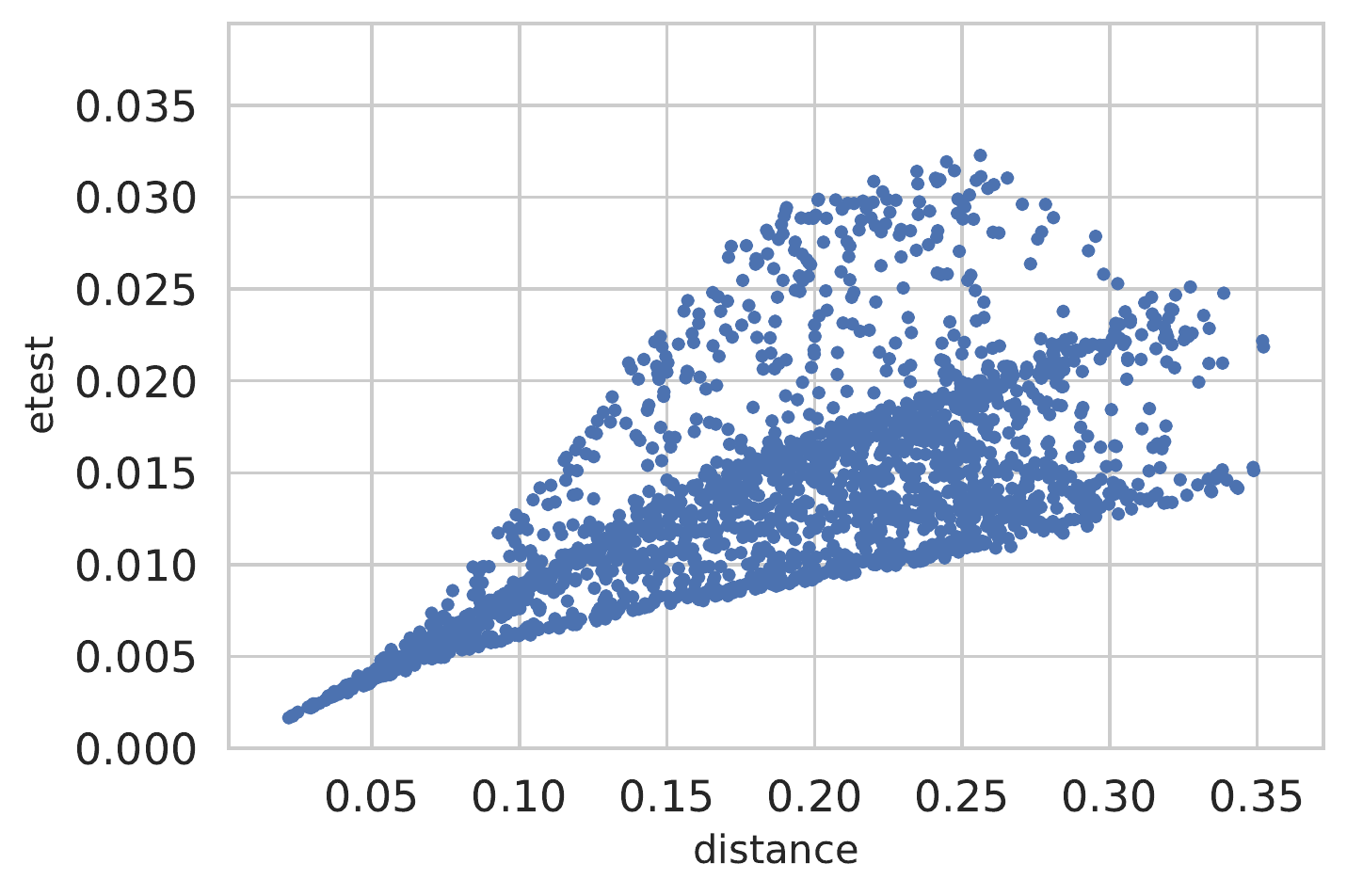}
\caption{Test $e$ values by distance}
\label{fig:domain-e-value-by-distance}
\end{subfigure}
~
\begin{subfigure}[b]{.32\textwidth}
\centering
\includegraphics[width=\textwidth]{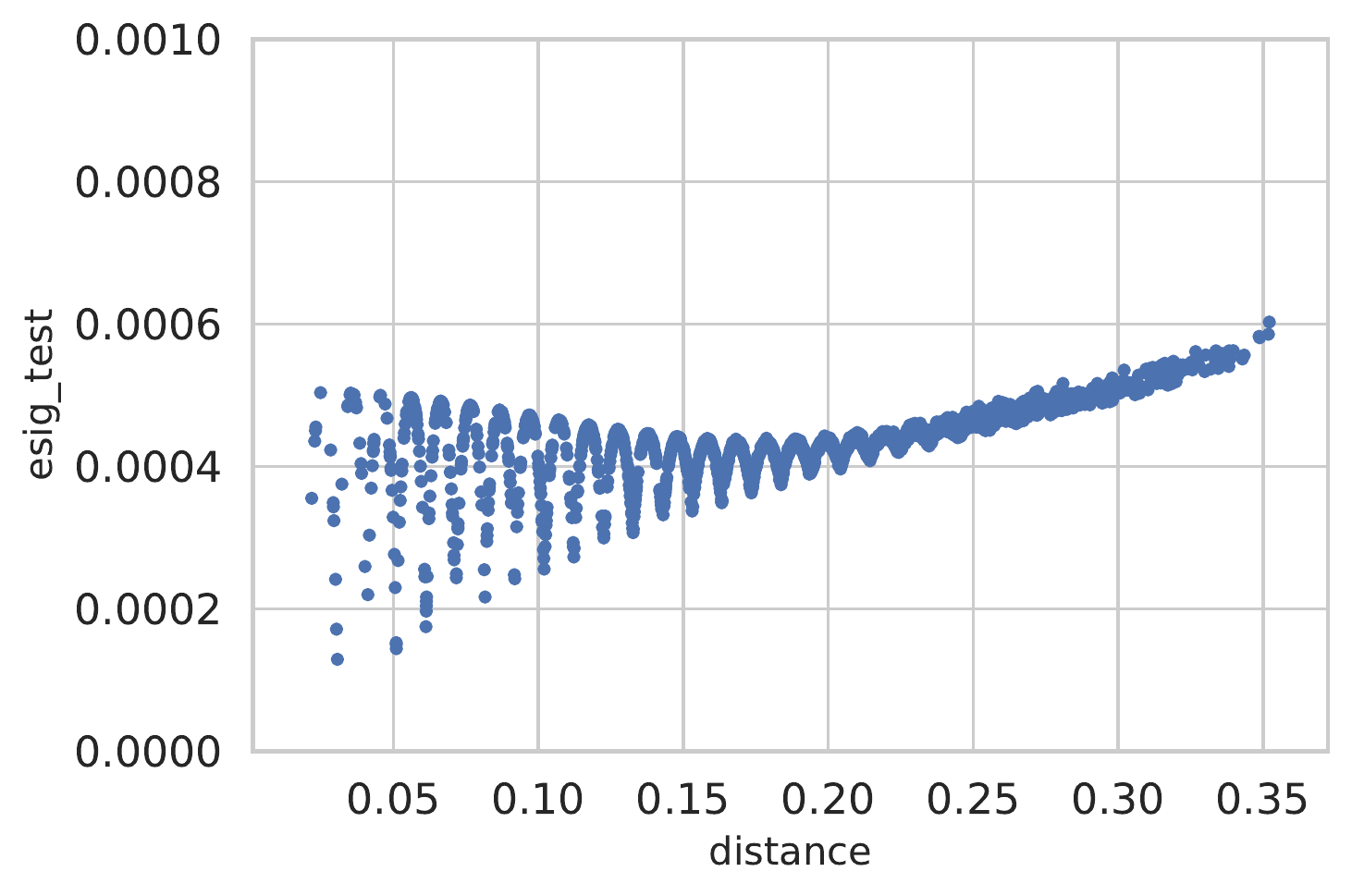}
\caption{Test $e$ posterior standard deviation}
\label{fig:domain-e-posterior-variance-by-distance}
\end{subfigure}
~
\begin{subfigure}[b]{.32\textwidth}
\centering
\includegraphics[width=\textwidth]{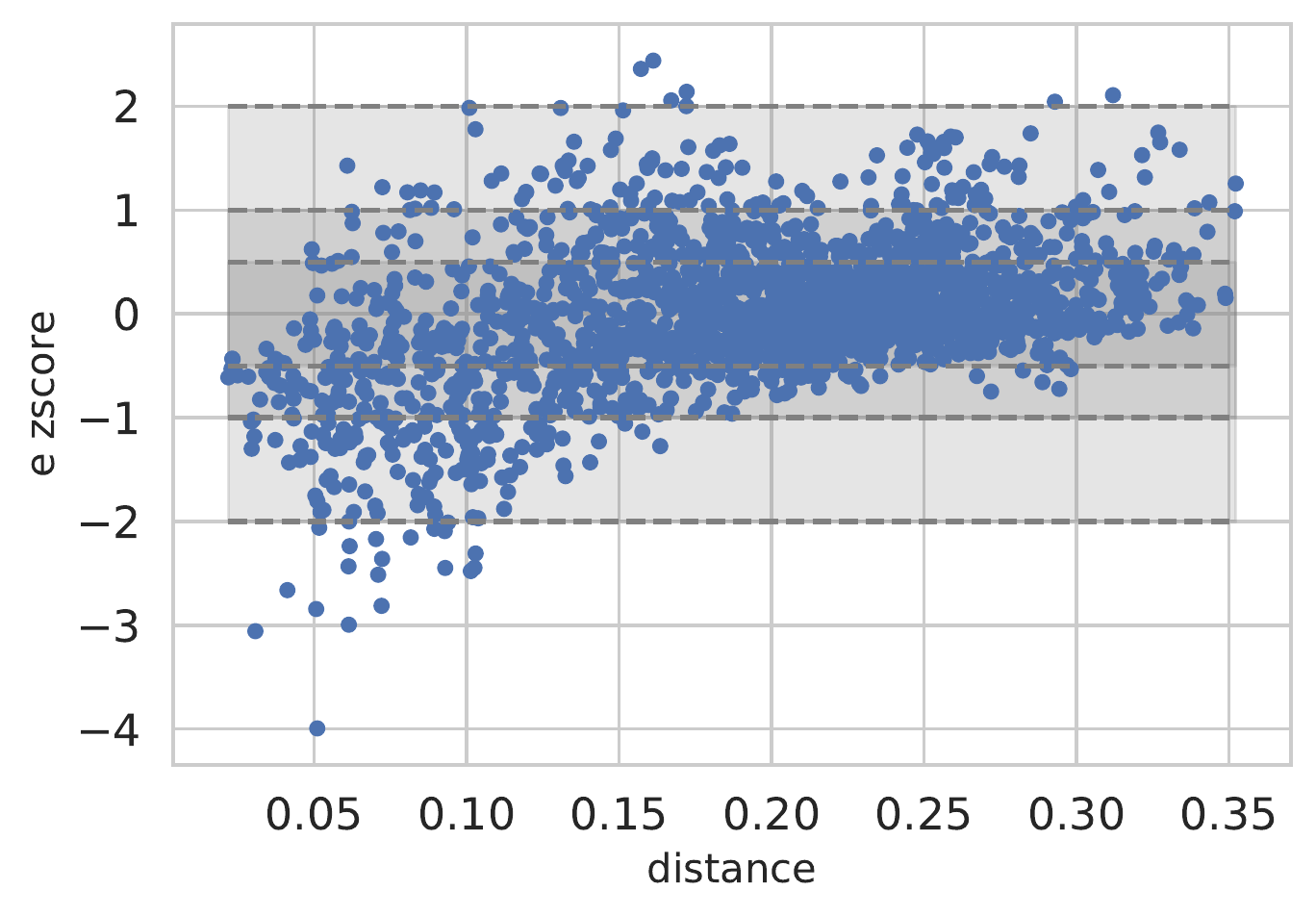}
\caption{Test $e$ z scores by distance}
\label{fig:domain-e-zscore-by-distance}
\end{subfigure}
~
\caption{Posterior summaries as a function of distance to the origin (observation point).  The noise variance for the extinction estimates is $\sigma_n = 0.005$, so Figure a demonstrates our low signal to noise regime, Figure b demonstrates our shrinkage by a factor of 10, and Figure c demonstrates our decent calibration in this very low signal to noise regime. 
}
\label{fig:domain-summary-by-distance}
\end{figure}

\section{Discussion}
\label{sec:conclusion}

Mapping the three-dimensional distribution of dust in the Milky Way is
fundamental to astronomy.  Star dust obscures our observations of star
light, and so an accurate dust map would help us more clearly observe
the universe.  Such a dust map would be a new lens into
the dynamics of the Milky Way, providing a trace of the process by
which it formed.

In this paper, we developed \ziggy, a method to estimate the dust map
from millions of astronomical observations.  \ziggy incorporates noisy
information into a single, coherent spatial model.  We developed
technical innovations to accommodate integrated observations into the
Gaussian process framework, and to scale such a model to millions of
observations without spatial discretization.  We validated these
algorithmic innovations with numerical studies, and measured the
performance of \ziggy in both a synthetic setting and a realistic,
high-fidelity dataset used in astronomical study.  We showed that
\ziggy is able to incorporate millions of observations and recover
accurate and well-calibrated estimates of both stellar extinctions and
pointwise values of the three-dimensional dust map.

There are several avenues for future research and improvement.  \ziggy
assumes that distances are known, when in reality they are noisy.  In
practical applications, we will only include stars with precise
distance measurements, lowering the spatial density of observations,
which may degrade estimates.  We view \ziggy as a step toward a
probabilistic model of photometric measurements that jointly infers
brightnesses, colors, distance, and spatial dust.  Jointly modeling
brightnesses, distances, and a spatial dust map using a Bayesian
hierarchical model will allow us to leverage the coherence of Bayesian
probabilistic modeling to form tighter estimates of all modeled
quantities.

\added[id=AM]{Additionally, a limitation of our treatment 
is the Gaussian error assumption.
Non-Gaussian observation errors could degrade extinction 
estimates --- e.g., forcing $\rho$ to be more complex
(i.e., with a smaller length scale) than it truly is.
A flexible hierarchical Bayesian model could
incorporate a non-Gaussian likelihood that could be more robust to 
heavier tailed observation error. }

A full map that spans the entirety of the Milky Way will require
increasing the capacity of the approximation.  To reconstruct both
global and local features using an inducing point method, we need to
ensure that small length scales (relative to the input domain) can be
resolved. When inducing points are spatially distant from one another,
the inducing point approximation will not have the capacity to
represent sharp changes in $\rho(x)$ at small scales.  This limitation
can be overcome by including more, and closer, inducing points.
However, introducing more inducing points butts up against the
$O(M^3)$ computational limitations.  In this work, we assume we are
able to populate the space with enough inducing points; but the
scaling concern is an important avenue of future work.  Incorporating
more inducing points will be crucial to resolving both global features
and fine local features within the Milky Way.

The ultimate goal of this line of work is to produce an accurate
catalog of the properties of stars, such as probabilistic brightnesses
and distances, and the focus of scale is motivated by the size of
modern photometric catalogs.  PAN-STARRS 1 catalog includes 2.4
million detected stars \citep{flewelling2016panstarrs}, the fifteenth
data release of the Sloan Digital Sky Survey includes photometry for
over 260 million detected stars \citep{aguado2019fifteenth}, and the
second GAIA data release has 1.3 billion stars with parallax
measurements that can be used to estimate the interstellar dust
distribution \citep{brown2018gaia}.  As a final avenue of research, we
will scale \ziggy to incorporate billions of observations in a much
larger spatial domain.

\bibliographystyle{apa}
\bibliography{refs.bib}

\begin{thebibliography}{}

\bibitem[\protect\astroncite{Aghanim~et al.}{2018}]{2018planck}
Aghanim~et al., N. (2018).
\newblock {Planck 2018 results. VI. Cosmological parameters}.
\newblock {\em arXiv e-prints}, page arXiv:1807.06209.

\bibitem[\protect\astroncite{Aguado et~al.}{2019}]{aguado2019fifteenth}
Aguado, D.~S., Ahumada, R., Almeida, A., Anderson, S.~F., Andrews, B.~H.,
  Anguiano, B., Ort{\'\i}z, E.~A., Arag{\'o}n-Salamanca, A.,
  Argudo-Fern{\'a}ndez, M., Aubert, M., et~al. (2019).
\newblock The fifteenth data release of the sloan digital sky surveys: first
  release of manga-derived quantities, data visualization tools, and stellar
  library.
\newblock {\em The Astrophysical Journal Supplement Series}, 240(2):23.

\bibitem[\protect\astroncite{Amari}{1982}]{amari1982differential}
Amari, S.-I. (1982).
\newblock Differential geometry of curved exponential families-curvatures and
  information loss.
\newblock {\em The Annals of Statistics}, pages 357--385.

\bibitem[\protect\astroncite{Amari}{1998}]{amari1998natural}
Amari, S.-I. (1998).
\newblock Natural gradient works efficiently in learning.
\newblock {\em Neural computation}, 10(2):251--276.

\bibitem[\protect\astroncite{Ambikasaran et~al.}{2014}]{ambikasaran2014fast}
Ambikasaran, S., Foreman-Mackey, D., Greengard, L., Hogg, D.~W., and O'Neil, M.
  (2014).
\newblock {Fast direct methods for Gaussian processes}.
\newblock {\em arXiv preprint arXiv:1403.6015}.

\bibitem[\protect\astroncite{Banerjee et~al.}{2014}]{banerjee2014hierarchical}
Banerjee, S., Carlin, B.~P., and Gelfand, A.~E. (2014).
\newblock {\em Hierarchical modeling and analysis for spatial data}.
\newblock Chapman and Hall/CRC.

\bibitem[\protect\astroncite{Bernardo et~al.}{2003}]{bernardo2003non}
Bernardo, J., Bayarri, M., Berger, J., Dawid, A., Heckerman, D., Smith, A., and
  West, M. (2003).
\newblock Non-centered parameterisations for hierarchical models and data
  augmentation.
\newblock In {\em Bayesian Statistics 7: Proceedings of the Seventh Valencia
  International Meeting}, volume 307. Oxford University Press, USA.

\bibitem[\protect\astroncite{Blei et~al.}{2017}]{blei2017variational}
Blei, D.~M., Kucukelbir, A., and McAuliffe, J.~D. (2017).
\newblock Variational inference: A review for statisticians.
\newblock {\em Journal of the American Statistical Association},
  112(518):859--877.

\bibitem[\protect\astroncite{Brown et~al.}{2018}]{brown2018gaia}
Brown, A., Vallenari, A., Prusti, T., De~Bruijne, J., Babusiaux, C.,
  Bailer-Jones, C., Biermann, M., Evans, D.~W., Eyer, L., Jansen, F., et~al.
  (2018).
\newblock Gaia data release 2-summary of the contents and survey properties.
\newblock {\em Astronomy \& astrophysics}, 616:A1.

\bibitem[\protect\astroncite{Burt et~al.}{2019}]{burt2019rates}
Burt, D.~R., Rasmussen, C.~E., and van~der Wilk, M. (2019).
\newblock {Rates of Convergence for Sparse Variational Gaussian Process
  Regression}.
\newblock {\em arXiv preprint arXiv:1903.03571}.

\bibitem[\protect\astroncite{{Chen} et~al.}{2019}]{bc2019spiral}
{Chen}, B.-Q., {Huang}, Y., {Hou}, L.-G., {Tian}, H., {Li}, G.-X., {Yuan},
  H.-B., {Wang}, H.-F., {Wang}, C., {Tian}, Z.-J., and {Liu}, X.-W. (2019).
\newblock {The Galactic spiral structure as revealed by O- and early B-type
  stars}.
\newblock {\em Monthly Notices of the Royal Astronomical Society},
  487:1400--1409.

\bibitem[\protect\astroncite{Cressie}{1990}]{cressie1990origins}
Cressie, N. (1990).
\newblock The origins of kriging.
\newblock {\em Mathematical geology}, 22(3):239--252.

\bibitem[\protect\astroncite{Cressie}{1992}]{cressie1992statistics}
Cressie, N. (1992).
\newblock Statistics for spatial data.
\newblock {\em Terra Nova}, 4(5):613--617.

\bibitem[\protect\astroncite{Datta et~al.}{2016}]{datta2016hierarchical}
Datta, A., Banerjee, S., Finley, A.~O., and Gelfand, A.~E. (2016).
\newblock {Hierarchical nearest-neighbor Gaussian process models for large
  geostatistical datasets}.
\newblock {\em Journal of the American Statistical Association},
  111(514):800--812.

\bibitem[\protect\astroncite{Efron}{2019}]{efron2019bayes}
Efron, B. (2019).
\newblock Bayes, oracle bayes and empirical bayes.
\newblock {\em Statistical science}, 34(2):177--201.

\bibitem[\protect\astroncite{Efron et~al.}{2008}]{efron2008microarrays}
Efron, B. et~al. (2008).
\newblock Microarrays, empirical bayes and the two-groups model.
\newblock {\em Statistical science}, 23(1):1--22.

\bibitem[\protect\astroncite{Fayed and Atiya}{2014}]{fayed2014evaluation}
Fayed, H. and Atiya, A. (2014).
\newblock An evaluation of the integral of the product of the error function
  and the normal probability density with application to the bivariate normal
  integral.
\newblock {\em Mathematics of Computation}, 83(285):235--250.

\bibitem[\protect\astroncite{{Flewelling}}{2016}]{flewelling2016panstarrs}
{Flewelling}, H. (2016).
\newblock {The Pan-STARRS 1 Medium Deep Field Variable Star Catalog}.
\newblock In {\em American Astronomical Society Meeting Abstracts \#227},
  volume 227 of {\em American Astronomical Society Meeting Abstracts}, page
  144.25.

\bibitem[\protect\astroncite{Foreman-Mackey et~al.}{2017}]{foreman2017fast}
Foreman-Mackey, D., Agol, E., Ambikasaran, S., and Angus, R. (2017).
\newblock {Fast and scalable Gaussian process modeling with applications to
  astronomical time series}.
\newblock {\em The Astronomical Journal}, 154(6):220.

\bibitem[\protect\astroncite{Genton}{2001}]{genton2001classes}
Genton, M.~G. (2001).
\newblock Classes of kernels for machine learning: a statistics perspective.
\newblock {\em Journal of machine learning research}, 2(Dec):299--312.

\bibitem[\protect\astroncite{{Georgelin} and {Georgelin}}{1976}]{yg1976spiral}
{Georgelin}, Y.~M. and {Georgelin}, Y.~P. (1976).
\newblock {The spiral structure of our Galaxy determined from H II regions}.
\newblock {\em Astronomy \& Astrophysics}, 49:57--79.

\bibitem[\protect\astroncite{Gneiting}{2002}]{gneiting2002compactly}
Gneiting, T. (2002).
\newblock Compactly supported correlation functions.
\newblock {\em Journal of Multivariate Analysis}, 83(2):493--508.

\bibitem[\protect\astroncite{Green et~al.}{2018}]{green2018galactic}
Green, G.~M., Schlafly, E.~F., Finkbeiner, D., Rix, H.-W., Martin, N., Burgett,
  W., Draper, P.~W., Flewelling, H., Hodapp, K., Kaiser, N., et~al. (2018).
\newblock Galactic reddening in 3d from stellar photometry--an improved map.
\newblock {\em Monthly Notices of the Royal Astronomical Society},
  478(1):651--666.

\bibitem[\protect\astroncite{Green et~al.}{2019}]{green20193d}
Green, G.~M., Schlafly, E.~F., Zucker, C., Speagle, J.~S., and Finkbeiner,
  D.~P. (2019).
\newblock A 3d dust map based on gaia, pan-starrs 1 and 2mass.
\newblock {\em arXiv preprint arXiv:1905.02734}.

\bibitem[\protect\astroncite{Heaton et~al.}{2019}]{heaton2019case}
Heaton, M.~J., Datta, A., Finley, A.~O., Furrer, R., Guinness, J., Guhaniyogi,
  R., Gerber, F., Gramacy, R.~B., Hammerling, D., Katzfuss, M., et~al. (2019).
\newblock A case study competition among methods for analyzing large spatial
  data.
\newblock {\em Journal of Agricultural, Biological and Environmental
  Statistics}, 24(3):398--425.

\bibitem[\protect\astroncite{Hensman et~al.}{2013}]{hensman2013gaussian}
Hensman, J., Fusi, N., and Lawrence, N.~D. (2013).
\newblock Gaussian processes for big data.
\newblock In {\em Proceedings of the Twenty-Ninth Conference on Uncertainty in
  Artificial Intelligence}, pages 282--290. AUAI Press.

\bibitem[\protect\astroncite{Hoffman et~al.}{2013}]{hoffman2013stochastic}
Hoffman, M.~D., Blei, D.~M., Wang, C., and Paisley, J. (2013).
\newblock Stochastic variational inference.
\newblock {\em The Journal of Machine Learning Research}, 14(1):1303--1347.

\bibitem[\protect\astroncite{{Hopkins} et~al.}{2018}]{ph2018fire}
{Hopkins}, P.~F., {Wetzel}, A., {Kere{\v{s}}}, D., {Faucher-Gigu{\`e}re},
  C.-A., {Quataert}, E., {Boylan-Kolchin}, M., {Murray}, N., {Hayward}, C.~C.,
  {Garrison-Kimmel}, S., and {Hummels}, C. (2018).
\newblock {FIRE-2 simulations: physics versus numerics in galaxy formation}.
\newblock {\em Monthly Notices of the Royal Astronomical Society},
  480(1):800--863.

\bibitem[\protect\astroncite{Jordan et~al.}{1999}]{jordan1999introduction}
Jordan, M.~I., Ghahramani, Z., Jaakkola, T.~S., and Saul, L.~K. (1999).
\newblock An introduction to variational methods for graphical models.
\newblock {\em Machine learning}, 37(2):183--233.

\bibitem[\protect\astroncite{{Klypin} et~al.}{1999}]{1999klypin}
{Klypin}, A., {Kravtsov}, A.~V., {Valenzuela}, O., and {Prada}, F. (1999).
\newblock {Where Are the Missing Galactic Satellites?}
\newblock {\em The Astrophysical Journal}, 522(1):82--92.

\bibitem[\protect\astroncite{Krige}{1951}]{krige1951statistical}
Krige, D.~G. (1951).
\newblock A statistical approach to some basic mine valuation problems on the
  witwatersrand.
\newblock {\em Journal of the Southern African Institute of Mining and
  Metallurgy}, 52(6):119--139.

\bibitem[\protect\astroncite{Le et~al.}{2013}]{le2013fastfood}
Le, Q., Sarl{\'o}s, T., and Smola, A. (2013).
\newblock Fastfood-computing hilbert space expansions in loglinear time.
\newblock In {\em International Conference on Machine Learning}, pages
  244--252.

\bibitem[\protect\astroncite{Leike and En{\ss}lin}{2019}]{leike2019charting}
Leike, R. and En{\ss}lin, T. (2019).
\newblock Charting nearby dust clouds using gaia data only.
\newblock {\em arXiv preprint arXiv:1901.05971}.

\bibitem[\protect\astroncite{Mat{\'e}rn}{1960}]{matern1960spatial}
Mat{\'e}rn, B. (1960).
\newblock Spatial variation: Stochastic models and their applications to some
  problems in forest surveys and other sampling investigations.
\newblock {\em Meddelanden fr{\aa}n Statens Skogsforskningsinstitut},
  49:1--144.

\bibitem[\protect\astroncite{Matheron}{1963}]{matheron1963principles}
Matheron, G. (1963).
\newblock Principles of geostatistics.
\newblock {\em Economic geology}, 58(8):1246--1266.

\bibitem[\protect\astroncite{Matheron}{1973}]{matheron1973intrinsic}
Matheron, G. (1973).
\newblock The intrinsic random functions and their applications.
\newblock {\em Advances in applied probability}, 5(3):439--468.

\bibitem[\protect\astroncite{Mathis}{1990}]{mathis1990interstellar}
Mathis, J.~S. (1990).
\newblock Interstellar dust and extinction.
\newblock {\em Annual Review of Astronomy and Astrophysics}, 28(1):37--70.

\bibitem[\protect\astroncite{Murray and Adams}{2010}]{murray2010slice}
Murray, I. and Adams, R.~P. (2010).
\newblock {Slice sampling covariance hyperparameters of latent Gaussian
  models}.
\newblock In {\em Advances in neural information processing systems}, pages
  1732--1740.

\bibitem[\protect\astroncite{Qui{\~n}onero-Candela and
  Rasmussen}{2005}]{quinonero2005unifying}
Qui{\~n}onero-Candela, J. and Rasmussen, C.~E. (2005).
\newblock {A unifying view of sparse approximate Gaussian process regression}.
\newblock {\em Journal of Machine Learning Research}, 6(Dec):1939--1959.

\bibitem[\protect\astroncite{Rahimi and Recht}{2008}]{rahimi2008random}
Rahimi, A. and Recht, B. (2008).
\newblock Random features for large-scale kernel machines.
\newblock In {\em Advances in neural information processing systems}, pages
  1177--1184.

\bibitem[\protect\astroncite{Rasmussen and
  Williams}{2006}]{rasmussen2006gaussian}
Rasmussen, C.~E. and Williams, C. K.~I. (2006).
\newblock {\em Gaussian Processes for Machine Learning}.
\newblock The MIT Press.

\bibitem[\protect\astroncite{Rezaei~Kh. et~al.}{2017}]{kh2017inferring}
Rezaei~Kh., S., Bailer-Jones, C., Hanson, R., and Fouesneau, M. (2017).
\newblock Inferring the three-dimensional distribution of dust in the galaxy
  with a non-parametric method-preparing for gaia.
\newblock {\em Astronomy \& Astrophysics}, 598:A125.

\bibitem[\protect\astroncite{Rezaei~Kh. et~al.}{2018}]{sr2018mwsa}
Rezaei~Kh., S., Bailer-Jones, C. A.~L., Hogg, D.~W., and Schultheis, M. (2018).
\newblock {Detection of the Milky Way spiral arms in dust from 3D mapping}.
\newblock {\em Astronomy \& Astrophysics}, 618:A168.

\bibitem[\protect\astroncite{Robbins and Monro}{1951}]{Robbins:1951}
Robbins, H. and Monro, S. (1951).
\newblock A stochastic approximation method.
\newblock {\em The Annals of Mathematical Statistics}, 22(3):400--407.

\bibitem[\protect\astroncite{Rue et~al.}{2009}]{rue2009approximate}
Rue, H., Martino, S., and Chopin, N. (2009).
\newblock {Approximate Bayesian inference for latent Gaussian models by using
  integrated nested Laplace approximations}.
\newblock {\em Journal of the royal statistical society: Series b (statistical
  methodology)}, 71(2):319--392.

\bibitem[\protect\astroncite{{Sanderson} et~al.}{2018}]{sanderson2018ananke}
{Sanderson}, R.~E., {Wetzel}, A., {Loebman}, S., {Sharma}, S., {Hopkins},
  P.~F., {Garrison-Kimmel}, S., {Faucher-Gigu{\`e}re}, C.-A., {Kere{\v{s}}},
  D., and {Quataert}, E. (2018).
\newblock {Synthetic Gaia surveys from the FIRE cosmological simulations of
  Milky-Way-mass galaxies}.
\newblock {\em arXiv e-prints}, page arXiv:1806.10564.

\bibitem[\protect\astroncite{Schlegel et~al.}{1998}]{schlegel1998maps}
Schlegel, D.~J., Finkbeiner, D.~P., and Davis, M. (1998).
\newblock Maps of dust infrared emission for use in estimation of reddening and
  cosmic microwave background radiation foregrounds.
\newblock {\em The Astrophysical Journal}, 500(2):525.

\bibitem[\protect\astroncite{Snelson and Ghahramani}{2006}]{snelson2006sparse}
Snelson, E. and Ghahramani, Z. (2006).
\newblock {Sparse Gaussian processes using pseudo-inputs}.
\newblock In {\em Advances in neural information processing systems}, pages
  1257--1264.

\bibitem[\protect\astroncite{Titsias}{2009}]{titsias2009variational}
Titsias, M. (2009).
\newblock {Variational learning of inducing variables in sparse Gaussian
  processes}.
\newblock In {\em Artificial Intelligence and Statistics}, pages 567--574.

\bibitem[\protect\astroncite{Wainwright et~al.}{2008}]{wainwright2008graphical}
Wainwright, M.~J., Jordan, M.~I., et~al. (2008).
\newblock Graphical models, exponential families, and variational inference.
\newblock {\em Foundations and Trends{\textregistered} in Machine Learning},
  1(1--2):1--305.

\bibitem[\protect\astroncite{Wetzel et~al.}{2016}]{aw2016reconciling}
Wetzel, A.~R., Hopkins, P.~F., hoon Kim, J., Faucher-Gigu{\`{e}}re, C.-A.,
  Kere{\v{s}}, D., and Quataert, E. (2016).
\newblock {Reconciling Dwarf Galaxies with $\Lambda$CMD Cosmology: Simulating a
  Realistic Population of Satellites Around a Milky Way Mass Galaxy}.
\newblock {\em The Astrophysical Journal}, 827(2):L23.

\bibitem[\protect\astroncite{Yang et~al.}{2012}]{yang2012nystrom}
Yang, T., Li, Y.-F., Mahdavi, M., Jin, R., and Zhou, Z.-H. (2012).
\newblock Nystr{\"o}m method vs random fourier features: A theoretical and
  empirical comparison.
\newblock In {\em Advances in neural information processing systems}, pages
  476--484.

\end{thebibliography}

\clearpage
\appendix
\begin{center}
  \LARGE \textbf{Mapping Interstellar Dust with Gaussian
    Processes: Supplementary Material}
\end{center}

\section{Covariance Function Definitions}
\label{app:covariance-function-defs}

We compare three families of covariance functions in this work: (i) squared exponential, (ii) \matern, (ii) and a kernel from Gneiting \citep{gneiting2002compactly}.  Each are parameterized with an amplitude scale $\sigma^2$ and length scale parameter $\ell$.

The squared exponential kernel is defined
\begin{align}
    k(x, y) &= \sigma^2 \exp\left(-\frac{1}{2\ell^2}|x-y|_2^2 \right)
    \label{eq:sqexp-def}
\end{align}

The Matern class of kernels 
\begin{align}
    k(x, y) &= \sigma^2
      \frac{2^{1-\nu}}{\Gamma(\nu)} 
      \left(\sqrt{2\nu} \frac{d}{\ell} \right)^{\nu}
      K_{\nu}\left(\sqrt{2\nu}\frac{d}{\ell} \right)
    \label{eq:matern-def}
\end{align}

And the Gneiting kernel is defined
\begin{align}
    k(\tau) &= \sigma^2
    (1 + \tau^\alpha)^{-3} \left( (1 - \tau) \text{cos}(\tau \cdot \pi) + (1/\pi) \text{sin}(\tau \cdot \pi) \right) 
    \label{eq:gneit-def}
\end{align}
for $\tau = |x - y| / \ell$, $\tau < 1$. 

We graphically depict these covariance functions in Figure~\ref{fig:kernel-comparison}. 

\section{Integrated Covariance Functions}
\label{app:integrated-covariance}

Below is a simple derivation of the semi-integrated covariance function. 

\begin{claim}[Semi-integrated Covariance Function]
\label{claim:semi-integrated}
The covariance between a process value $\rho_i \triangleq \rho(x_i)$
and an integrated value $e_j \triangleq \int_{x \in R_j} \rho(x)dx$ takes
the form 
\begin{align}
  \text{Cov}(\rho_i, e_j) &= \int_{x \in R_j} k(x_i, x) dx \\
  &= k^{(semi)}(x_i, x_j) \, .
  \label{eq:k-semi-def}
\end{align}
For consistency, we will write the integrated argument second.
\end{claim}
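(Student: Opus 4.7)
The plan is to prove the claim directly from the definition of covariance, combined with an interchange of integration and expectation. Because the prior mean is zero, $\mathbb{E}[\rho(x)] = 0$ for every $x$, and hence $\mathbb{E}[e_j] = 0$ as well. So the covariance reduces to
\begin{align*}
\text{Cov}(\rho_i, e_j) \;=\; \mathbb{E}\!\left[\rho(x_i)\,e_j\right] \;=\; \mathbb{E}\!\left[\rho(x_i)\int_{x\in R_j}\rho(x)\,dx\right].
\end{align*}

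The key step is then to push the (deterministic) Lebesgue integral over $R_j$ outside the expectation,
\begin{align*}
\mathbb{E}\!\left[\rho(x_i)\int_{x\in R_j}\rho(x)\,dx\right] \;=\; \int_{x\in R_j}\mathbb{E}\!\left[\rho(x_i)\,\rho(x)\right]dx \;=\; \int_{x\in R_j} k(x_i,x)\,dx,
\end{align*}
which is exactly the claimed expression. The last equality uses the definition of the GP covariance function, $\mathbb{E}[\rho(x_i)\rho(x)]=k(x_i,x)$ given the zero-mean prior.

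The main obstacle, such as it is, lies in justifying the swap of expectation and integral. I would invoke Fubini/Tonelli: it suffices that $\mathbb{E}\int_{R_j}|\rho(x_i)\rho(x)|\,dx < \infty$. Since $R_j$ is a bounded line segment (a compact subset of $\mathbb{R}^D$), and since for any stationary, continuous covariance function $k$ the map $x\mapsto \mathbb{E}|\rho(x_i)\rho(x)|\le \sqrt{k(x_i,x_i)k(x,x)}$ is uniformly bounded on $R_j$ by Cauchy–Schwarz, the integrand is absolutely integrable and the swap is legitimate. This same argument implies that the stochastic integral $e_j$ is well-defined in the mean-square sense, which also justifies reading $e_j$ as a bona fide random variable in the first place.

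The doubly-integrated case (Claim~\ref{claim:doubly-integrated}) would then follow by essentially the same argument applied twice: substitute the integral definitions of both $e_i$ and $e_j$, apply Fubini on the product region $R_i\times R_j$ (again finite measure, and the covariance is bounded on this compact set), and collapse the expectation of $\rho(x)\rho(x')$ into $k(x,x')$.
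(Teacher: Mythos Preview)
Your proof is correct and follows essentially the same approach as the paper: reduce to $\mathbb{E}[\rho_i e_j]$ via the zero-mean assumption, then swap the spatial integral and the expectation by Fubini to obtain $\int_{R_j} k(x_i,x)\,dx$. If anything, your Cauchy--Schwarz justification for the Fubini step is slightly more explicit than the paper's appeal to positivity and integrability of the covariance function over the compact ray.
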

\begin{proof}
Without loss of generality consider a mean zero process $\rho$. 
For positive and integrable covariance functions $k(\cdot,\cdot)$ (e.g. the 
Mat\'ern and squared exponential) the double integral will be finite.
By Fubini's theorem we can reverse the order of integration, yielding
\begin{align*}
\text{Cov}(\rho_i, e_j) 
&= \mathbb{E}\left[\rho_i \int_{x \in R_j} \rho(x)dx \right] \\
&= \mathbb{E}\left[ \int_{x\in R_j} \rho_i \rho(x)dx \right] \\
&= \int_{x \in R_j} \mathbb{E}\left[ \rho_i \rho(x) \right] dx \\
&= \int_{x \in R_j} \text{Cov}(\rho_i, \rho(x)) dx 
\, = \int_{x \in R_j} k(x_i, x) dx \,.
\end{align*}
\end{proof}

\begin{figure}[t]
    \centering
    \includegraphics[width=.5\textwidth]{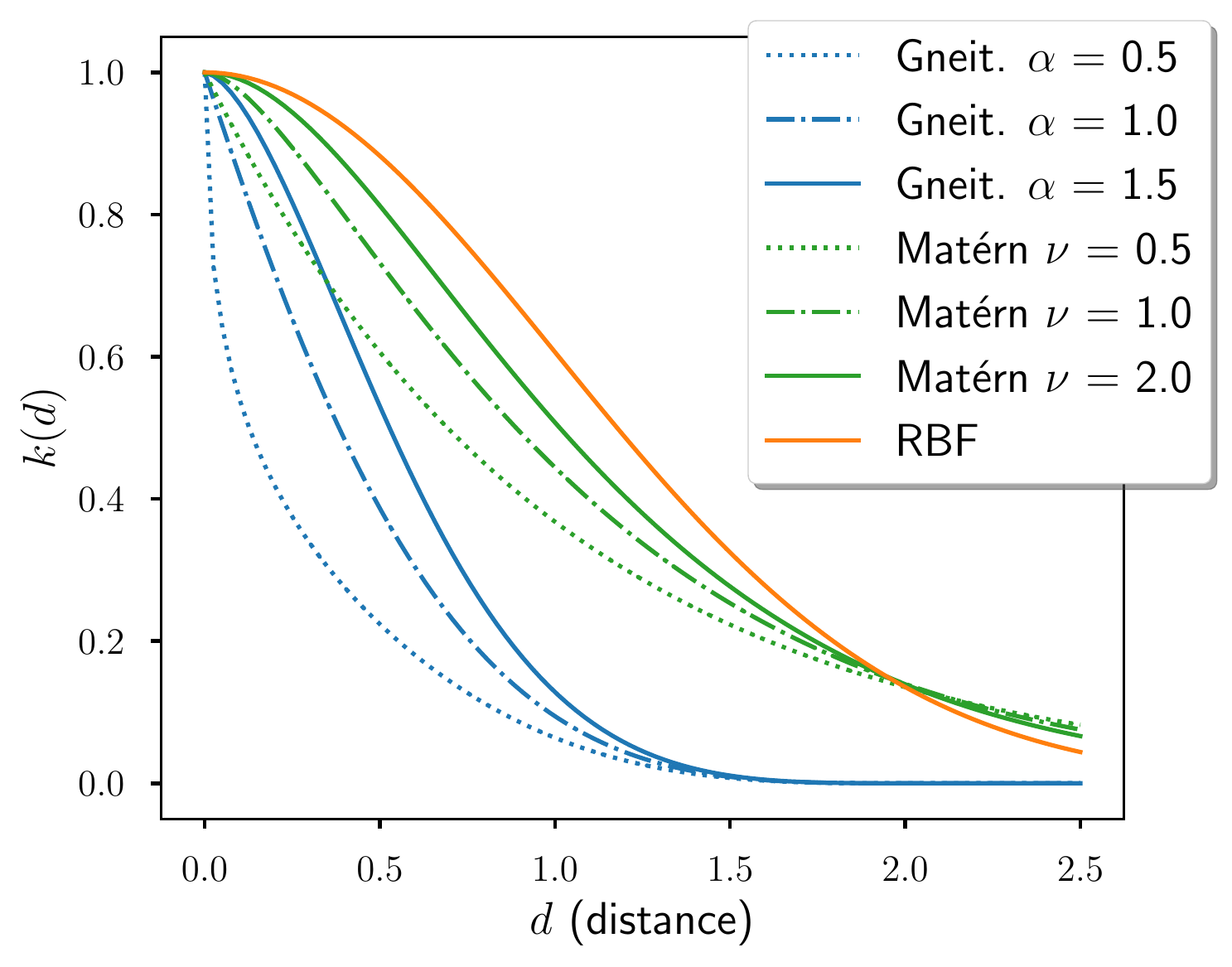}
    \caption{Correlation function for squared exponential, \matern, and Gneiting kernels.  Here the squared exponential and \matern~kernels have $\sigma^2=1$ and $\ell=1$; the Gneiting kernel has $\ell=2$.  Note that the Gneiting kernel goes to zero above $\ell$, while the other two kernels.  Further, the steepness of the function at 0 determines how smooth the GP sample paths will be --- the squared exponential kernel is the smoothest, while the Gneiting $\alpha=.5$ is the most jagged.}
    \label{fig:kernel-comparison}
\end{figure}

\subsection{Semi-integrated Squared exponential kernel}
The \emph{squared exponential} covariance function is commonly used
for Gaussian process regression \citep{rasmussen2006gaussian}.  For
$D$-dimensional inputs $x_i, x_j$, it is
\begin{align}
k_{sqe}(x_i, x_j)
  &= \sigma^2 \exp\left( -\frac{1}{2} (x_i - x_j)^\intercal
     L^{-1} (x_i - x_j) \right), \label{eq:sqe-kernel}
\end{align}
where $\btheta = (L, \sigma^2)$ are the $D \times D$-size length scale
matrix $L$ and the process marginal variance $\sigma^2$.  For an
isotropic Gaussian process, the length scale matrix $L$ is
$\ell \cdot I_D$ for a scalar length scale $\ell$.  The length scale
controls the smoothness of the function; smaller length scales create
more wiggly functions and longer length scales create more
slowly-changing functions.  This kernel puts support on functions
with infinitely differentiable sample paths.

The squared exponential kernel admits an analytically tractable form
for the semi-integrated kernel
\begin{align}
  k_{\mathrm{sqe}}^{(\mathrm{semi})}(x_i, x_j)
    &= \int_{[0,1]} k_{\mathrm{sqe}}(x_i, \alpha\cdot x_j) ||x_j||_2 d\alpha \\
    &= \frac{1}{2} ||x_j||_2 \sigma^2 \sqrt{\frac{2\pi}{a}}
      \exp\left(\frac{b^2}{2a} - \frac{c}{2}\right)
      \left[
        \mathrm{erf} \left(\frac{1-b/a}{\sqrt{2/a}} \right) -
        \mathrm{erf} \left( \frac{-b/a}{\sqrt{2/a}} \right)
      \right]
    \label{eq:sqe-semi}
\end{align}
where $\text{erf}(\cdot)$ is the error function, and $a$, $b$, and $c$ are defined
\begin{align*}
  a = x_j^\intercal L^{-1} x_j \,,\quad\quad
  b = x_i^\intercal L^{-1} x_j \,,\quad\quad
  c = x_i^\intercal L^{-1} x_i
\end{align*}
and the origin, or observer, is at 0.

The doubly-integrated squared exponential covariance function is not
as straightforward to express.  Deriving the doubly-integrated squared
exponential covariance function requires integrating the expression in
\Cref{eq:sqe-semi}.  Integrating the error function multiplied by a
quadratic requires either numerical approximation or a more
sophisticated approach \citep{fayed2014evaluation}.  We note that,
while it is straightforward to analytically derive the integral of the
squared exponential kernel over the \emph{unbounded region}
$[-\infty, \infty ]^2$ (e.g.~to compute a Gaussian normalizing
constant), it is not straightforward to integrate over two-dimensional
compact sets, e.g.~$[0, x_i]\times[0,x_j]$ --- this is exactly what is
required to compute the doubly-integrated kernel.

\begin{algorithm}[t]
  \BlankLine
  \KwData{
    data set $\mathcal{D} \triangleq \{ x_n, \sigma_n, a_n
    \}_{n=1}^N$,
    test location $x_*$,
    covariance functions: standard $k(\cdot, \cdot)$,
      semi-integrated $k^{(\textrm{semi})}(\cdot, \cdot)$,
      and doubly-integrated    $k^{(\textrm{double})}(\cdot, \cdot)$
  }
  \BlankLine
  \KwResult{
    $\mu_*$, $\sigma_*^2$ for posterior
    $p(\rho(x_*) | \mathcal{D}) = \mathcal{N}(\mu_*, \sigma_*^2)$
  }
  \BlankLine
$(\bK_{\be})_{i,j} \leftarrow k^{(\textrm{double})}( x_i, x_j )$
  for $i,j = 1, \dots, N$ \tcp*{doubly-integrated Gram matrix}
$(\bK_{*, \be})_{*, j} \leftarrow k^{(\textrm{semi})}(x_*, x_j)$
  for $j = 1, \dots, N$ \tcp*{semi-integrated cross covariance}
$K_{*,*} \leftarrow k(x_*, x_*)$ 
  \tcp*{prior variance from standard covariance function}
$\Sigma \leftarrow \text{diag}(\sigma_1^2, \dots, \sigma_N^2)$ \\
$\mu_* \leftarrow \bK_{*, \be}(\bK_{\be} + \Sigma)^{-1} \ba$ \\
$\sigma_*^2 \leftarrow K_{*} - \bK_{*,\be} (\bK_{\be} + \Sigma)^{-1}
    \bK_{*,\be}^\intercal$ \\
\Return{$\mu_*,\sigma_*^2$}
\caption{Exact Gaussian process inference with noisy integrated
observations.}
\label{alg:integrated-gp-inference}
\end{algorithm}

\section{Stochastic Natural Gradients for SVGPs}
\label{sec:natural-gradients}
Following \citet{hensman2013gaussian} and \citet{hoffman2013stochastic},
the natural gradient for
variational parameters $\blambda = \bm_{\blambda}, \bS_{\blambda}$ is
straightforward to express using the natural parameterization of the
multivariate normal $q_{\blambda}$
\begin{align}
        \etab_1 = \bS_{\blambda}^{-1} \bm_{\blambda} \,\,\, , \quad
        \etab_2 = -\frac{1}{2} \bS_{\lambda}^{-1} \,\,\, .
\end{align}
The natural gradient of $\mathcal{L}$ with respect to $\etab_1$ and $\etab_2$ can be written
\begin{align}
        \tilde{\nabla}_{\etab_2}\mathcal{L} &= -\frac{1}{2}\Lambda + \etab_2 \\
        \tilde{\nabla}_{\etab_1}\mathcal{L} &= -\frac{1}{2} \bK_{\bu,\bu}^{-1} \tilde{\bK}_{\bu,\brho} \by  + \etab_1
        \label{eq:natgrad}
\end{align}
where we define
\begin{align}
  \Lambda &= \bK_{\bu,\bu}^{-1} \tilde{\bK}_{\bu,\brho} \tilde{\bK}_{\brho,\bu} \bK_{\bu,\bu}^{-1} + \bK_{\bu,\bu}^{-1} \\
  &= \sum_{n} \bK_{\bu,\bu}^{-1} \left( \tilde{\bK}_{\bu,n} \tilde{\bK}_{n,\bu} \right) \bK_{\bu,\bu}^{-1} + \bK_{\bu,\bu}^{-1} \,,
  \label{eq:cov-grad}
\end{align}
and
\begin{align}
  \tilde{\bK}_{\brho,\bu} &\triangleq \Sigma^{-1/2} \bK_{\brho,\bu} \\
  (\bK_{\brho,\bu})_{n,m} &= k(x_n, \bar{x}_m) \\
  \Sigma &\triangleq \text{diag}(\sigma_1^2, \dots, \sigma_N^2) \, .
\end{align}
Note that Equation~\ref{eq:cov-grad} shows that the term $\Lambda$ can be
decomposed into a sum of $N$ data-specific terms.  Further, the gradient
for $\etab_1$ includes the term $\tilde{\bK}_{\bu,\brho}\by = \sum_{n}
\tilde{\bK}_{\bu,n} y_n $ that also decomposes into a sum over the $N$
data-specific terms.  These two decompositions enable us to compute
unbiased mini-batched natural gradients and optimize the ELBO using
stochastic gradient ascent.
An additional benefit of using the natural gradient is that a stochastic
gradient update to $\etab_2$ will stay in the positive semi-definite cone.
This allows us to directly parameterize $\etab_2$ and not its Cholesky
factor or symmetric square root, simplifying our implementation.

\section{Whitened Parameterization Gradients}
\label{sec:whitened-gradients}
The relationship between the whitened model over $\bz$ and the standard
model over $\bu$ implies the following relationship between the
Gaussian variational parameters
\begin{align}
    \bm &= \bL \tilde{\bm} \\
    \bS &= \bL \tilde{\bS} \bL^\intercal.
    \label{eq:white-to-standard}
\end{align}

Similarly, the natural parameters have the relationship
\begin{align}
\etab_1 &= \bS^{-1}\bm
    = (\bL^\intercal)^{-1} \tilde{\bS}^{-1} \bL^{-1} \bL \tilde{\bm} = (\bL^\intercal)^{-1} \tilde{\etab}_1 \\
\etab_2 &= -\frac{1}{2}\bS^{-1}
     = -\frac{1}{2} (\bL^\intercal)^{-1} \tilde{\bS}^{-1} \bL^{-1}
     = (\bL^\intercal)^{-1} \tilde{\etab}_2 \bL^{-1}.
\end{align}

The natural gradient is invariant to parameterization
\citep{amari1982differential}, which implies that the whitened natural
gradient transformation can be derived by multiplying the standard update
by $\bL^\intercal$
\begin{align}
    \etab_1^{(t+1)} &= \etab_1^{(t)} + (\bmu - \etab_1^{(t)}) & \text{ natural gradient updates }\\
    \etab_2^{(t+1)} &= \etab_2^{(t)} + (-\frac{1}{2}\Lambda - \etab_1^{(t)}) \\
    \implies
    \tilde{\etab}_1^{(t+1)} &= \tilde{\etab}_1^{(t)} + (\bL^\intercal \bmu - \tilde{\etab}_1^{(t)}) & \text{ whitened natural gradient updates }\\
    \tilde{\etab}_2^{(t+1)} &= \tilde{\etab}_2^{(t)} + (-\frac{1}{2}
                              \bL^\intercal \Lambda \bL -
                              \tilde{\etab}_2^{(t)}).
\end{align}
This yields a natural interpretation --- the whitened sufficient
statistics must undo the inverse covariance structure from the prior
before updating the variational parameters.\footnote{Note that the
  natural parameters are in ``information form'' --- pre- and
  post-multiplying by the covariance Cholesky will whiten a precision
  matrix.}

When it comes time to predict, we will simply transform the whitened
parameters into standard parameters using
Equation~\ref{eq:white-to-standard}, and proceed with the usual
multivariate Gaussian conditioning.  In
Section~\ref{sec:whitening-comparison} we show that optimizing in the
whitened space can yield better tuned covariance function parameters.  This
empirical evidence supports the practical conclusion that jointly tuning
the variational approximation over $\bz$ and $\btheta$ is an easier
optimization problem than jointly tuning parameters over $\bu$ and
$\btheta$ due to the stronger dependence between $\bu$ and $\btheta$.

\section{Scaling Integrated Observation GPs: Algorithm Details}

\subsection{Monte Carlo estimators for semi-integrated kernels}
\label{sec:mc-semi-estimators}

For tractable semi-integrated covariance functions (e.g.~the squared
exponential kernel) we can easily substitute a closed form computation for
$\bK_{\brho,\bu}$ in Equation~\ref{eq:natgrad}.  However, when the
semi-integrated covariance term is intractable, directly using natural
gradient updates becomes complicated.  One approach is to use numeric
integration --- the semi-integrated covariance is a one-dimensional
integral that can be approximated with quadrature techniques.  However,
when we have $N$ and $M$ very large, solving $N \times M$ numerical
integration problems to populate the cross covariance matrix can be too
expensive to be practical.  

Stochastic natural gradient updates employ computationally cheap estimates
of the true gradient to solve the original optimization problem.  The
operative question is, how \emph{good} do these estimates have to be to
effectively find the optimal solution? 

Similarly, numeric integration approximations to $k^{(semi)}(\cdot, \cdot)$
are precise to nearly machine precision, but are computationally expensive.
Analogously we can ask, how \emph{good} do the semi-integrated covariance
approximations have to be to effectively find the optimal solution?

Pursuing this idea, we propose using a Monte Carlo approximation to the
semi-integrated covariance values --- we sample uniformly along the
integral path and average the original covariance values.  More formally,
we introduce a uniform random variable $\nu_{R_n}$ that takes values along
the ray $R_n$ from the origin to $x_n$. %
We can now write the semi-integrated covariance as  
\begin{align}
    k^{(semi)}(x_m, x_n) &= \int_{x \in R_n} k(x_m, x)dx \\
    &= |R_n| \int_{\nu \in R_n} \underbrace{\frac{1}{|R_n|}}_{= p(\nu)} k(x_m, x)dx \\
    &= |R_n| \, \mathbb{E}_{\nu}\left[ k(x_m, \nu) \right] \, . 
\end{align}
This form admits a simple Monte Carlo estimator
\begin{align}
    \nu^{(1)}, \dots, \nu^{(L)} &\sim Unif(R_n) \\
    \hat{k}^{(semi)}(x_m, x_n) &= \frac{|R_n|}{L}\sum_{\ell} k(x_m, \nu^{(\ell)}) \,\, .
    \label{eq:mc-semi-appendix}
\end{align}
It is straightforward to show that Equation~\ref{eq:mc-semi-appendix} is an
unbiased and consistent estimator for the true semi-integrated covariance
value evaluated at $x_m$ and $x_n$.

\parhead{Uniformly distributed grids}
The unbiased Monte Carlo estimator $\hat{k}^{(semi)}$ will have some
variance about the mean.  The quality of an unbiased estimator is often
measured by how low its variance is --- the more precise the better.
Independent and identically distributed samples $\nu^{(\ell)}$ will yield
the correct expectation, but may be lower variance than a more clever
sampling scheme.  For our application, we uniformly sample along the ray
that traces from the origin $O$ to the observation location $x_n$.
Intuitively, drawing two uniform random variables that are very close to
one another will carry redundant information (for smooth functions
$k(\cdot, \cdot)$).  Spreading our samples out to cover more of the length
of the ray may yield more precise estimates, provided that we can construct
a sample that yields the correct expectation.

One inexpensive way to generate a set of identically distributed (but not
independent samples) is to lay down a random grid of points.  Consider the
sampling procedure for $L$ samples along the unit interval
\begin{align}
    \nu^{(1)} &\sim Unif(0, 1/L) \\
    \nu^{(\ell)} &= \nu^{(\ell-1)} + \tfrac{1}{L} \quad \text{ for } \ell = 2, \dots, L \, .
\end{align}
This sampling scheme will generate a set of $L$ correlated samples that
constructs a randomly placed grid within the unit interval.  If we randomly
permute the variables $\nu_1, \dots, \nu_L$ (or, equivalently, randomly
select one of them) then the marginal distribution of the random variable
will be $Unif(0, 1)$ --- this ensures that the expectation of the estimator
in Equation~\ref{eq:mc-semi} will remain unbiased.  However, because the
samples are anti-correlated, the variance of the estimator can be
substantially reduced. 

\parhead{Biased gradient estimates}
The Monte Carlo estimate of $\hat{k}^{(semi)}(x_m, x_n)$ is unbiased, but
non-linear functions of an unbiased estimator, in general, may jettison
this property.  In our application, the goal is to compute cheap, unbiased
estimates of the gradient of the variational parameters $\bm, \bS$ (or
their information form parameterization $\etab_1, \etab_2$). 

For a single term $n$, the gradient for $\eta_1$ is a linear function of
the semi-integrated covariance vector $\bK_{n,\bu} = \left[ k^{(semi)}(x_n,
x_1), \dots, k^{(semi)}(x_n, x_M) \right]$.\footnote{Note that the order of
the subscripts indicates the vector dimensions, e.g.~$\bK_{n,\bu} \in
\mathbb{R}^{1 \times M}$ and $\bK_{n,\bu}^\intercal = \bK_{\bu,n} \in
\mathbb{R}^{M \times 1}$.} Due to this linearity, plugging in an unbiased
estimator, $\hat{\bK}_{n,\brho}$, for $\bK_{n,\bu}$ will form an unbiased
gradient estimate for $\tilde{\nabla}_{\eta_1} \mathcal{L}$.  However,
plugging in $\hat{\bK}_{n,\bu}$ into the gradient term for $\eta_2$ will
form a non-linear function of an unbiased estimator, resulting in a biased
estimate of $\tilde{\nabla}_{\eta_2}$. 

To see this, consider the vector of unbiased estimates of the
semi-integrated covariance 
\begin{align}
    \hat{\bK}_{n,\brho} &= \left[ \hat{k}^{(semi)}(x_1, x_n), \dots, \hat{k}^{(semi)}(x_m, x_n) \right] \, .
\end{align}

The stochastic component of the gradient for $\eta_2$ can be written 
\begin{align}
    \mathbb{E}\left[ \bK_{\bu,\bu}^{-1} \bK_{\bu,n} \bK_{n,\bu} \bK_{\bu,\bu}^{-1} \right]
    &= \bK_{\bu,\bu}^{-1} \mathbb{E}\left[\bK_{\bu,n} \bK_{n,\bu} \right] \bK_{\bu,\bu}^{-1}
\end{align}
where $\bK_{\bu,\bu}^{-1}$ can be compute exactly, leaving the expectation
of the outer product of the semi-integrated covariance terms
$\mathbb{E}\left[ \bK_{\bu,n} \bK_{n,\bu} \right]$ Given an unbiased
estimator for $\bK_{n,\bu}$ the naive plugin estimate for $\bK_{n,\bu}
\bK_{\bu,n}$ is
\begin{align}
    \hat{O}^{(plugin)} = \hat{\bK}_{\bu,n}\hat{\bK}_{n,\bu} 
    \label{eq:naive-plugin-estimator}
\end{align}
which is \emph{biased} in general
\begin{align}
\mathbb{E}\left[ \hat{\bK}_{\bu,n}\hat{\bK}_{n,\bu} \right ] 
    &\neq \mathbb{E}\left[ \hat{\bK}_{\bu,n} \right] \mathbb{E}\left[\hat{\bK}_{n,\bu}\right] = \bK_{n,\bu} \bK_{n,\bu}^\intercal \, .
\end{align}
Fortunately, we can construct an unbiased estimator for this outer product.
Note that the outer product of an expectation can be expressed as 
\begin{align}
\mathbb{E}\left[\bx \right] \mathbb{E}\left[ \bx \right]^\intercal
  &= \mathbb{E}\left[ \bx \bx^\intercal \right] - \text{Cov}(\bx)
\end{align}
Given an $L$-size sample $\bx_1, \dots, \bx_L$ we can compute unbiased
estimators for $\mu = \mathbb{E}[\bx]$, $M \triangleq \mathbb{E}\left[ \bx
\bx^\intercal \right]$, and $\Sigma \triangleq \text{Cov}(\bx)$
\begin{align}
    \hat{\mu} &= \frac{1}{L} \sum_{\ell} \bx_\ell \\
    \hat{\Sigma} &= \frac{1}{L-1} \sum_{\ell} (\bx_{\ell} - \hat{\mu})(\bx_{\ell} - \hat{\mu})^\intercal \\
    \hat{M} &= \frac{1}{L} \sum_{\ell} \bx_\ell \bx_\ell^{\intercal} 
\end{align}
With these estimators, $\hat{M} - \hat{\Sigma}$ forms an unbiased estimate
for $\mathbb{E}[\bx]\mathbb{E}[\bx]^\intercal$
\begin{align}
    \mathbb{E}\left[ \hat{M} - \hat{\Sigma} \right] &= \mathbb{E}[\bx]\mathbb{E}[\bx]^\intercal \, .
\end{align}
We can use this fact to construct a de-biased estimator for $O$
\begin{align}
    \hat{O}^{(unbiased)} &= \hat{M}\left( \{\hat{\bK}^{(\ell)}_{n,\bu}\}_{\ell=1}^L \right) - \hat{\Sigma}\left( \{\hat{\bK}^{(\ell)}_{n,\bu}\}_{\ell=1}^L \right) \, .
\end{align}
Though $\hat{M} - \hat{\Sigma}$ forms an unbiased and symmetric estimator
of the outer product, this estimator may not be positive semi-definite.  If
this estimator is not positive semi-definite, then a gradient step along
this direction may take the optimization routine outside of the PSD cone.
To ensure safe gradient updates, we will have to meticulously track the
magnitude of the lowest eigenvalues of this matrix. 

Though biased, the naive plugin estimator in
Equation~\ref{eq:naive-plugin-estimator} will always be non-negative
definite, resulting in gradient updates that are guaranteed to remain in
the PSD cone.  Proceeding with this estimator, we must ask, how much bias
is too much bias?  We explore this question empirically in
Section~\ref{sec:synthetic-experiments}, comparing the optimization traces
of algorithms using various sample sizes $L$ to the exact
$k^{(semi)}(\cdot, \cdot)$ values in the squared exponential kernel case.
We see that, though biased, this gradient estimator can closely approximate
the performance of the exact semi-integrated kernel given a modest number
of samples ($\approx 20$).

\begin{algorithm}[t]
\KwData{
$\mathcal{D} \triangleq \{ a_n, x_n, \sigma^2_n \}_{n=1}^N$ (data),
$k^{(\btheta)}(\cdot, \cdot)$ (covariance function);
$\left(\blambda^{(0)}, \btheta^{(0)} \right)$ (initial values);
$k(\cdot, \cdot)$ (covariance function);
$(\Delta_{\blambda}, \Delta_{\btheta})$ (step-sizes);
$B$ (batch size);
$L$ (Monte Carlo semi-integrated estimator sample size);
}
\KwResult{$\blambda, \btheta$ variational and covariance function parameters }
$\hat{k}^{(dd)}(\cdot) \leftarrow \text{Interpolate-Doubly-Diag}(k^{(\btheta_0)}(\cdot, \cdot))$ \tcp*{construct diagonal interpolator}
\For{$t\gets1$ \KwTo $t_{max}$}{
    $\mathcal{D}_{t} \leftarrow \text{Batch}(\mathcal{D}, B)$ \tcp*{mini-batch of data}
    $\blambda^{(t)} \leftarrow \blambda^{(t-1)} + \Delta_{\blambda} \cdot \tilde{\nabla}^{(MC)}_{\blambda}\mathcal{L}(\blambda^{(t-1)}, \btheta^{(t-1)}, \mathcal{D}_t, L)$ \tcp*{MC natural grad update}
    $\btheta^{(t)} \leftarrow \btheta^{(t-1)} + \Delta_{\btheta} \cdot {\nabla}^{(MC)}_{\btheta}\mathcal{L}(\blambda^{(t-1)}, \btheta^{(t-1)}, \mathcal{D}_t, L, \hat{k}^{(dd)}(\cdot))$ \tcp*{MC kernel grad update}
    $\text{sgd-update}(\blambda^{(t)}, \tilde{\nabla}{\blambda}^{(MC)})$ \tcp*{gradient-based update}
    }
return $\blambda^{(t_{max})}, \btheta^{(t_{max})}$ \tcp*{ optimal parameters}
\caption{Scalable integrated observation GP inference with mini-batches and Monte Carlo semi-integrated covariance estimates.  The natural gradient estimator uses $L$ Monte Carlo samples to approximate the semi-integrated covariance and $B$ data observations to approximate the complete data objective. }
\label{alg:scalable-integrated-gp-inference}
\end{algorithm}

\parhead{Monte Carlo SVGP comparison}
\begin{figure}[t!]
\centering
\includegraphics[width=.75\textwidth]{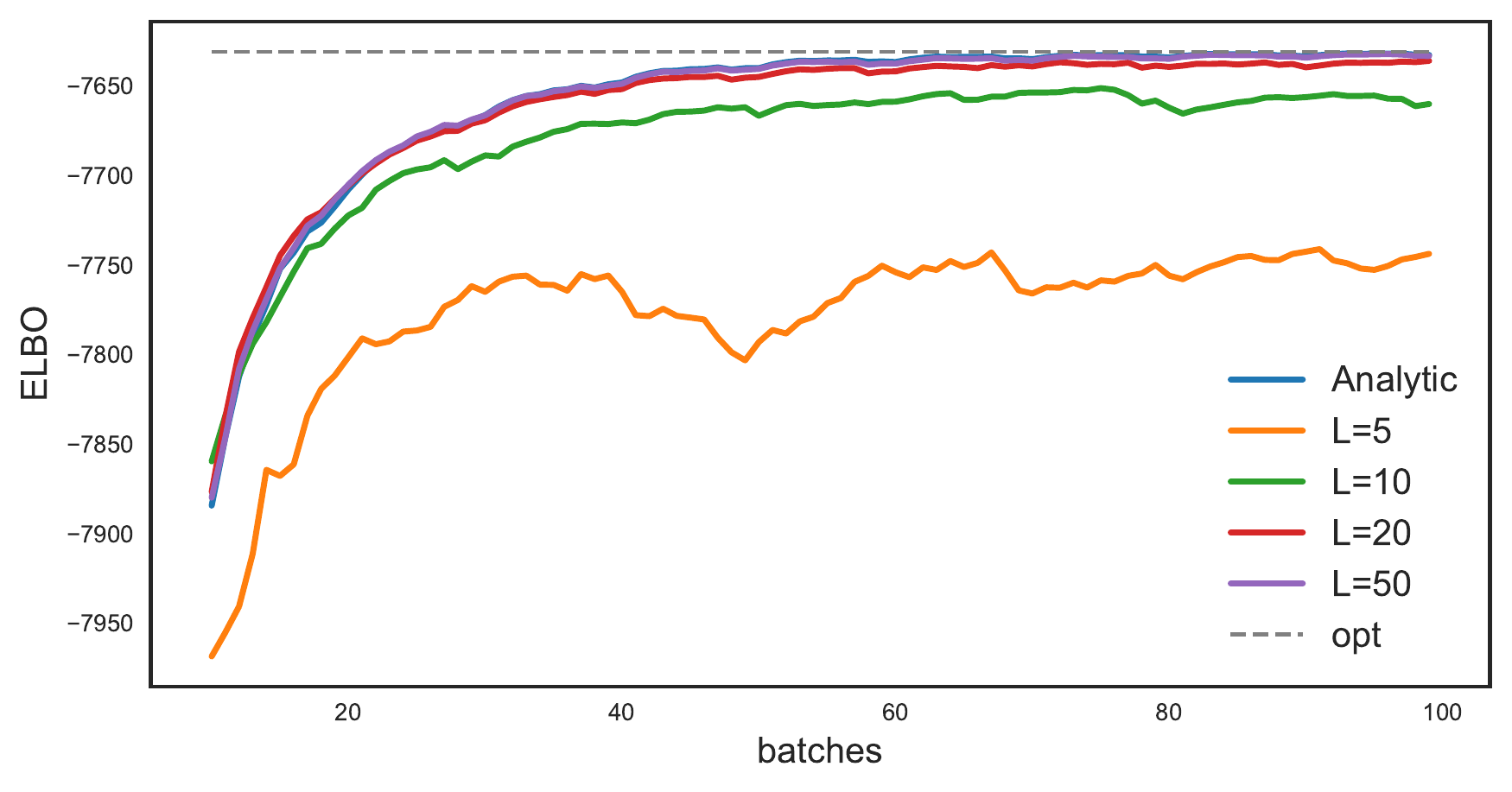}
\caption{Monte Carlo estimates of semi-integrated covariance functions enable efficient ELBO inference competitive with the analytic semi-integrated covariance. Above is a comparison of ELBO traces by mini-batch ($x$-axis) for the squared exponential kernel.  The analytic semi-integrated kernel was used in blue (hidden behind the $L=50$ purple line. The optimal value is depicted with the dotted gray line. }
\label{fig:optimization-comparison}
\end{figure}

The squared exponential kernel admits an analytic form for the
semi-integrated covariance function.  This gives us a way to directly
compare our Monte Carlo semi-integrated approximation within SVGP to the
same algorithm using exact semi-integrated covariance values.  In this
experiment, we compare the progress of the exact variational objective
throughout optimization using different gradient estimators --- the
analytic version, and the biased Monte Carlo gradients using sample size
$L=5, 7, 10, 20,$ and $50$.

We fit variational parameters using stochastic natural gradient descent
with mini-batches of size $1{,}000$ and step size $0.1$.  We fix the
covariance function parameters in this experiment to report a more direct
comparison between exact and Monte Carlo gradients --- in subsequent
experiments we apply gradient updates to these two parameters as well.
Inducing points are placed on a fixed grid of size $20x20$ ($M=400$),
tiling the 2-d space.

At each batch, we compute the complete-data ELBO objective (using the exact
squared exponential semi-integrated covariance function across methods).
Additionally, we compute the optimal value of the ELBO (by solving for the
optimal values of $\bm_{\blambda}$ and $\bS_{\blambda}$ in closed form ---
an operation that is too costly when $N$ is on the order of millions of
observations) which we depict alongside optimization traces. 

Figure~\ref{fig:optimization-comparison} compares the progress of
optimizers using different gradient estimators.  The ``analytic'' version
uses the closed form semi-integrated covariance, while the Monte Carlo runs
use sample sizes of varying values.  We see that with as few as $L=20$
samples the resulting optimization routine very nearly finds the optimal
value (depicted in grey), and almost exactly matches the path of the
``analytic'' routine.  This confirms that these Monte Carlo semi-integrated
covariance estimators are suitable plug-in estimators for fitting models
with covariance functions that do not admit an analytic form for the
semi-integrated covariance, (e.g.~the Gneiting and \matern~kernels). 

\subsection{Fast doubly-integrated kernel diagonal interpolation}
\label{sec:interpolated-doubly-kernels}
\parhead{Approximating doubly-integrated diagonal terms}
To make predictions within the SVGP framework, we need to compute the
marginal variance of each observation according to the Gaussian process
prior.  Concretely, this amounts to computing the $\bK_{*,*}$ term in
Equation~\ref{eq:inducing-diagonal-term}.  In the integrated observation
setting, this requires computing the variance of an integrated observation
\begin{align}
(\tilde \bK_{N})_{n,n} 
  = Cov(a_n, a_n)
  = Var(e_n) = Var\left( \int_{[0,1]} \rho(\alpha \cdot x_n)d\alpha \right)\,.
\end{align}
This marginal variance corresponds to the diagonal of the doubly-integrated
kernel and may not be analytically tractable.
For the squared exponential kernel, this would require integrating the
semi-integrated kernel expressed in Equation~\ref{eq:sqe-semi} over a
compact set, which does not yield a closed form \citep{fayed2014evaluation}.

Fortunately, for a one-dimensional compact set, this integral can be
numerically computed relatively efficiently.
Unfortunately, if we are scaling to millions of observations, we will
have to perform millions of numerical integrations to simply evaluate
the likelihood of the data under one setting of the covariance
function parameters.  Further, we will have to recompute all of these
numerical integrations at each iteration when fitting covariance function
parameters (or computing their posterior).

The inducing point variance correction method requires only computing
the marginal variance of each integrated observation under the prior.
For stationary kernels this is a function only of the distance of the
observation to the origin (or more generally, the length of the integrated ray).
Similar to the semi-integrated kernel, we can see that this is true by 
expanding the stationary kernel
\begin{align}
	\int_{0}^1 k(|x-\alpha x|) d\alpha 
	&= \int_{0}^1 k( (x^\intercal x  + \alpha^2 x^\intercal x - 2 \alpha x^\intercal x)^{1/2}) d\alpha \\
	&= \int_{0}^1 k( (|x|^2 + \alpha^2 |x|^2 - 2\alpha |x|^2)^{1/2} ) d \alpha \\
	&= f(|x|) \, .
\end{align}

Due to this stationarity, we can view each numerical computation as
evaluating the function $k_\theta(\bd_n) = \tilde{K}_nn$ where $\bd_n$
is the distance of observation $\bx_n$ from the origin. 
This perspective suggests a numerically cheap approximation --- simply
estimate this function along a fixed grid of distances (that spans the
range of observed distances) and interpolate the value for each observation.

This one-dimensional function can be estimated independent of the length
scale parameter for isotropic and stationary covariance functions.  The
effect of the length scale parameter is to effectively redefine distance,
or the magnitude of $|x|$.  We can account for this change by re-scaling
the $|x|$ distance and multiplying the result by $\ell$.  Specifically for
the diagonal, this becomes simply a function of a one-dimensional distance,
which can easily be interpolated with a small number of knots.

\parhead{Making predictions.}  To estimate the process value $\rho_*$
at new location $x_*$, we use the posterior predictive distribution
$p(\rho_* \given \mathcal{D})$.  Within the SVGP framework, we
approximate this distribution using the posterior $q_{\blambda}(\bu)$
\begin{align}
p(\rho_* \given \mathcal{D})
  &= \int p(\rho_* \given \bu) p(\bu \given \mathcal{D}) d\bu \\
  &\approx \int p(\rho_* \given \bu) q_{\blambda}(\bu) d\bu \,\, .
\end{align}
Under the approximation, $\bu$ and $\rho_*$ are jointly normal, making this
predictive distribution available in closed form
\begin{align}
    p(\rho_* \given \mathcal{D}) &= \mathcal{N}(\mu_*, \sigma^2_*) \\
    \mu_* &= \bK_{*,\bu} \bK_{\bu,\bu}^{-1} \bm \\
    \sigma_*^2 &= \tilde{\bK}_{*,*} +
        \bK_{*,\bu} \bK_{\bu,\bu}^{-1} \bS \bK_{\bu,\bu}^{-1} \bK_{\bu,*}
    \label{eq:predictions}
\end{align}
where
\begin{align}
  (\bK_{*,\bu})_m &= \text{Cov}(\rho_*, \bu_m) \\
  \bK_{*,*} &= \text{Var}(\rho_*) \\
  \tilde{\bK}_{*,*} &\triangleq
    \bK_{*,*} - \bK_{*,\bu} \bK_{\bu,\bu}^{-1} \bK_{\bu,*}
    \label{eq:inducing-diagonal-term}\,.
\end{align}
Prediction for integrated process values,
${e_*=\int_{x \in R_*} \rho(x)dx}$, can be computed similarly by
substituting semi-integrated covariance values into the vector
$\bK_{*,\bu}$ and the doubly integrated covariance value $\bK_{*,*}$.

\begin{figure}[t!]
\centering
\begin{subfigure}[b]{.43\textwidth}
\centering
\includegraphics[width=\textwidth]{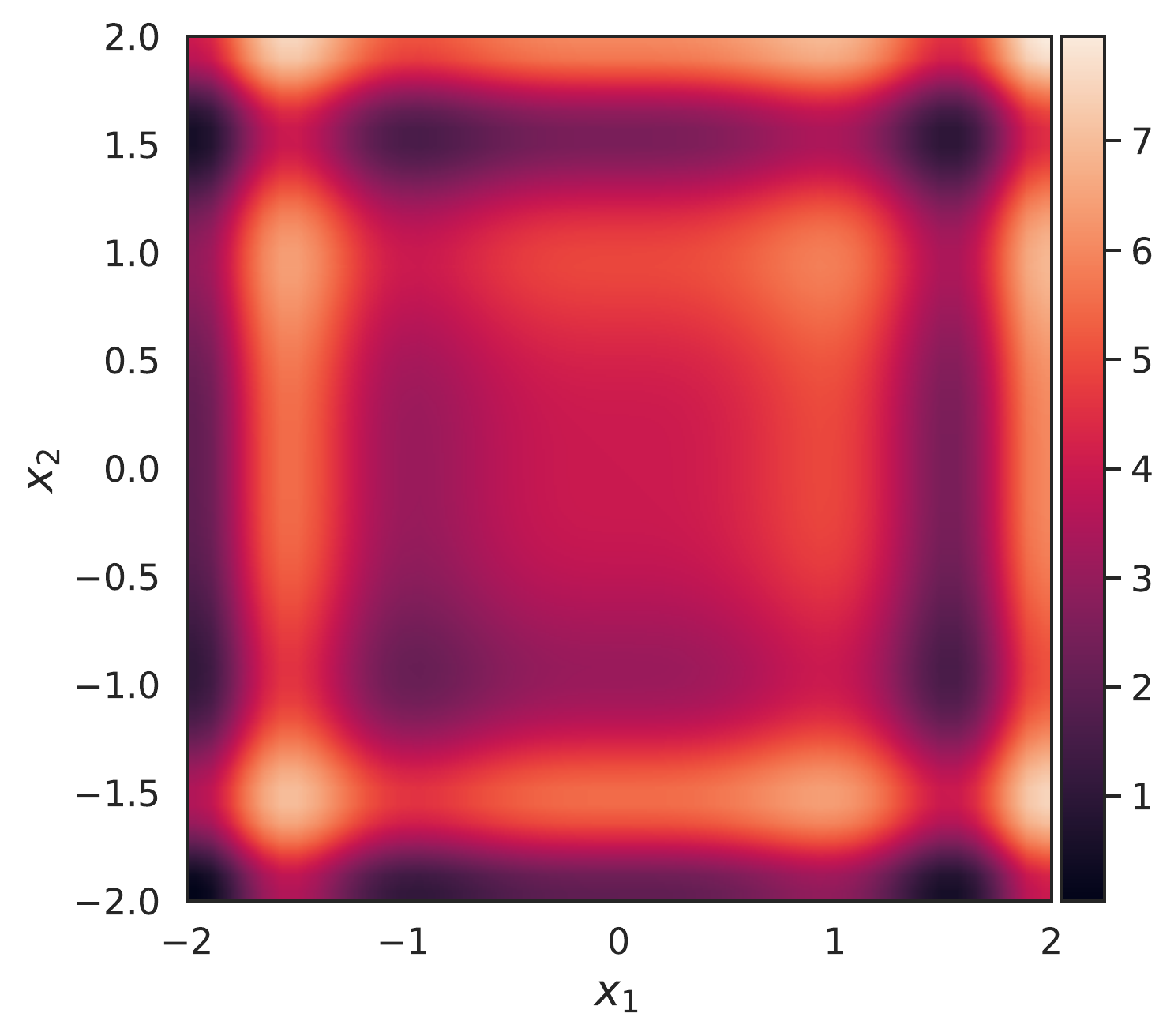}
\caption{True $\rho(x)$}
\label{fig:rho_true}
\end{subfigure}
~
\begin{subfigure}[b]{.43\textwidth}
\centering
\includegraphics[width=\textwidth]{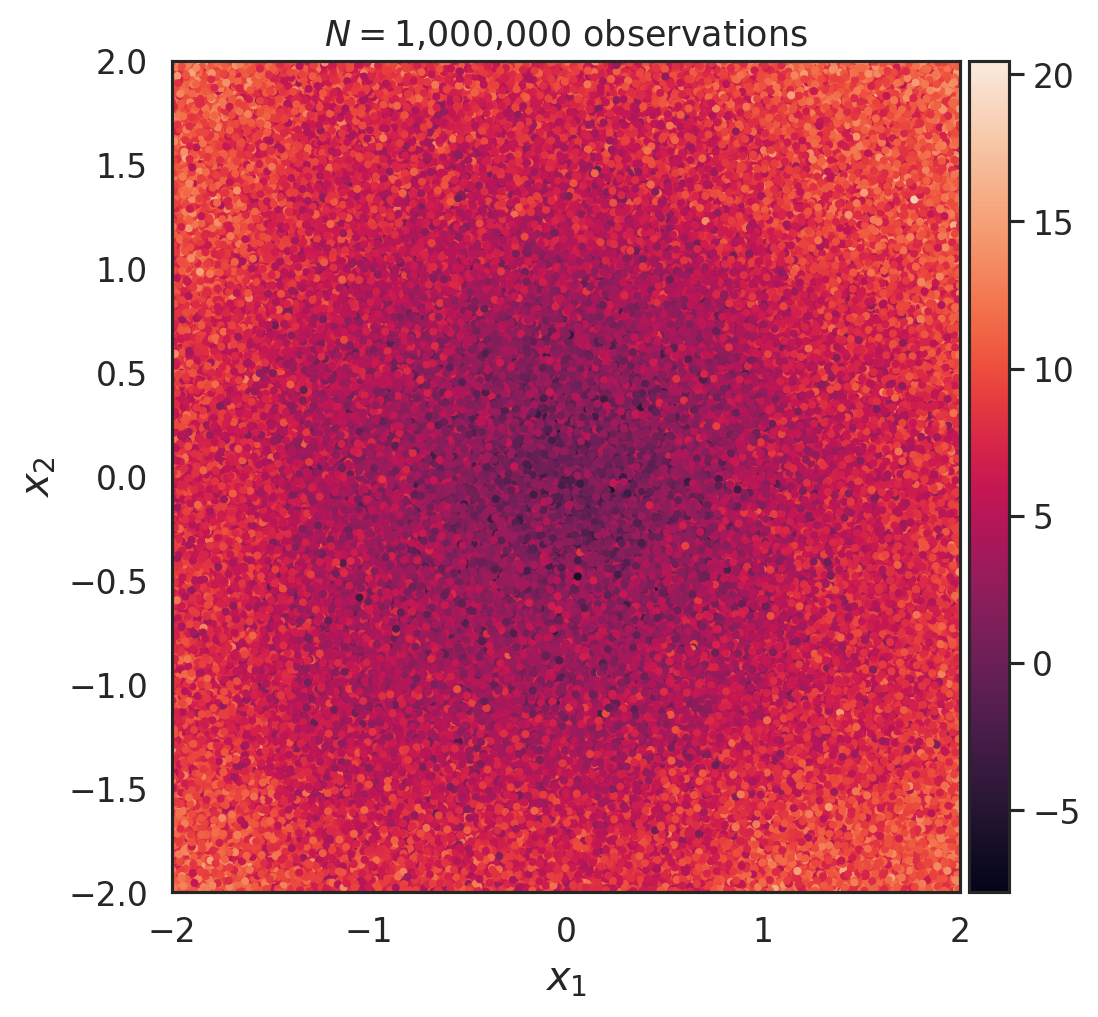}
\caption{Integrated observations $a_n$}
\label{fig:integrated-observations}
\end{subfigure}

\begin{subfigure}[b]{.43\textwidth}
\centering
\includegraphics[width=\textwidth]{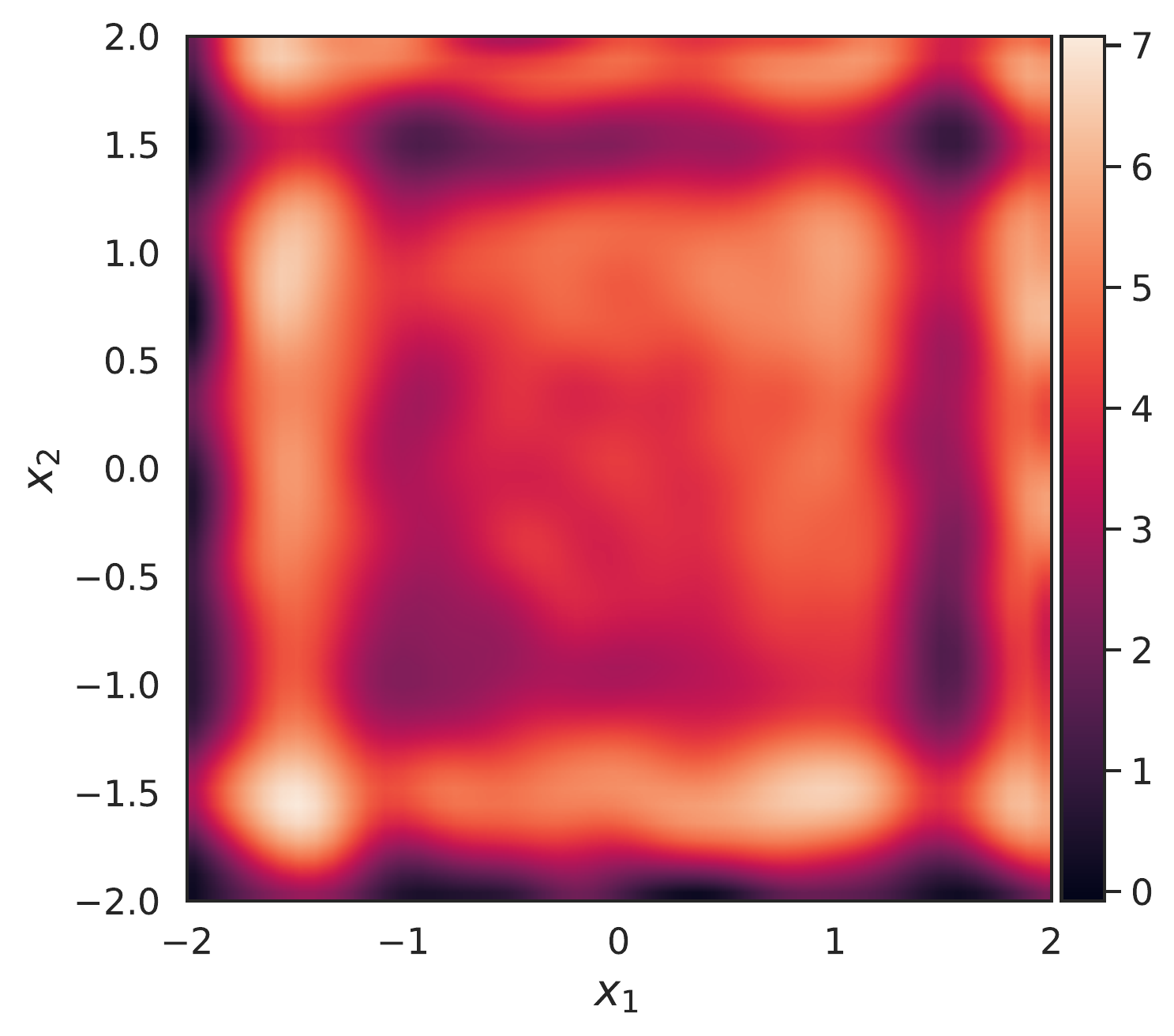}
\caption{Inferred $\mathbb{E}\left[\rho \given \mathcal{D}\right]$.}
\label{fig:rho_inferred}
\end{subfigure}
~
\begin{subfigure}[b]{.43\textwidth}
\centering
\includegraphics[width=\textwidth]{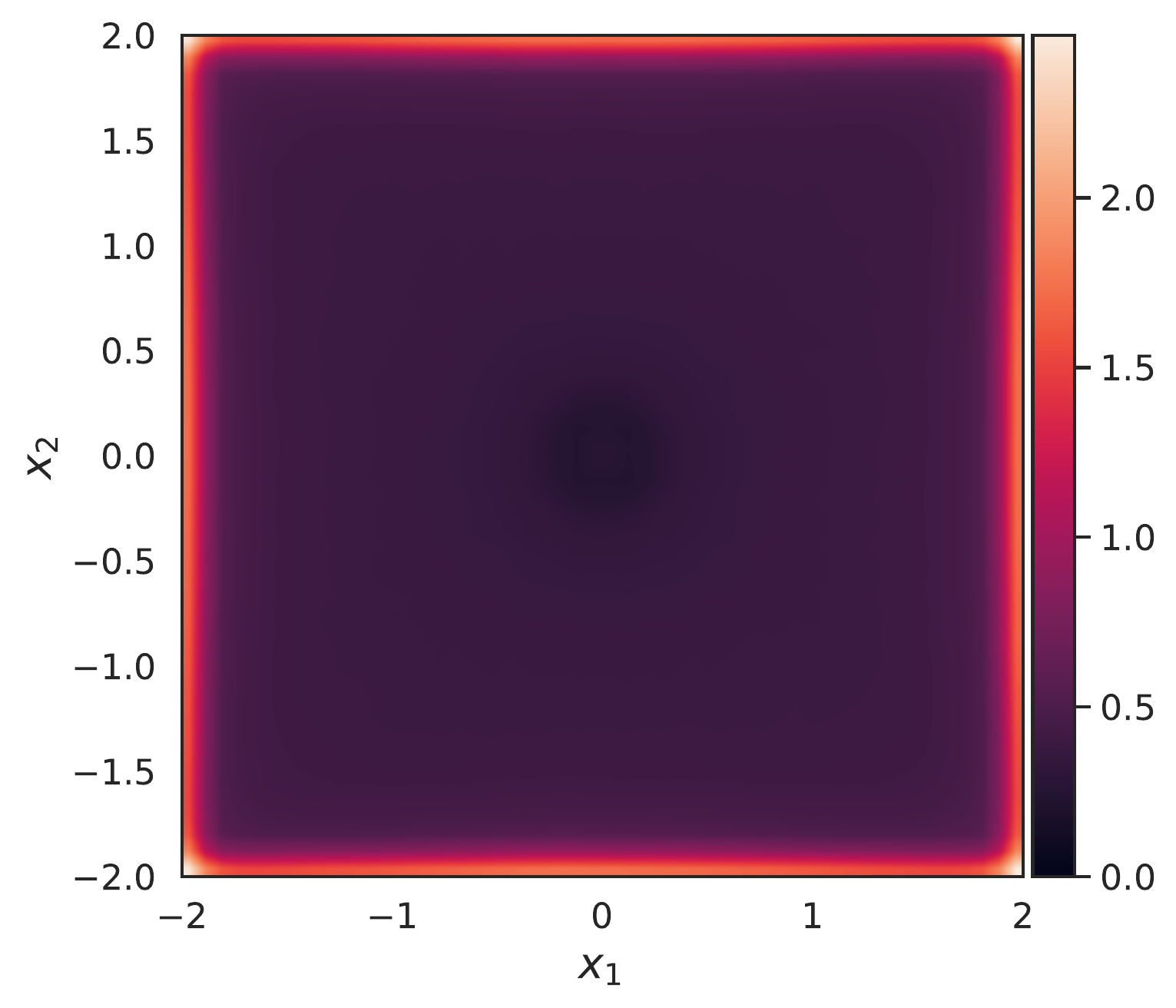}
\caption{Inferred posterior scale $\mathbb{V}\left[\rho(x) \given \mathcal{D}\right]^{1/2}$.}
\label{fig:rho_variance}
\end{subfigure}
\caption{Integrated observation GPs can reveal spatial structure from a
  severely limited vantage point.  This figure summarizes synthetic data
  posterior inference using $N=1{,}000{,}000$ examples in a two-dimensional space.
  Panel \ref{fig:rho_true} depicts the true (unobserved) $\rho(x)$ that
  generates the data.  Panel \ref{fig:integrated-observations} depicts the
  observed data, noisy integrated measurements of $\rho$ from the origin to
  uniformly random points in the square.  \ref{fig:rho_inferred} depicts the
  inferred $\rho$ given these observations using \ziggy
  with a squared exponential kernel.  \ref{fig:rho_variance}
  depicts the marginal posterior standard deviation at each location in
  $\mathcal{X}$-space.
}
\label{fig:synthetic-model-fit}
\end{figure}

\section{Validation on Synthetic Data}
\label{sec:synthetic-experiments}

In this section we study the performance of \ziggy on synthetic data
under varying conditions.  To empirically evaluate \ziggy we measure
the quality of its statistical inferences on a two-dimensional
synthetic example.  To generate synthetic data, we first define the
unobserved ${\rho : \mathbb{R}^2 \mapsto \mathbb{R}}$ and generate
noise with known variance.  The form we chose for $\rho$ is
\begin{align}
  \rho(x) &= 4 + \sum_{d=1}^d x_d \cdot \sin(2 x_d^2 ) \, , 
  \label{eq:synthetic-rho}
\end{align}
which is visualized in Figure~\ref{fig:rho_true}.  This test function has
clear regions of high density far from the observation point (e.g.~around
$x = (-1.5, -1.5)$) and it is positive everywhere.  We generate a dataset
of $N$ noisy integrated observations 
\begin{align}
  x_n &\sim \text{Unif}(\mathcal{X}) & \text{ simulated star locations } \\
  e_n &= \int_{x \in R_n} \rho(x) dx & \text{ simulated extinctions } \\
  a_n &\sim \mathcal{N}(e_n, \sigma_n^2) & \text{ noisy integrated observation }
\end{align}
where the domain is $\mathcal{X} = [-2, 2]^2$, and the noise variance is
chosen to be $\sigma_n^2 = 4$ to roughly approximate the noise level of
extinction measurements.  The true extinctions, $e_n$, are computed
to high precision using numerical quadrature, integrating from the origin
to the random location $x_n$. The integrated observations are depicted in
Figure~\ref{fig:integrated-observations}.  We estimate this synthetic
$\rho(\cdot)$ from noisy integrated observations using the approach
developed in the previous section.  In this experiment, we fix a grid of
$20 \times 20$ inducing points, evenly spaced from $[-2, 2]$ in each of the
two dimensions.

In Section~\ref{sec:synthetic-data-size}, the first set of simulations
compares posterior estimates of $\rho(\cdot)$ as a function of training
data set size.  Posterior estimates continue to become more
accurate and well-calibrated even as the number of integrated observation
$N$ grows from one hundred thousand to one million.  This accuracy
highlights the importance of scaling inference to more observations in
order to obtain a more accurate statistical estimator.

In Section~\ref{sec:synthetic-kernel-comparison}, we compare posterior
estimate quality with a fixed dataset over different kernels, each encoding
different assumptions about the underlying function. This experiment tests
the Monte Carlo approach that can incorporate kernels that do not admit an
analytic form for the semi-integrated kernel.  Finally,
\Cref{sec:app-synthetic-experiments} describes additional synthetic
experiments that examine the effect of two algorithm hyper parameter
choices on variational objective optimization, the whitened
parameterization and the number of Monte Carlo semi-integrated covariance
samples.

A \texttt{python} and \texttt{PyTorch} implementation is
available.\footnote{Available at [REDACTED TO PRESERVE AUTHOR
  BLINDNESS].} Code to reproduce experiments are included in the
repository.

\subsection{The quality of the estimate}
\label{sec:synthetic-data-size}

We study the quality of the estimate of $\rho(x)$ as a function of
data set size $N$.  We fit the dust model to synthetic observations
with data set sizes $N \in \{10^3, 10^4, 10^5, 10^6\}$.  We measure
model quality by computing root mean squared error (RMSE) and log
likelihood (LL) on a set of $N_{\mathrm{test}} = 2{,}000$ held out
extinction values
\begin{align}
  \mathrm{RMSE} &= 
    \left( \frac{1}{N_{\mathrm{test}}}\sum_{n=1}^{N_{\mathrm{test}}}
      \left(\mathbb{E}\left[e_* \given \mathcal{D}\right] - e_*\right)^2
    \right)^{1/2}
  \label{eq:test-rmse-def} \\
  \mathrm{LL} &= \frac{1}{N_{\mathrm{test}}} \sum_{n=1}^{N_{\mathrm{test}}}
      \ln p(e_n \given \mathcal{D}) \,\, .
  \label{eq:test-ll-def}
\end{align}

Each model uses $M=20 \times 20$ inducing points, evenly spaced in a
grid in the input space.  We trained each model until convergence,
using batches of size $1{,}000$, a step size of $1e-3$ and a kernel
parameter step size of $1e-6$.  (Additional optimization details can
be found in the implementation.)

Figure~\ref{fig:synthetic-model-fit} summarizes the model's fit using
one million observations, the most we tried in this example.
Figure~\ref{fig:rho_inferred} displays the posterior mean for
$\rho(x)$ given the one million observations depicted in
\ref{fig:integrated-observations}.  Figure~\ref{fig:rho_variance}
shows the posterior uncertainty (one standard deviation) about the
estimated mean.  The million-observation model recovers the true
latent function well.

Figure~\ref{fig:synthetic-data-size} summarizes predictive quality of \ziggy
as a function of data set size on held out test data using RMSE and
LL.  This illustrates the benefit of including more training data
--- root mean squared error (RMSE) decreases (and log likelihood increases)
significantly as data set size grows.  For example, the RMSE for
$N=100{,}000$ is $2.6$ times bigger than the RMSE for $N=1{,}000{,}000$,
(.082 vs.~.031) and the average log likelihood is significantly better for
the million-observation model.

Figure~\ref{fig:synthetic-data-size-posterior-comparison} makes a more
direct graphical comparison of the posterior estimate of the latent
function as we increase data set size $N$.  As we incorporate more data
into the nonparametric model, the form of the true underlying function
emerges.  To accurately visualize and interpret large scale features of the
latent dust distribution, we will want to incorporate as many stellar
observations as possible.

\begin{figure}[t!]
\centering
\begin{subfigure}[b]{.45\textwidth}
\centering
\includegraphics[width=\textwidth]{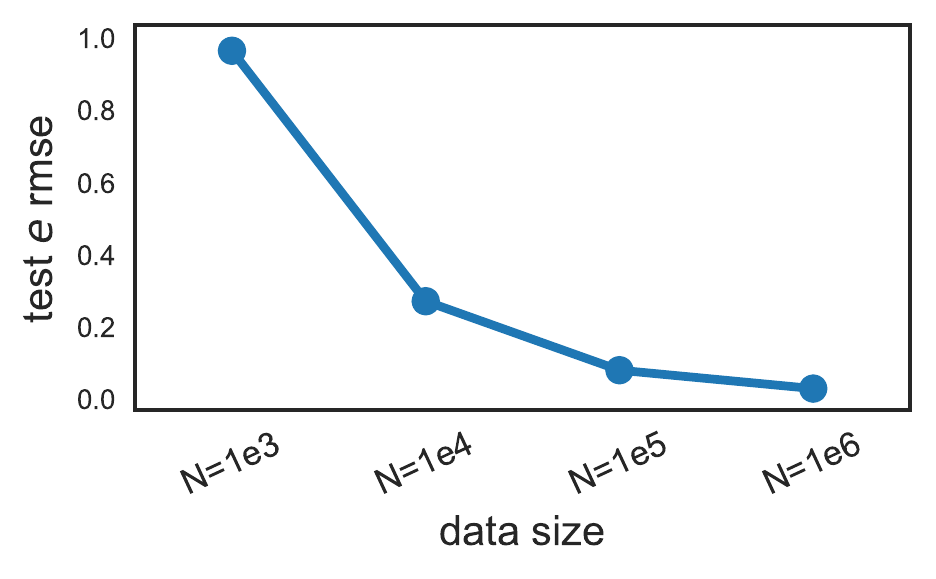}
\caption{Test $e$ RMSE}
\end{subfigure}
~
\begin{subfigure}[b]{.45\textwidth}
\centering
\includegraphics[width=\textwidth]{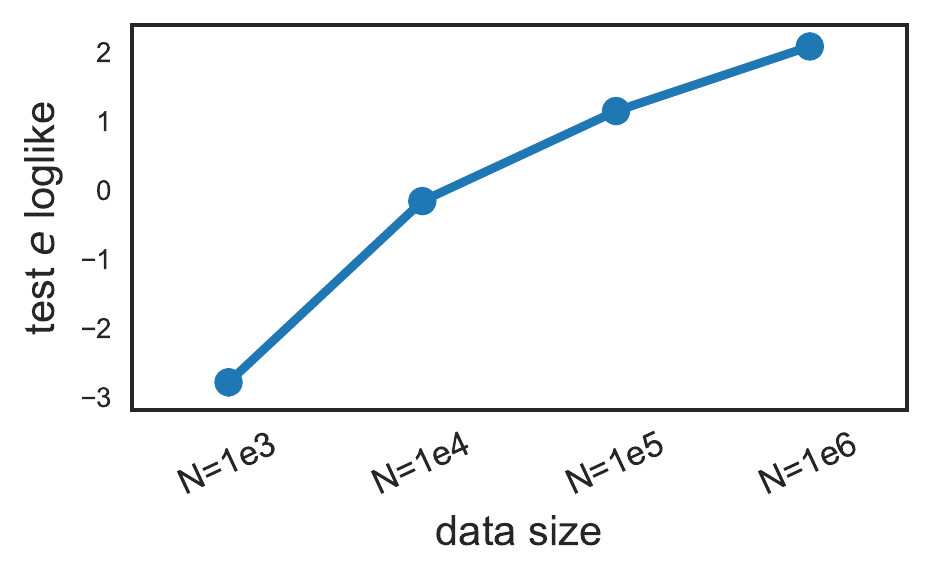}
\caption{Test $e$ log likelihood}
\end{subfigure}

\caption{Scaling to massive $N$ improves estimator performance. Above, we
compare predictive RMSE (left) and log likelihood (right) as a function of
data set size $N$ on a held out sample of test stars.}
\label{fig:synthetic-data-size}
\end{figure}

\begin{figure}[t!]
\centering
  \begin{subfigure}[b]{.4\textwidth}
  \includegraphics[width=\textwidth]{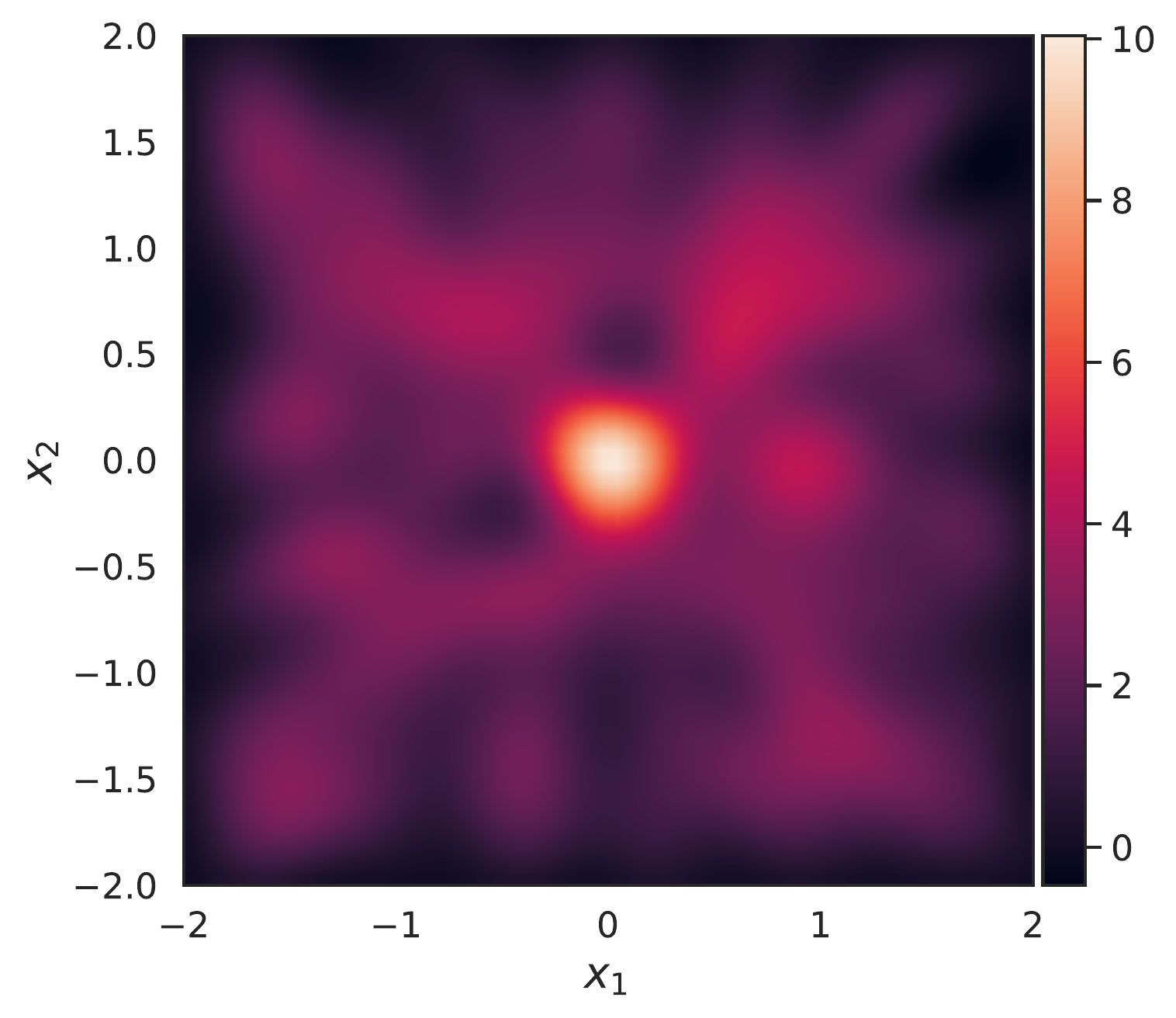}
  \caption{$\mathbb{E}[\rho \given \mathcal{D}]$, $N=1{,}000$}
  \end{subfigure}
  ~
  \begin{subfigure}[b]{.4\textwidth}
  \includegraphics[width=\textwidth]{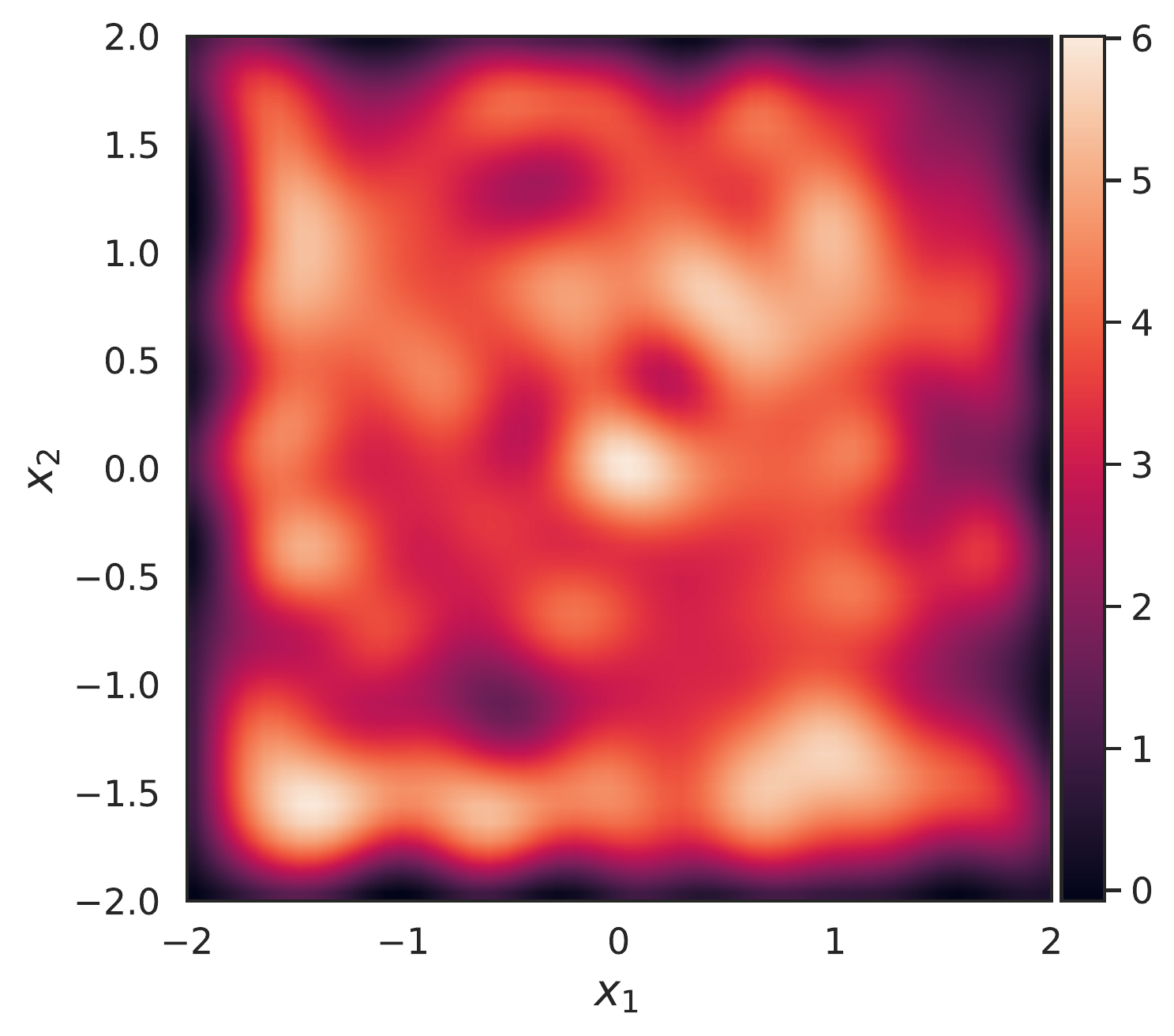}
  \caption{$\mathbb{E}[\rho \given \mathcal{D}]$, $N=10{,}000$}
  \end{subfigure}
  ~
  \begin{subfigure}[b]{.4\textwidth}
  \includegraphics[width=\textwidth]{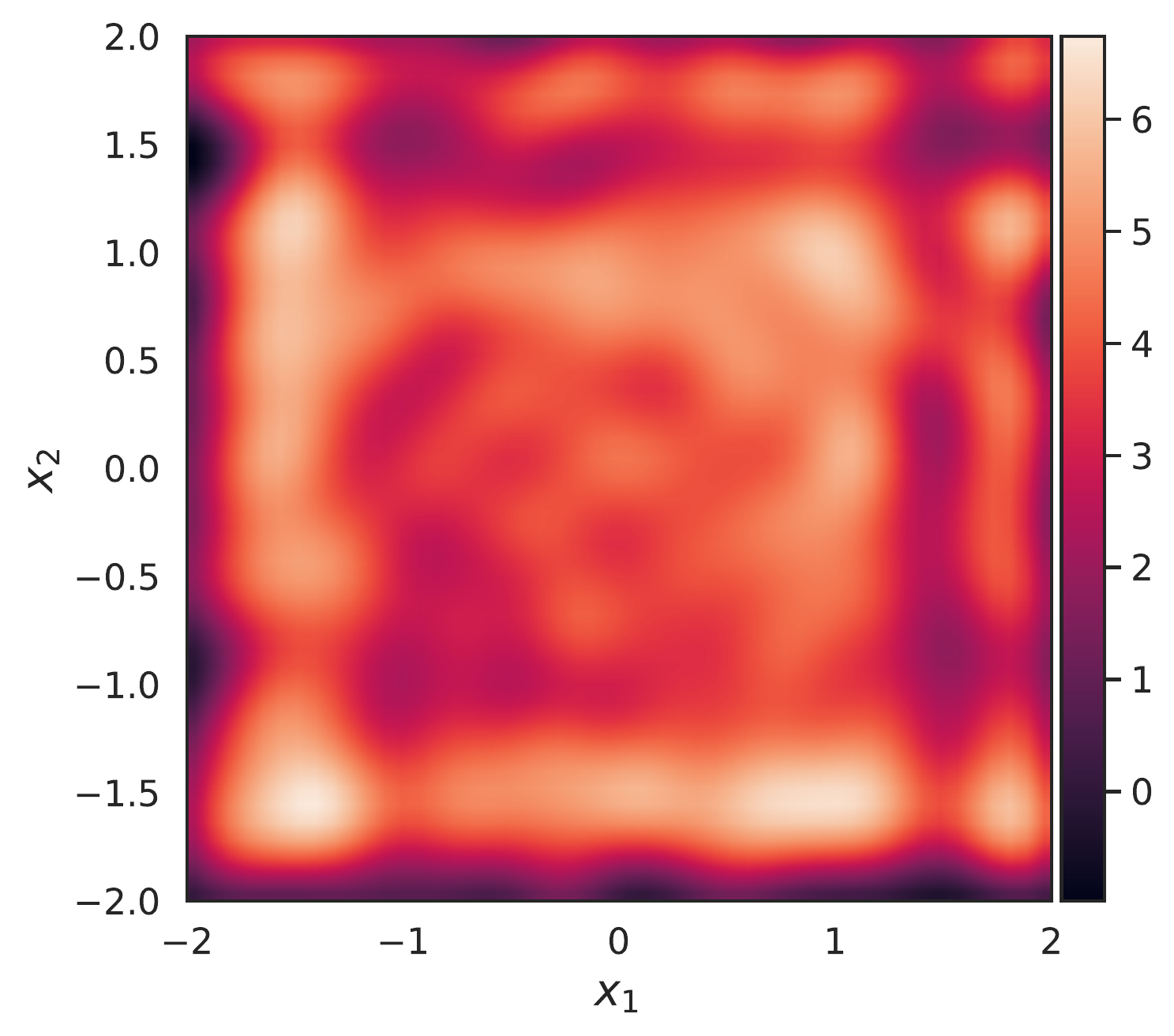}
  \caption{$\mathbb{E}[\rho \given \mathcal{D}]$, $N=100{,}000$}
  \end{subfigure}
  ~
  \begin{subfigure}[b]{.4\textwidth}
  \includegraphics[width=\textwidth]{figs/synthetic-data-size/SqExp-N-1000000/posterior-fmu.pdf}
  \caption{$\mathbb{E}[\rho \given \mathcal{D}]$, $N=1{,}000{,}000$}
  \end{subfigure}
\caption{More data leads to more accurate function posterior predictions.}
\label{fig:synthetic-data-size-posterior-comparison}
\end{figure}

\subsection{Comparing kernels}
\label{sec:synthetic-kernel-comparison}

\begin{figure}[t!]
\centering
\begin{subfigure}[b]{.31\textwidth}
\centering
\includegraphics[width=\textwidth]{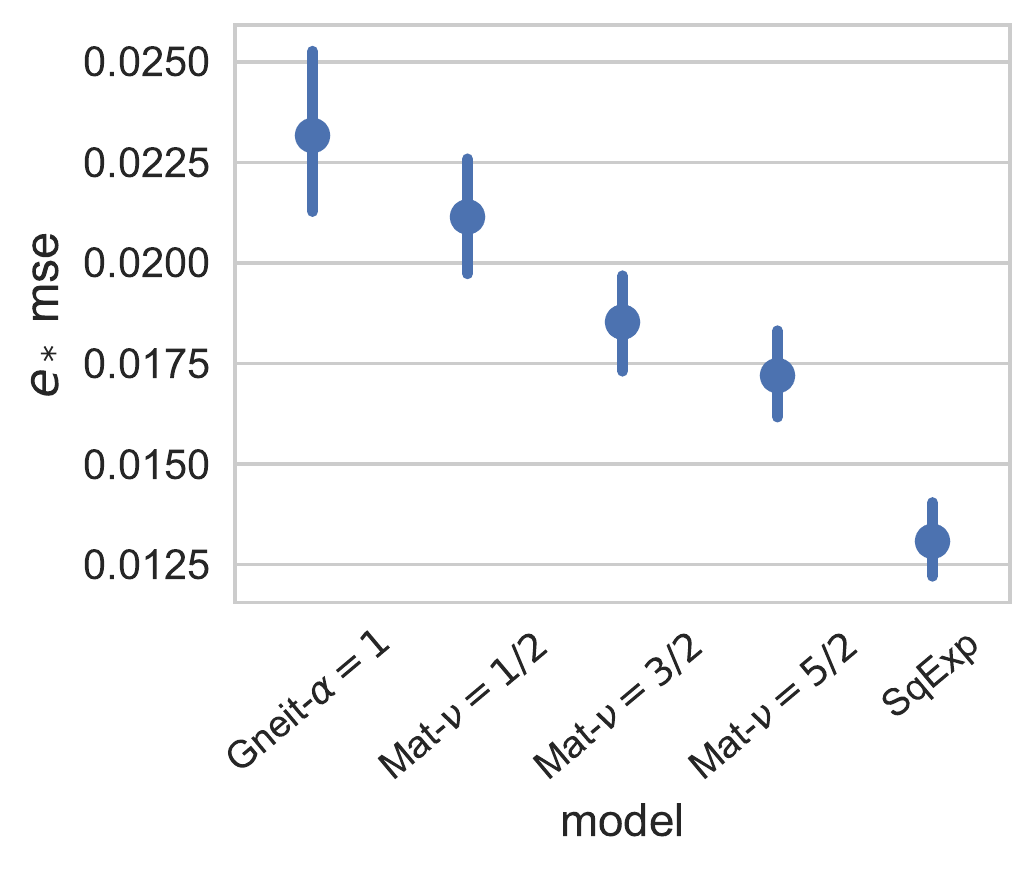}
\caption{Test $e$ MSE}
\end{subfigure}
~
\begin{subfigure}[b]{.31\textwidth}
\centering
\includegraphics[width=\textwidth]{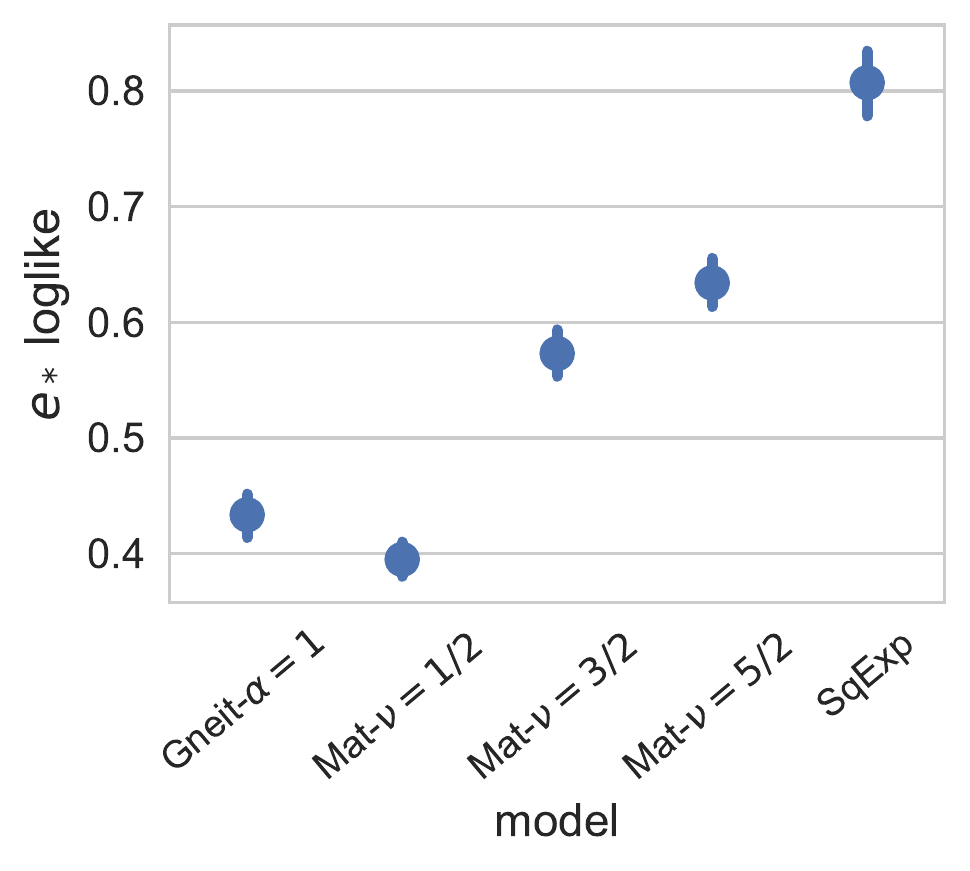}
\caption{Test $e$ log likelihood}
\end{subfigure}
~
\begin{subfigure}[b]{.31\textwidth}
\centering
\includegraphics[width=\textwidth]{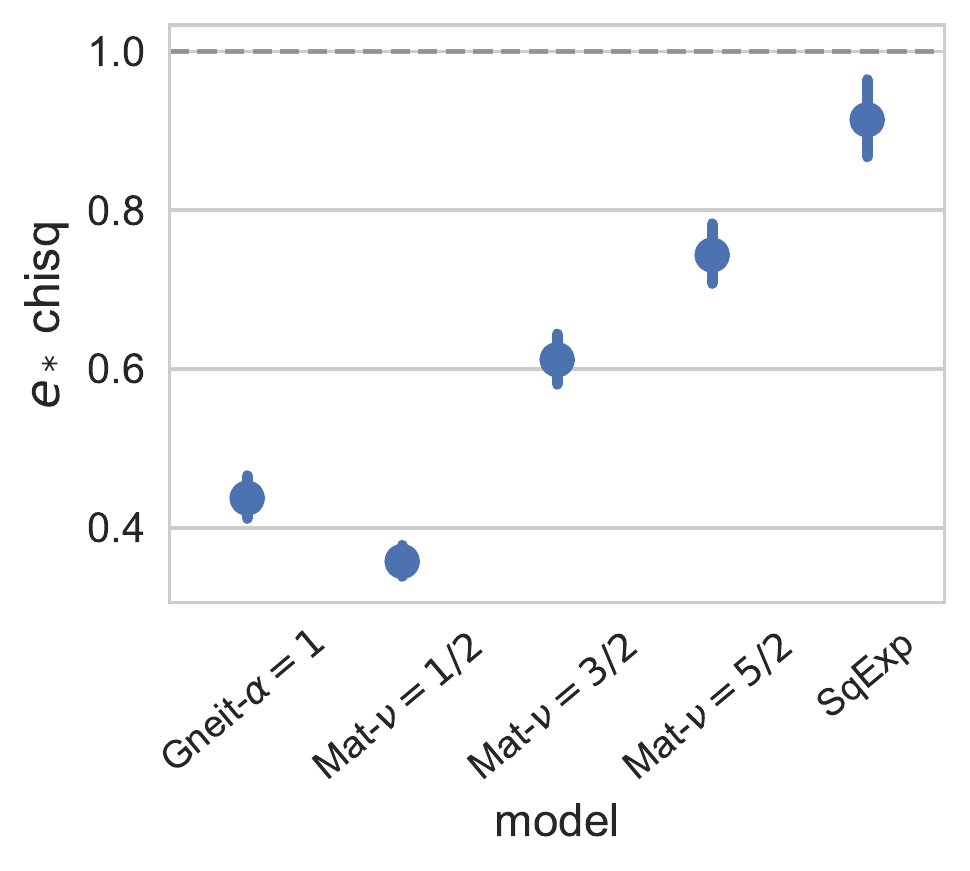}
\caption{Test $e$ $\chi^2$}
\end{subfigure}

\begin{subfigure}[b]{.31\textwidth}
\centering
\includegraphics[width=\textwidth]{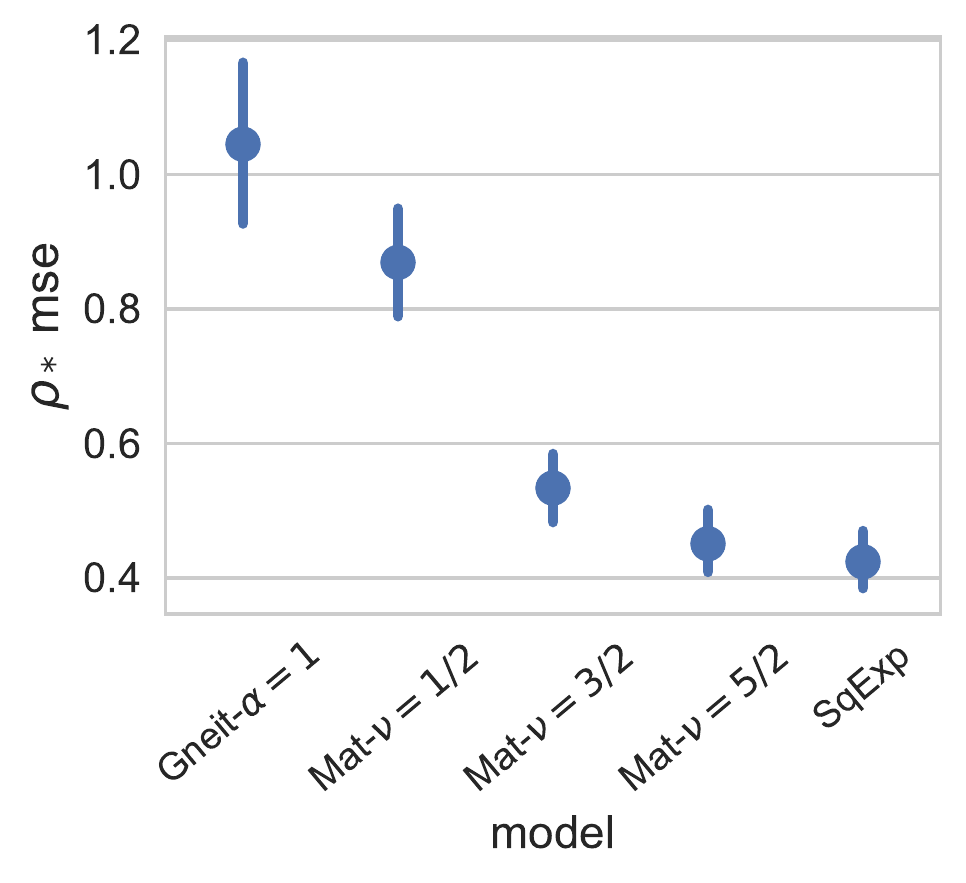}
\caption{Test $\rho(x)$ MSE}
\end{subfigure}
~
\begin{subfigure}[b]{.31\textwidth}
\centering
\includegraphics[width=\textwidth]{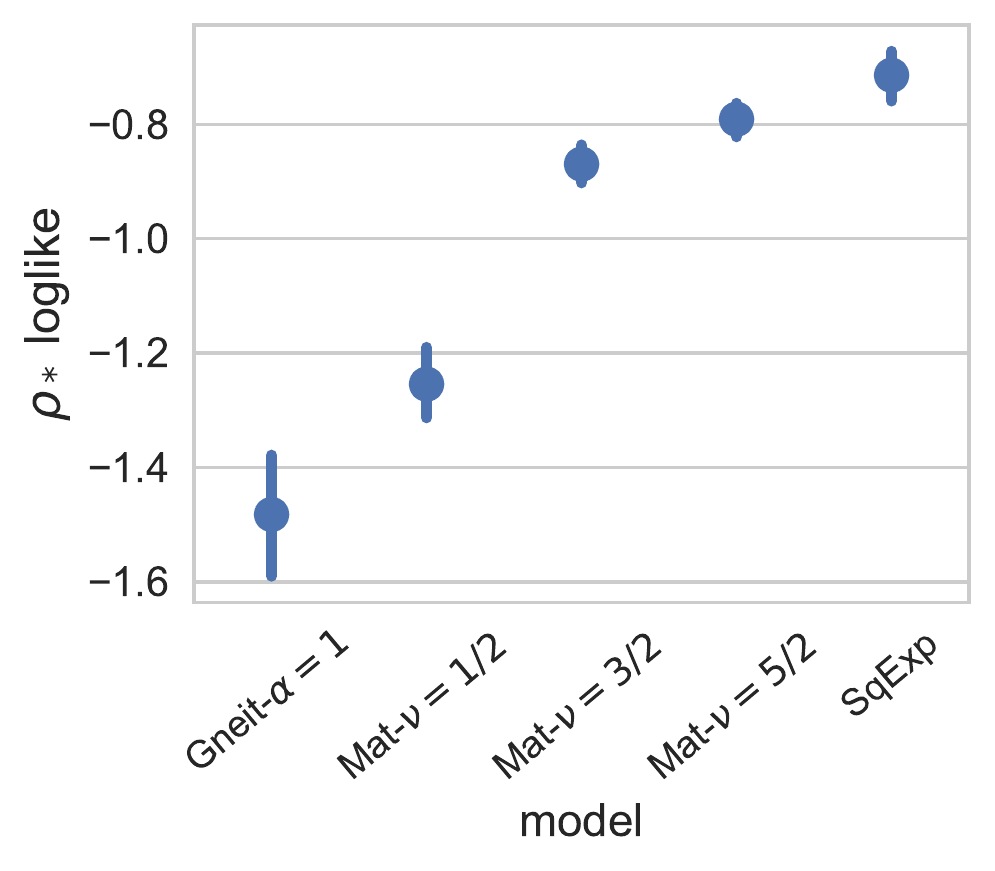}
\caption{Test $\rho(x)$ log likelihood}
\end{subfigure}
~
\begin{subfigure}[b]{.31\textwidth}
\centering
\includegraphics[width=\textwidth]{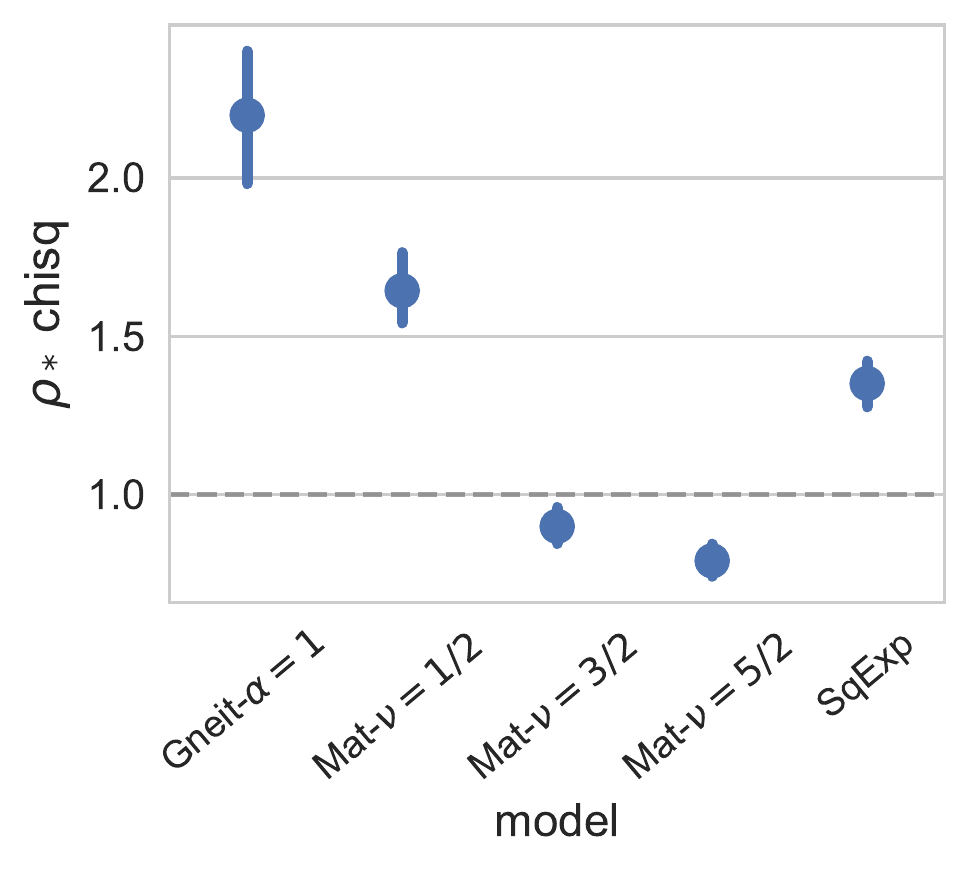}
\caption{Test $\rho(x)$ $\chi^2$}
\end{subfigure}
\caption{Predictive summaries across different models and inference schemes.}
\label{fig:synthetic-error-comparison}
\end{figure}

We compared multiple models by fitting the variational approximation and
tuning the covariance function parameters for five different kernel choices.
In this synthetic setup, we ran
each model for 50 epochs, saving the the model with the best ELBO value.
We used mini-batches of size $|B|=1{,}000$, and started the step size at
$.1$, reducing it every epoch by multiplying by a decay factor of .95.

For kernels that do not have a closed form semi-integrated version, we used
$L=30$ uniform grid Monte Carlo samples to estimate the integrated
covariance.  We chose this on the conservative end --- in
Appendix~\ref{sec:mc-semi-estimators} shows that on the order of $L=10$ to
$20$ samples is often just as effective as the analytic kernel in the
squared exponential case.

We validate the inferences against a set of $2{,}000$ held-out test
observations, which include ground truth extinctions $e_*$ and pointwise
values, $\rho_*$.  We quantify the quality of the inferences with mean
squared error, log likelihood, and the $\chi^2$-statistic on the test data.
We also visually validate \ziggy with a $QQ$-plot and coverage
comparison, also on test data.  We also visualize the behavior \ziggy
as a function of distance to the origin ---
specifically, we visualize how accurate and well-calibrated test inferences
are as synthetic observations are made farther away.

In Figure~\ref{fig:synthetic-error-comparison} we compare the mean squared
error (MSE), the log likelihood, and the $\chi^2$ statistic on the held out
test data.  We compare five different kernels -- (i) Gneiting $\alpha=1$,
(ii-iv) \matern, $\nu \in \{1/2, 3/2, 5/2 \}$, and (v) the squared
exponential kernel.
The results match our expectations. The true $\rho(\cdot)$ is
smooth, and the smoother kernels tend to have lower predictive
error, higher log likelihood, and better calibrated $\chi^2$ statistic
(closer to one). The test statistic comparison has chosen a good match in
the squared exponential kernel.

We also note that the MSE for the process evaluations $\rho_*$ is much
higher than the test MSE for extinctions $e_*$.  This is also expected ---
intuitively, if the $\rho(\cdot)$ predictions are well-calibrated, then
over a larger distance the pointwise errors can be averaged out, resulting
in a better predictor for extinctions.

A crucial property of a good posterior approximation is well-calibrated
uncertainty.  To evaluate the quality of the uncertainty measurements for
unobserved extinctions and pointwise process values, we inspect statistics
of test-sample $z$-scores
\begin{align}
  z_* &= \frac{e_* - \mathbb{E}[e_* \given x_*, \mathcal{D}]}
              {\sqrt{\mathbb{V}\left[ e_* \given x_*, \mathcal{D} \right]}}
\end{align}
for both $e_*$ and $\rho_*$.
If the predictions are well-calibrated, the statistics of predictive
$z$-scores should resemble a standard normal distribution.  Concretely, the
order statistics of a collection of $N_{test}$ $z$-scores should closely
match the theoretical quantiles of a standard normal distribution --- a
relationship that can be graphically inspected with a QQ-plot.  To compare
different models, we visualize QQ-plots for the different covariance
functions in Figure~\ref{fig:synthetic-qq}.  As an additional summary of
calibration, we compare the fraction of predicted examples covered by $1/2,
1, 2$, and 3 posterior standard deviations, summarized in
Table~\ref{tab:synthetic-coverage}.  We see that the squared exponential
kernel $z$-scores and empirical coverage are not far from what is expected
by a theoretical standard normal distribution, indicating that the
estimates are well-calibrated.

We also depict model fit statistics as a function of distance from the
origin in Figure~\ref{fig:summary-by-distance}.  In
Figures~\ref{fig:e-zscore-by-distance} and \ref{fig:f-zscore-by-distance}
show that pointwise predictions are again well-calibrated as a
function of distance, though we observe a small amount of concentration about
$0$ for the extinction predictions.  This indicates that the predictive
model was too conservative, or not confident enough in the posterior
expectation at nearby values.  This is corroborated by
Figure~\ref{fig:e-posterior-variance-by-distance}, which shows the
posterior variance for extinctions is high at nearby (and distant) values,
however MSE are high mostly for distant values.  Good calibration, even at
far distances, is encouraging, as the goal is to form good estimates of
this density far away from the observer location.

\begin{figure}[t!]
\centering
\begin{subfigure}[b]{.35\textwidth}
\centering
\includegraphics[width=\textwidth]{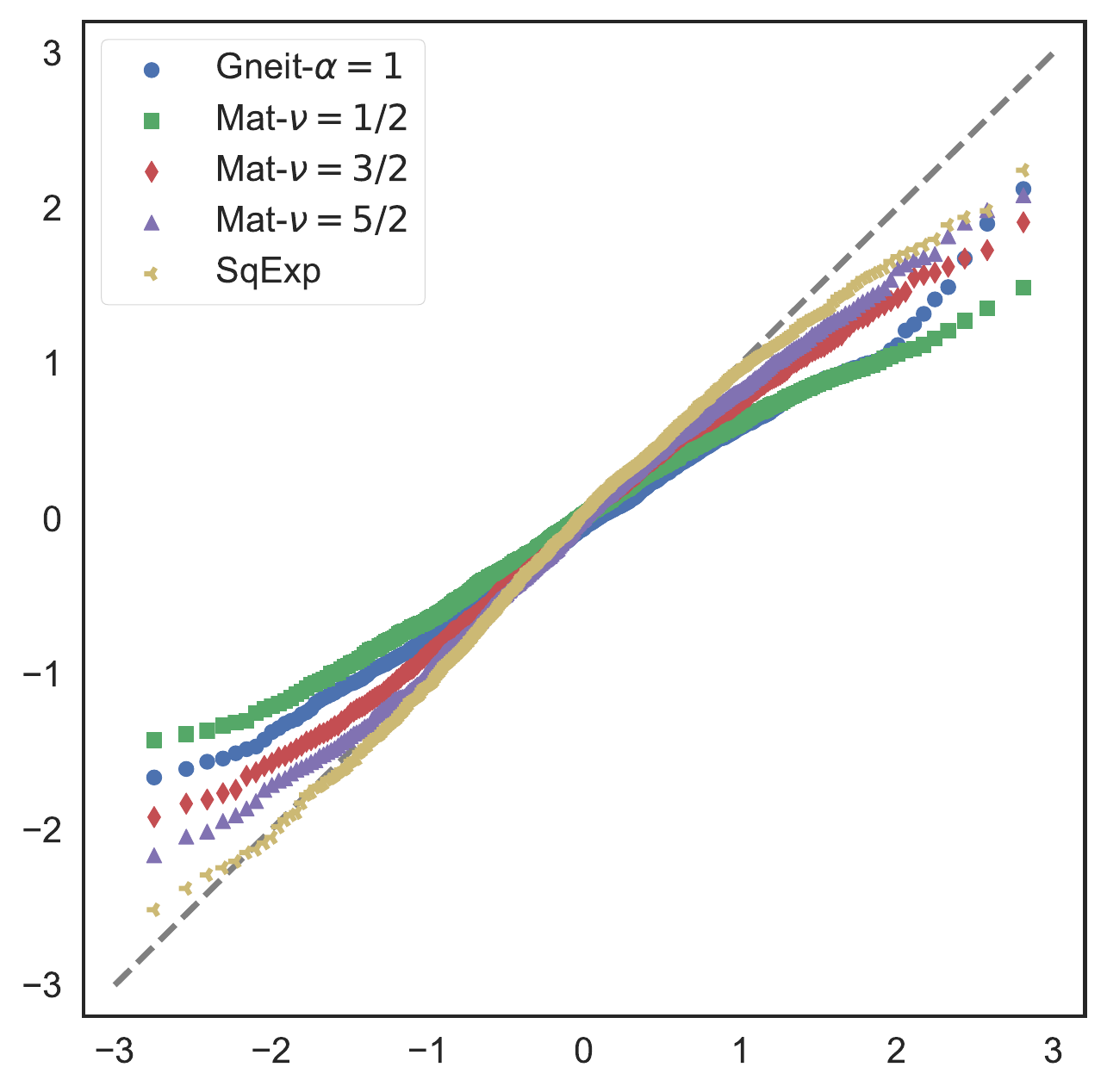}
\caption{Test extinctions $e$}
\end{subfigure}
~
\begin{subfigure}[b]{.35\textwidth}
\centering
\includegraphics[width=\textwidth]{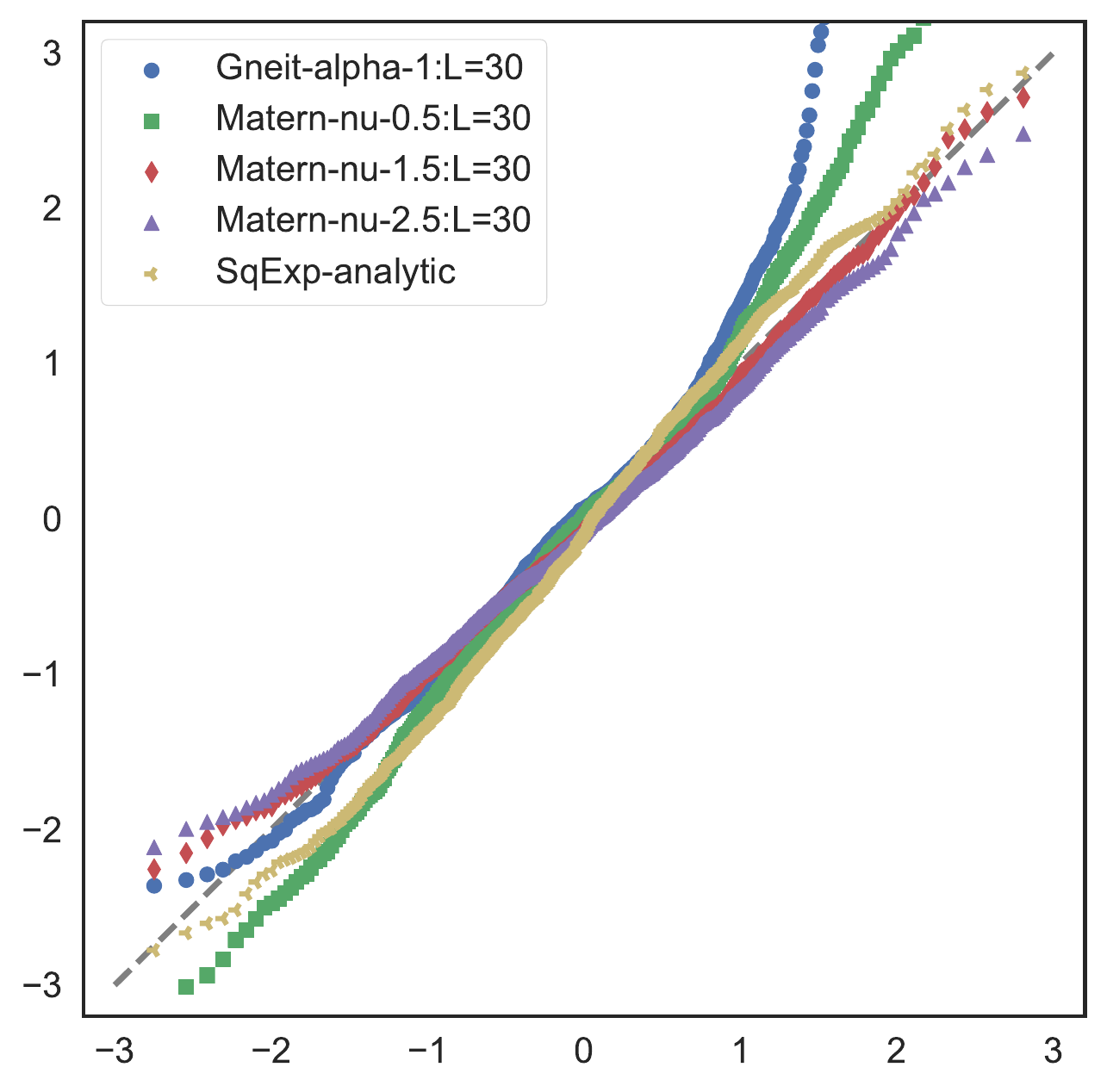}
\caption{Test $\rho(x)$ values}
\end{subfigure}
\caption{Posterior for extinctions $e$ and dust density $\rho$ are well
  calibrated.  QQ-plot for predicted distributions on 2{,}000 held out
  integrated-$\rho$ test points $e_*$.  Theoretical normal quantiles are on
  the horizontal axis, and predictive $z$-scored quantiles are on the
  vertical axis.
}
\label{fig:synthetic-qq}
\end{figure}

\begin{table}[]
    \centering
    \scalebox{.8}{
        \begin{tabular}{lrrrrrr}
\toprule
{} &  Gneit-$\alpha=1$ &  Mat-$\nu=1/2$ &  Mat-$\nu=3/2$ &  Mat-$\nu=5/2$ &  SqExp &  $\mathcal{N}(0, 1)$ \\
$\sigma$ &                   &                &                &                &        &                      \\
\midrule
0.5      &             0.536 &          0.560 &          0.441 &          0.407 &  0.384 &                0.383 \\
1.0      &             0.880 &          0.917 &          0.785 &          0.730 &  0.679 &                0.683 \\
2.0      &             0.997 &          0.999 &          0.999 &          0.988 &  0.971 &                0.955 \\
3.0      &             1.000 &          1.000 &          1.000 &          1.000 &  1.000 &                0.997 \\
\bottomrule
\end{tabular}

    }
    \caption{Coverage fractions at various levels of z-score standard
    deviation $\sigma$ for test extinctions $e_*$. Relative to theoretical
    coverage fractions, the inferences concentrate too many
    predictions near zero --- a sign that we are not confident enough. }
    \label{tab:synthetic-coverage}
\end{table}

\begin{figure}[t!]
\centering
\begin{subfigure}[b]{.4\textwidth}
\centering
\includegraphics[width=\textwidth]{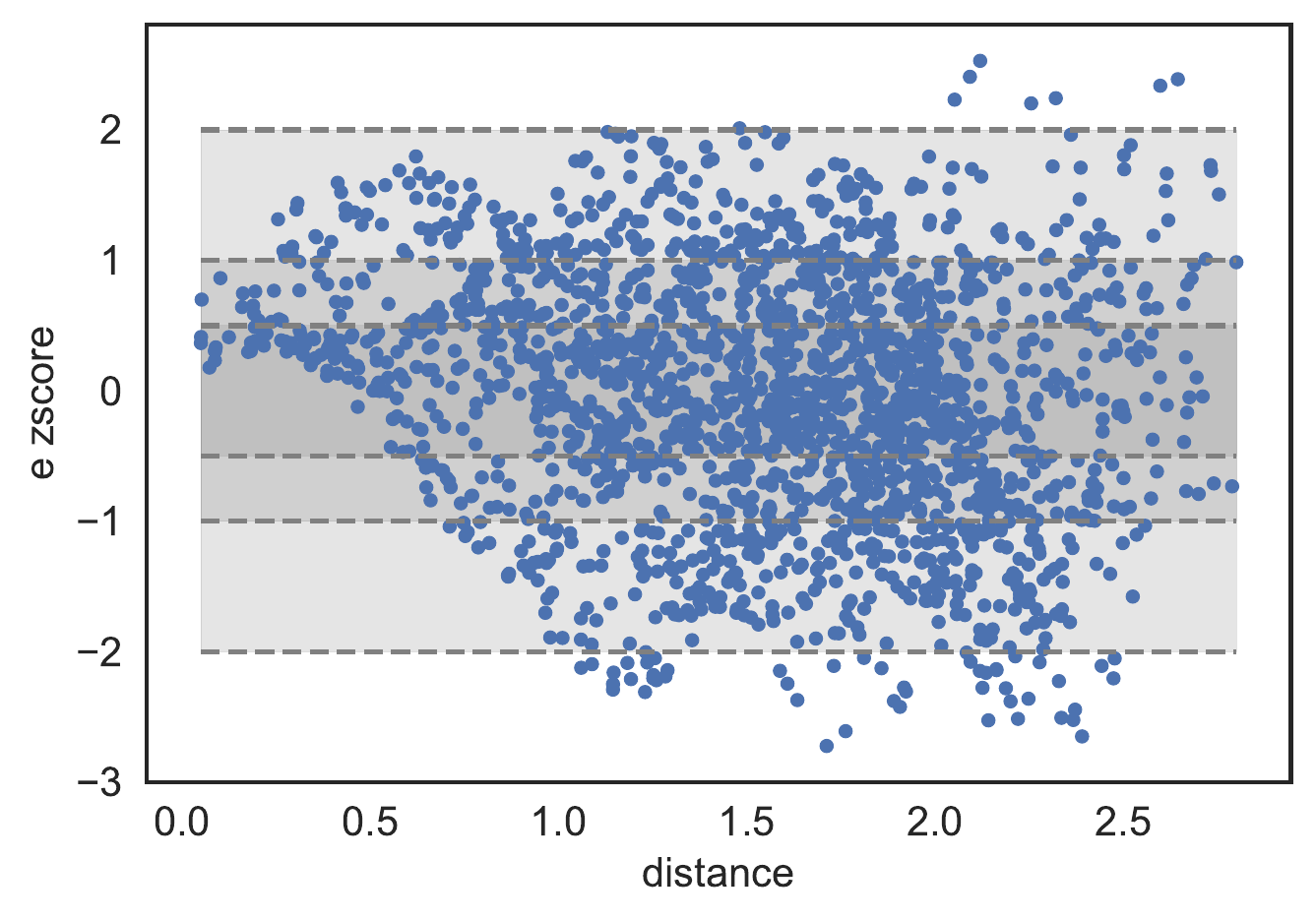}
\caption{Test $e$ z scores by distance}
\label{fig:e-zscore-by-distance}
\end{subfigure}
~
\begin{subfigure}[b]{.4\textwidth}
\centering
\includegraphics[width=\textwidth]{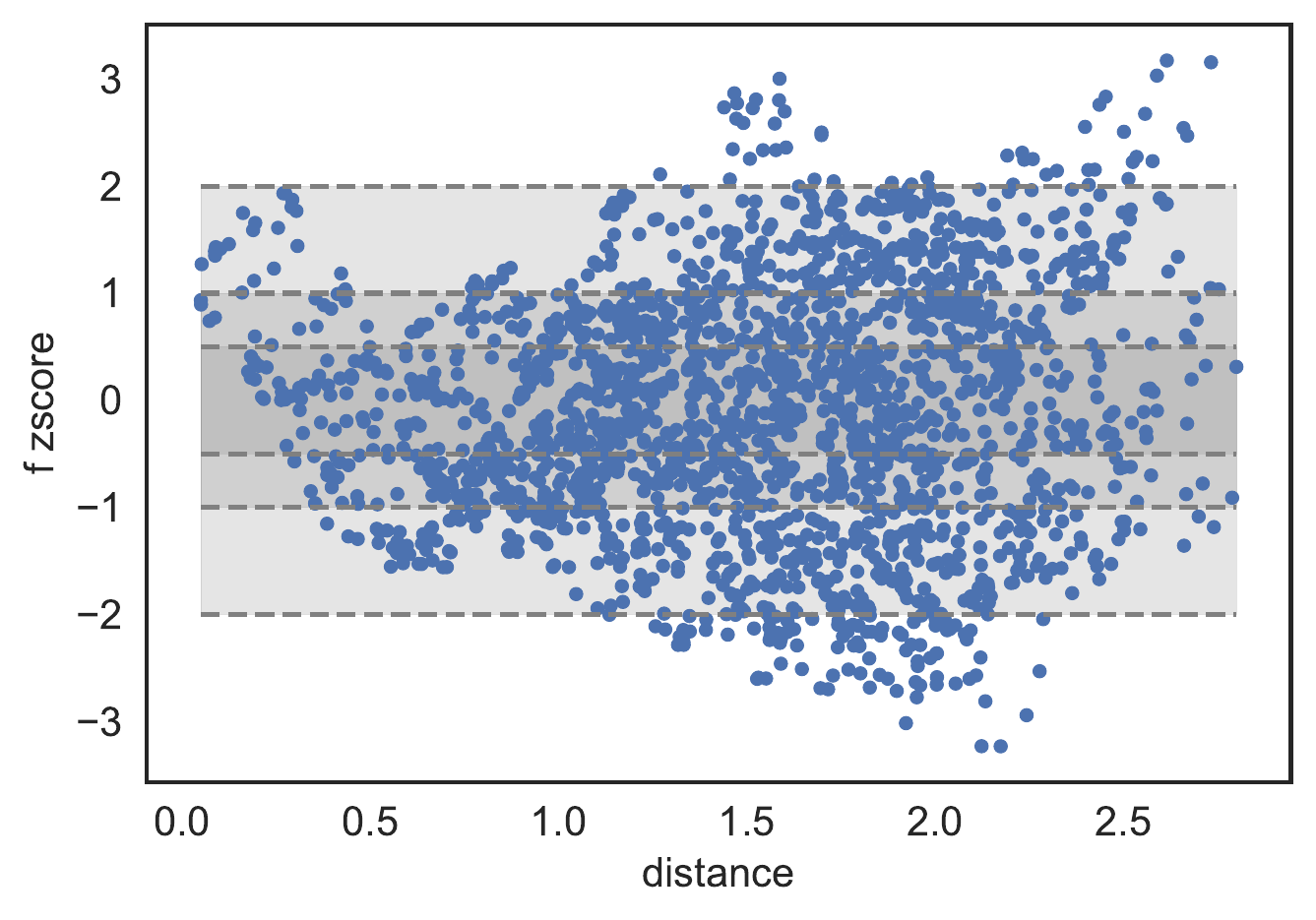}
\caption{Test $\rho(x)$ z scores by distance}
\label{fig:f-zscore-by-distance}
\end{subfigure}

\begin{subfigure}[b]{.4\textwidth}
\centering
\includegraphics[width=\textwidth]{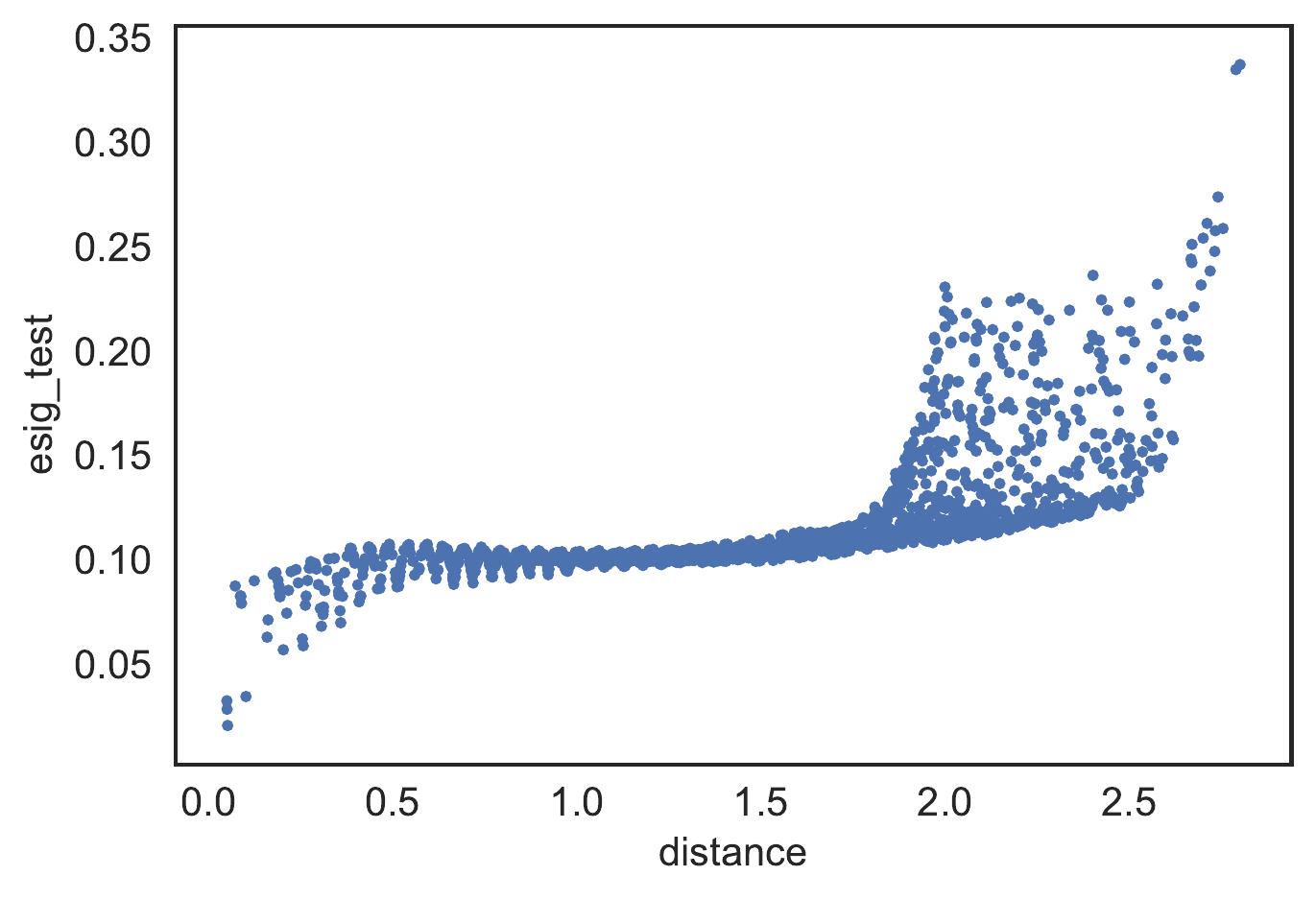}
\caption{Test $e$ posterior standard deviation by distance}
\label{fig:e-posterior-variance-by-distance}
\end{subfigure}
~
\begin{subfigure}[b]{.4\textwidth}
\centering
\includegraphics[width=\textwidth]{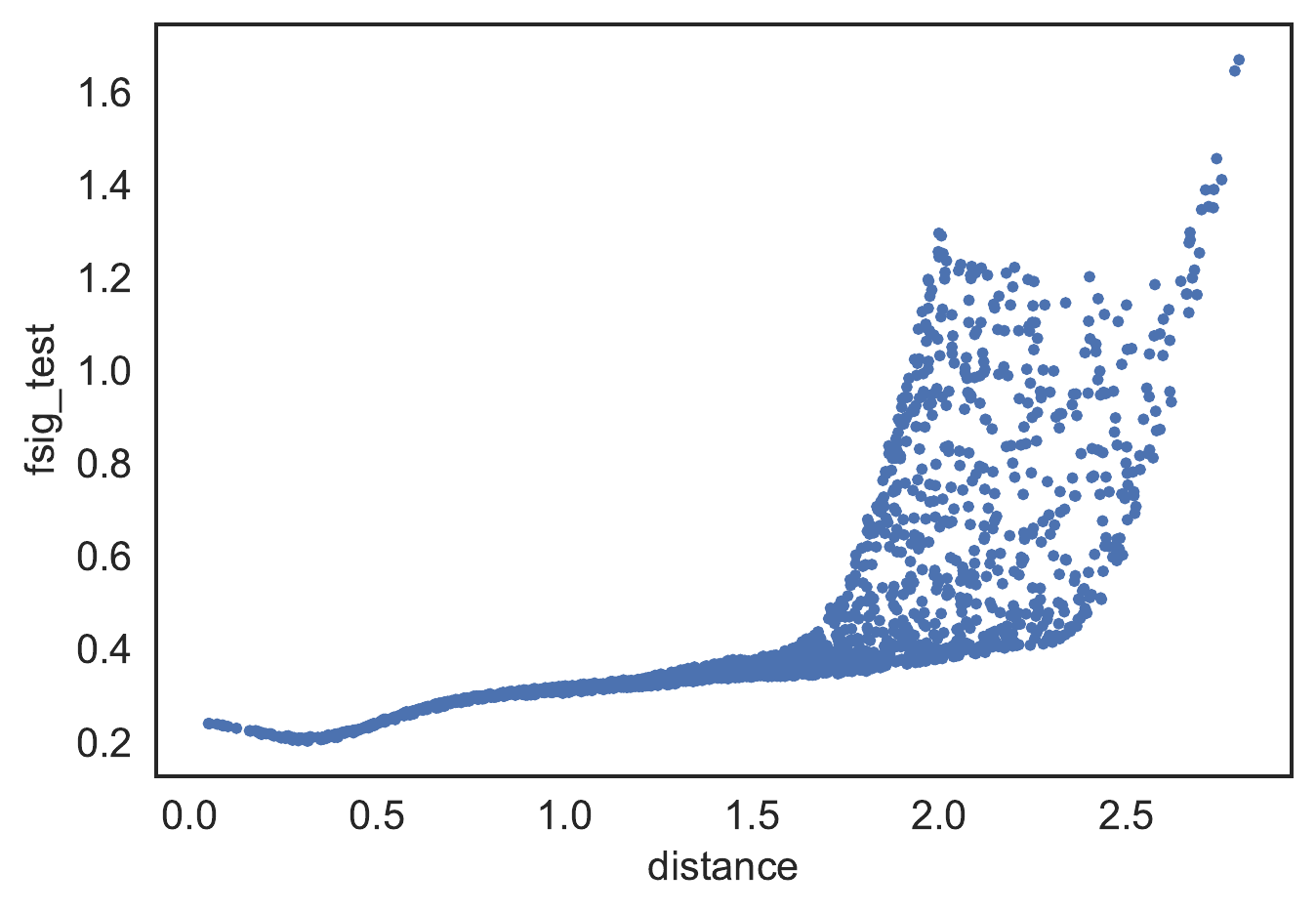}
\caption{Test $\rho(x)$ posterior standard deviation by distance}
\end{subfigure}

\begin{subfigure}[b]{.4\textwidth}
\centering
\includegraphics[width=\textwidth]{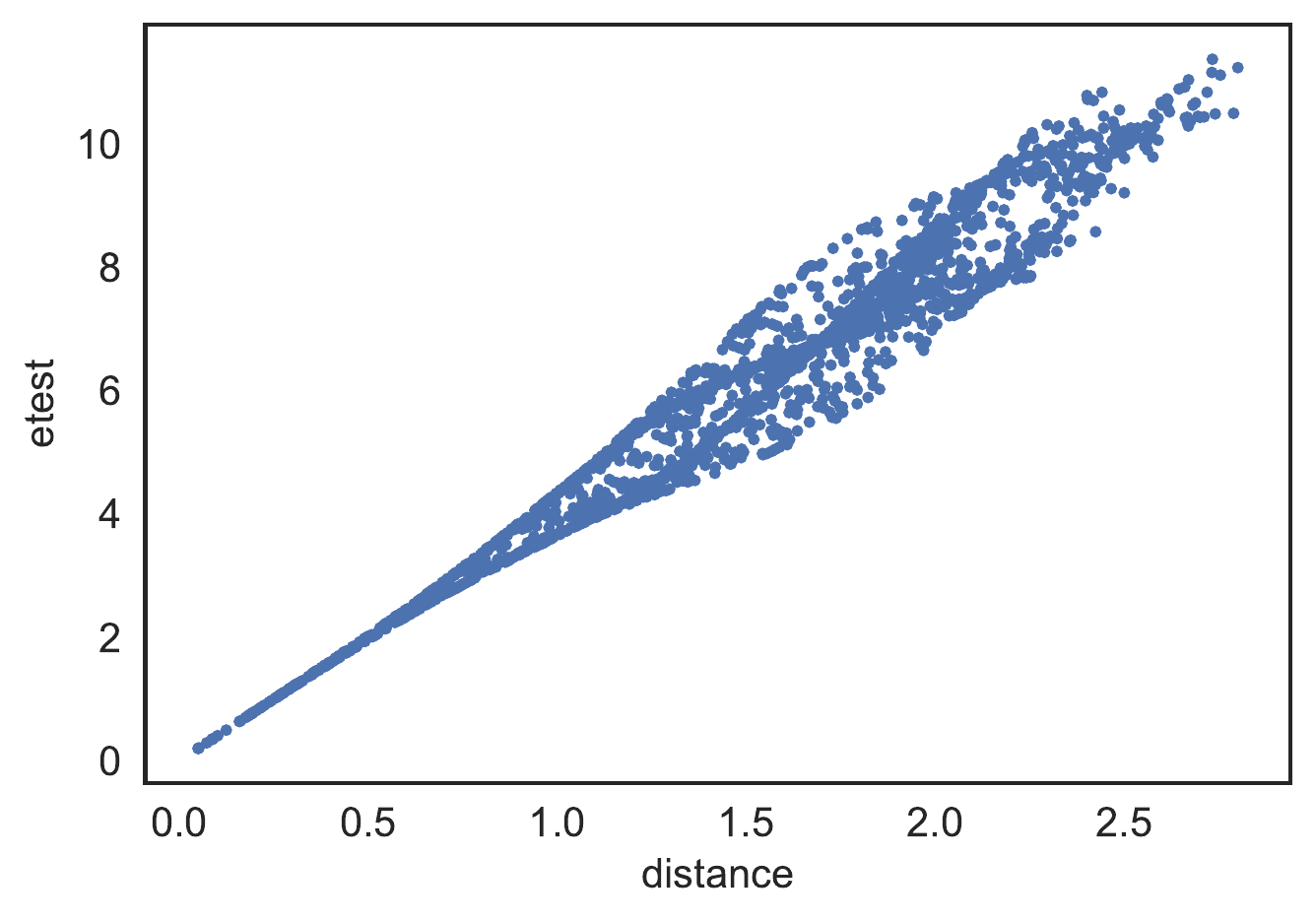}
\caption{Test $e$ values by distance}
\end{subfigure}
~
\begin{subfigure}[b]{.4\textwidth}
\centering
\includegraphics[width=\textwidth]{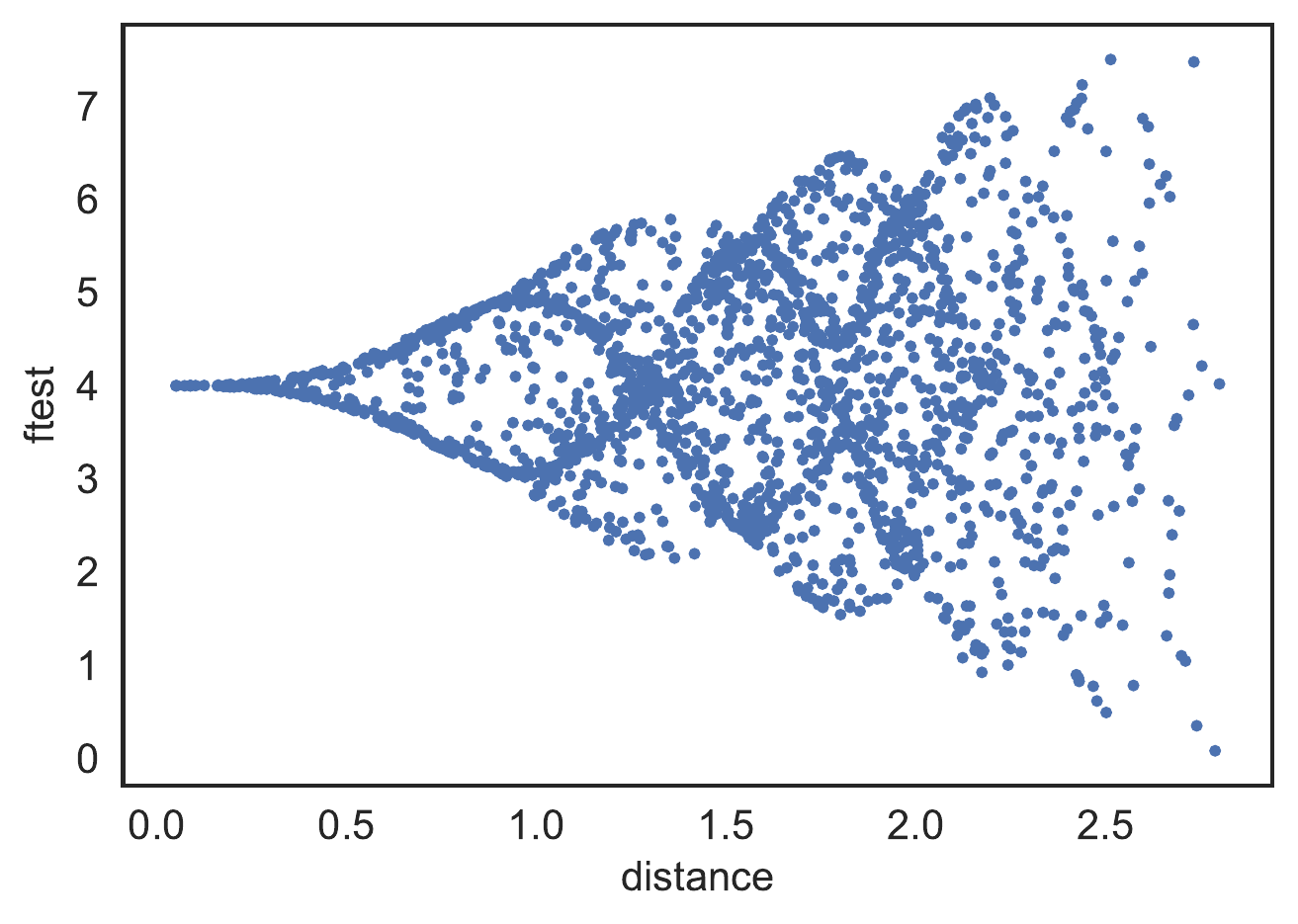}
\caption{Test $\rho(x)$ values by distance}
\end{subfigure}

\caption{Posterior summaries as a function of distance to the origin (observation point). }
\label{fig:summary-by-distance}
\end{figure}

\section{Additional Synthetic Experiments}
\label{sec:app-synthetic-experiments}

In this supplemental section, we examine the effect of hyper parameter
choices on algorithm performance.  Specifically, we study two choices:
(i) the whitened and standard parameterizations and (ii) the effect of
the Monte Carlo semi-integrated covariance estimator on stochastic
variational inference.

\subsection{Algorithm experiments}
\label{sec:algorithm-experiments}
The inference algorithm developed in Section~\ref{sec:scaling-inference}
involves two key choices, optimizing in the whitened space and the number
of Monte Carlo samples used to approximate the semi-integrated kernel.
To validate our choices, we investigate the impact of these choices on the
inference algorithm and predictions.
For simplicity, we assume the number and position of inducing points
is fixed.

\parhead{Whitened parameterization comparison}
\label{sec:whitening-comparison}
We compare the optimization performance of the two parameterizations
detailed above: (i) the \emph{standard} parameterization of variational
parameters $\bm, \bS$ and covariance function parameters $\btheta$; and
(ii) the \emph{whitened} parameterization of variational parameters
$\tilde{\bm}, \tilde{\bS}$ and covariance function parameters $\btheta$. 

We fit the above synthetic dataset with $N=5{,}000$ observations using
stochastic gradient optimization with batches of size $1{,}000$, for 10
epochs.  We fix the learning rate for the variational parameters, and we
vary the learning rate for the covariance function parameters to be $\Delta
\in$ [1e-4, 1e-5, 1e-6].  We also vary the initial value of $\sigma^2$ to
be in $\sigma^2_0 \in [1, 10, 100, 1{,}000]$. We fit 24 models, 12 standard
and 12 whitened. 

We compare the mean squared errors for the test $e$ and $\rho(x)$ values in
Figure~\ref{fig:whitened-comparison}, showing that the models fit in the
whitened space have lower mean squared errors.  Empirically, we also find
that optimizing in the whitened space leads to better predictions in fewer
epochs.

\begin{figure}[t!]
\centering
\begin{subfigure}[b]{.48\textwidth}
\centering
\includegraphics[width=\textwidth]{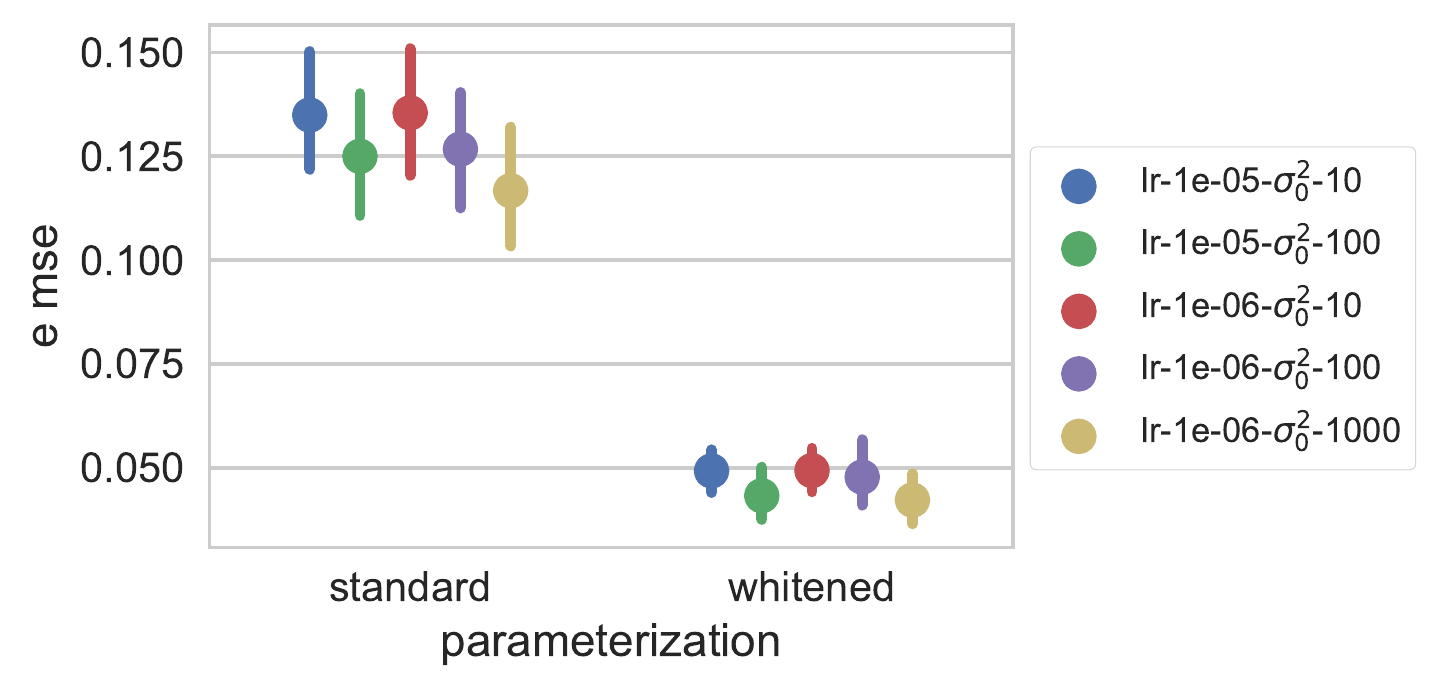}
\caption{Test $e$ MSEs}
\end{subfigure}
~
\begin{subfigure}[b]{.48\textwidth}
\centering
\includegraphics[width=\textwidth]{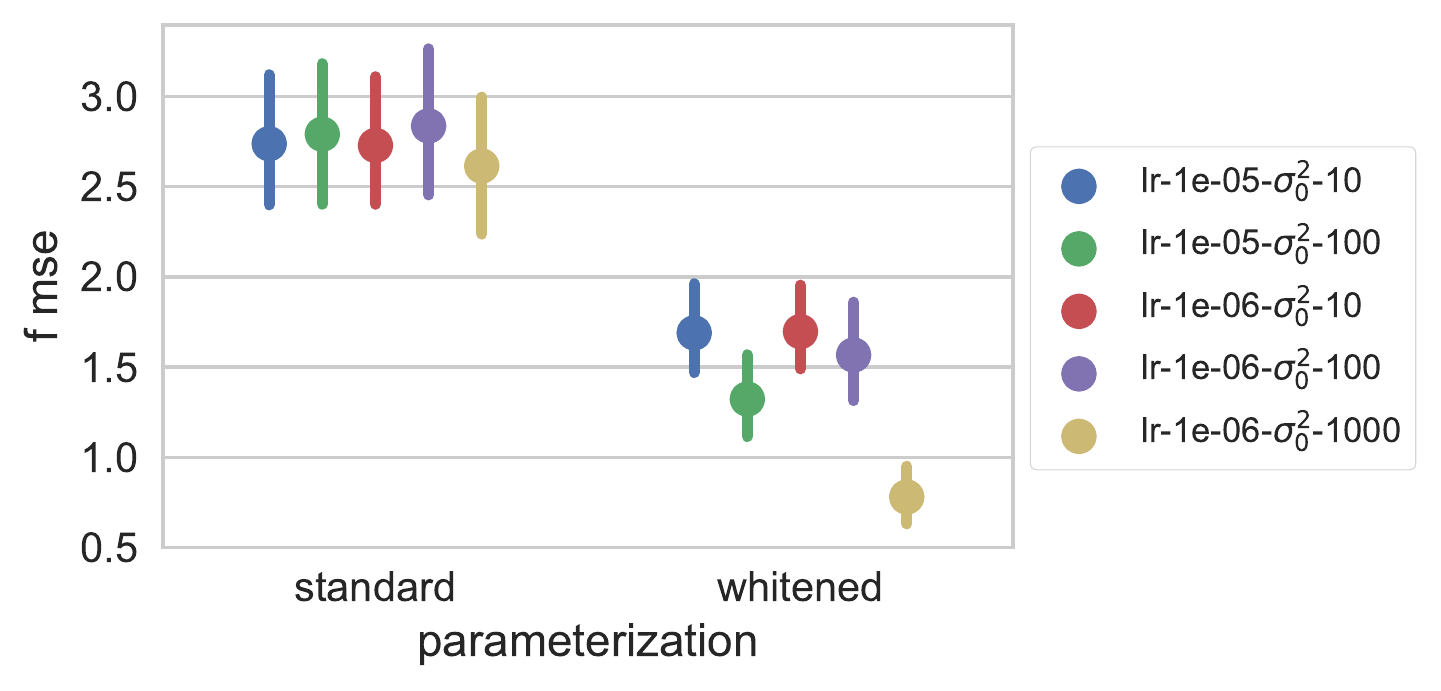}
\caption{Test $\rho(x)$ MSEs}
\end{subfigure}
\caption{Optimizing with the whitened parameterization leads to better
  predictions.  We ran stochastic optimization initialized at many different
  initial settings (the top five shown above) for ten epochs.  The resulting
  predictive mean squared error for integrated values $e$ (left) and
  pointwise values $\rho(x)$ (right) are plotted above.  Optimizing in the
  whitened space dramatically improves predictive accuracy. }
\label{fig:whitened-comparison}
\end{figure}

\parhead{Semi-Integrated Estimators}
Additionally, we examine the affect of the number of Monte Carlo samples used to
estimate the semi-integrated covariance function, $L$. We leverage the fact that
the squared exponential kernel admits a closed form semi-integrated covariance
function so we can directly compare Monte Carlo estimators with different sample
sizes $L$ to the analytic version Further, we use the analytic form of the
semi-integrated covariance function to compute the true variational objective
for all approaches; this true objective value is unavailable when the
semi-integrated covariance function is unavailable.
Figure~\ref{fig:optimization-comparison} compares optimization traces of the
variational objective using different values of $L$, ranging from 5 to 50.  We
see that the performance of the Monte Carlo gradients with $L=20$ closely mimics
the model fit with the exact gradient.  We use $L=30$ when optimizing with the
Gneiting and \matern~covariance functions to be conservative.

\end{document}